\documentclass[10pt]{article}
\usepackage[square,authoryear]{natbib}
\usepackage{graphicx}
\usepackage{marsden_article}
\usepackage{stackrel}
\usepackage{ulem}
\usepackage{stmaryrd}
\usepackage{mathrsfs}
\usepackage{latexsym,amsmath,amscd,amssymb,graphics,amsthm}
\usepackage{enumerate,framed}
\usepackage{framed}
\usepackage{here} 
\usepackage{subfigure}
\usepackage{amsmath}
\usepackage{color}
\usepackage{multirow}
\RequirePackage{color}

\DeclareGraphicsRule{.tif}{png}{.png}{`convert #1 `dirname #1`/`basename #1 .tif`.png}
\setcounter{footnote}{0}
\newtheorem{remark}[theorem]{Remark}
\DeclareFontFamily{OT1}{pzc}{}
\DeclareFontShape{OT1}{pzc}{m}{it}{<-> s * [1.100] pzcmi7t}{}
\DeclareMathAlphabet{\mathpzc}{OT1}{pzc}{m}{it}

\title{Resonance, symmetry, and bifurcation of periodic orbits in perturbed Rayleigh-B\'enard convection}
\begin{document}
\date{}
\author{\hspace{-1cm}
\begin{tabular}{c}
$^{\dag}$Masahito Watanabe and  Hiroaki Yoshimura\footnote{Email: $^{\dag}$masa.watanabe@aoni.waseda.jp; yoshimura@waseda.jp}\\[5mm]  
Department of Applied Mechanics and Aerospace Engineering \\[1.5mm] 
School of Fundamental Science and Engineering\\[1.5mm]
Waseda University \\ [4mm] 
Okubo, Shinjuku, Tokyo 169-8555, Japan\\ [5mm]  
\end{tabular}\\
}

\maketitle

\begin{center}
\abstract{\vspace{2mm}
This paper investigates the global structures of periodic orbits that appear in Rayleigh-B\'enard convection, which is modeled by a two-dimensional perturbed Hamiltonian model, by focusing upon resonance, symmetry and bifurcation of the periodic orbits. 
First, we show the global structures of periodic orbits in the extended phase space by numerically detecting the associated periodic points on the Poincar\'e section. Then, we illustrate how resonant periodic orbits appear and specifically clarify that there exist some symmetric properties of such resonant periodic orbits which are projected on the phase space; namely, the period $m$ and the winding number $n$ become odd when an $m$-periodic orbit  is symmetric with respect to the horizontal and vertical center lines of a cell. Furthermore, the global structures of bifurcations of periodic orbits are depicted when the amplitude $\varepsilon$ of the perturbation is varied, since in experiments the amplitude of the oscillation of the convection gradually increases when the Rayleigh number is raised.}
\end{center}

\tableofcontents

\section{Introduction}
\label{Sec:introduction}
\paragraph{Background.} 
In the fields of meteorology, oceanography, and chemical engineering, much concern has been focused on the prediction and control of the spread of oil and chemical spills as well as the measurement of air pollutant concentrations. In particular, the natural convection in a horizontal fluid layer with heated bottom and cooled top planes called Rayleigh-Benard convection has been well known as a typical phenomenon of such fluid transport that exists in nature (see \cite{Ch1961}) and it is crucial to study the global fluid transport associated with natural convection. So far, the fluid transport in perturbed Rayleigh-Benard convection has been actively investigated; when the temperature difference of the two planes is relatively small, or, Rayleigh number $Ra$ is relatively small, 
multiple convection rolls with steady velocity fields may appear in parallel in the layer. 
When the flow in the direction of the roll axes is negligible, it may be considered as a two-dimensional steady convection from that direction. On the other hand, it was clarified by \cite{ClBu1974} and \cite{BoBuCl1986}
that the parallel convection rolls may start to wave slightly by the even oscillatory instability 
when $Ra$ is set slightly above a critical number $ Ra_{t}$ by increasing the temperature difference. 
Since the wave propagates along the roll axes almost periodically, 
the two-dimensional velocity field observed from the direction of the roll axes is perturbed. 

One of the important remarks is that although the velocity field of such oscillatory convection seems to be stable in  Eulerian description, some fluid particles can be transported chaotically in Lagrangian description; see \cite{Ot1989}.
Furthermore, increasing $Ra$ by raising the temperature difference, 
the amplitude of the oscillation enlarges and the fluid transport become very complicated. 
Since very rich dynamics such as Lagrangian chaotic fluid transport can be observed in the perturbed Rayleigh-Benard convection, 
the fluid transport in this convection has been actively studied by numerical and experimental methods. 
Amongst such past researches on the study of chaotic fluid transport in the perturbed Rayleigh-B\'enard convection, \cite{SoGo1988b} has been well known as a pioneer work, where the diffusion of impurities in the convection was studied by optical absorption techniques, in which the convection was modeled as a two-dimensional perturbed Hamiltonian system following the experimental results and it was numerically clarified that the basic mechanism of fluid transport is chaotic advection around cell boundaries rather than molecular diffusion.
\cite{GoSo1989} also made some numerical analysis to show some evidence of chaotic transport in the perturbed Hamiltonian model in the sense of being sensitive to the initial condition. 
In addition, \cite{OuMoHoMo1991} numerically studied the diffusion constant of the model and \cite{OuMo1992} showed that some anomalous diffusion is caused by the accelerator-mode islands of KAM tori around cell boundaries.
Furthermore, \cite{InHi1998, InHi2000} investigated the mixing patterns of another perturbed Hamiltonian model with different perturbations by analyzing Poincar\'e maps and the degree of mixing, and showed how the chaotic structures vary when the amplitude or the frequency of the oscillation is changed. 
\cite{SoMe2003} explored experimentally and numerically the uniform mixing of weakly three-dimensional and 
weakly time-periodic vortex flow by using magnetohydrodynamic techniques.

From the viewpoint of dynamical systems theory, \cite{CaWi1991a,CaWi1991b} investigated 
the stable and unstable manifolds of the perturbed Hamiltonian model of Rayleigh-B\'enard convection to clarify 
the mechanism of chaotic transport by the so-called "lobe dynamics"; see also \cite{Wi1992}. On the other hand,
\cite{SoToWa1996, SoToWa1998} experimentally detected some lobes by observing the transport of impurities in a fluid layer with a chain of horizontal vortices that are oscillated by magnetohydrodynamic forcing.
In addition, \cite{MaMeWi1998} studied the patchiness of the model with stable and unstable manifolds, 
where a patch is a region that has a considerably different average velocity compared to the surrounding region. 
\cite{ShLeMa2005} and \cite{LeShMa2007} numerically clarified the Lagrangian coherent structures (LCSs) 
of the perturbed Hamiltonian model, where LCS corresponds to the stable and unstable manifolds of non-autonomous systems; see also \cite{HaYu2000}. 

As mentioned in the above, most of the past works have been focused on the chaotic region of the fluid transport 
in perturbed Rayleigh-B\'enard convection rather than exploring the stable region of periodic orbits, 
or some of them have locally detected some elliptic periodic points in the perturbed Hamiltonian model with some fixed parameters.
For the sake of understanding the global structures of such fluid transport, it is quite essential to 
find both elliptic and hyperbolic periodic points in the perturbed Hamiltonian model for some range 
of parameters and investigate how the periodic orbits appear and bifurcate in the Rayleigh-B\'enard convection. 
Especially, it is crucial to investigate the resonance and symmetry of periodic orbits, 
since they are one of the important topological properties of periodic orbits. 
Furthermore, needless to say, it is necessary to clarify how the transport becomes complicated when $Ra$ is increased. In other words, how the periodic transport varies to chaos when the amplitude of the perturbation is increased. Although the transition of Rayleigh-B\'enard convection from steady to oscillatory and chaotic flow, 
and the resonance of quasi-periodic Rayleigh-B\'enard convection are investigated in the Eulerian description in many studies 
such as \cite{GoBe1980}, \cite{LiFaLa1983}, and \cite{EcKe1988}, 
the resonance and symmetry of periodic orbits and the global structures of bifurcations from periodic to chaotic orbits 
have not been clarified enough in the Lagrangian description in the perturbed Rayleigh-B\'enard convection.


\paragraph{Contributions and the organization of this paper.}
The main goals of this paper are to clarify the global structures of periodic orbits, the symmetric properties of resonant orbits, as well as the global bifurcations of the periodic orbits appeared in the two-dimensional Hamiltonian model of the perturbed Rayleigh-B\'enard convection.  To do this, we first introduce an autonomous Hamiltonian model in the extended phase space from the two-dimensional Hamiltonian model of the perturbed Rayleigh-B\'enard convection. Then, we numerically detect the elliptic and hyperbolic periodic points on the Poincar\'e section and investigate the structures of the associated periodic orbits in the extended phase space of the autonomous system. In particular, we consider the projection of the periodic orbits onto the original phase space to investigate the resonances and symmetry of the orbits. 
Lastly, we show the global structures of $\varepsilon$-parameter bifurcations of the periodic orbits, where $\varepsilon$ denotes the amplitude of the perturbation.

The organization of this paper consists of the following sections: In \S\ref{Sec:Poincare}, 
the model of the two-dimensional perturbed Hamiltonian system for the oscillatory Rayleigh-B\'enard convection 
is described together with symmetric properties. Then, numerical analysis is made by the Poincar\'e map to 
detect the periodic points on the Poincar\'e section $\Sigma^{\theta_0}$ and also to clarify 
the structures of periodic orbits and KAM tori in the extended phase space. 
In \S\ref{Sec:symmetry}, symmetries of resonant orbits are demonstrated by projecting the $m$-periodic orbits 
onto the phase space and a theorem for the resonant orbits is given that the period $m$ and the winding number $n$ are odd numbers, 
if the projection is symmetric with respect to the horizontal and vertical center lines of a cell. 
In \S\ref{Sec:bifurcation}, the $\varepsilon$-parameter bifurcations of periodic points 
are illustrated and, in particular, the classification of the bifurcations is made into fold or flip bifurcations 
according to the multipliers of the periodic points.
Finally in \S\ref{Sec:conclusion}, the conclusions of this paper are described.


\section{Poincar\'e map and structures of periodic orbits}
\label{Sec:Poincare}
In order to investigate the two-dimensional Rayleigh-B\'enard convection whose velocity field is perturbed by the even oscillatory instability, we employ the two-dimensional perturbed Hamiltonian system, which was originally developed by \cite{SoGo1988b}, see also \cite{CaWi1991a}. 
Then, we investigate the global structures of such periodic orbits that appear in the perturbed Hamiltonian system by Poincar\'e maps. 

\subsection{Model of perturbed Rayleigh-B\'enard convecton}
\paragraph{Hamiltonian system of steady Rayleigh-B\'enard convection.}
By assuming the stress-free boundary condition, it follows from the Navier-Stokes equations with the Boussinesq approximation that
two-dimensional steady Rayleigh-B\'enard convection can be modeled by a Hamiltonian system as 
\begin{equation}\label{RBCModel}
\begin{split}
\frac{dx}{dt}&=-\frac{\partial H_0(x,z)}{\partial z}=-\frac{A \pi}{k}{\sin(kx)\cos(\pi z)},\\[3mm]
\frac{dz}{dt}&=\frac{\partial H_0(x,z)}{\partial x}=A{\cos(kx)\sin(\pi z)},
\end{split}
\end{equation}
where $H_0(x,z)$ is a Hamiltonian, given by the stream function
\begin{equation*}
H_0(x,z) =  \frac{A}{k}{\sin(kx) \sin(\pi z)};
\end{equation*}
see \cite{Ch1961}.
In the above,  $x \in \mathbb{R}$ and $z \in U=[0,1] \subset \mathbb{R}$ are the horizontal and vertical coordinates respectively, and hence we define the phase space $M=\mathbb{R} \times U$. Further, $A$ denotes the amplitude of the velocity in $z$ direction and $k$ is the wave number of the cell pattern in $x$ direction. In this Hamiltonian system, we have the hyperbolic equilibrium points $p_{i,0}^{\pm}=(x_{i,0},z_i^{\pm})$ as
\begin{equation*}
(x_{i,0},z_i^{\pm})=\left( \frac{i \pi}{k}, z_i^{\pm}\right), \quad (i=0,\pm1, \; \pm2, \;...),
\end{equation*}
where $z^-_i=0$ and $z^+_i=1$, and it is noticed that there exist heteroclinic connections between  $p_{i,0}^{+}$ and $p_{i,0}^{-}$ along the roll boundaries.

\paragraph{Hamiltonian model of perturbed Rayleigh-B\'enard convecton.}
Now we consider the case in which a time-periodic term $\varepsilon \cos(\omega t)$ is added to $x$ in the Hamiltonian $H_0(x,z)$ for the steady Rayleigh-B\'enard convection. Then, it follows that a time-dependent Hamiltonian on the extended phase space $ M \times \mathbb{R} $ is given in coordinates $(x,z,t) \in  M  \times\mathbb{R} $ as
$$
H(x,z,t):=H_0(x,z)+\varepsilon H_1(x,z,t), 
$$
where Taylor expansion is applied to the sinusoidal term as
$$
H_1(x, z,t)=A \cos(\omega t){\cos(kx)\sin(\pi z)}. 
$$
Note that $A$ denotes some given constant of the magnitude. 
Then, we get a non-autonomous Hamiltonian vector field 
$X_{H}: M \times  \mathbb{R} \to T M$, locally given by 
\begin{equation}\label{PerHamEq}
\begin{split}
\frac{dx}{dt}&=-\frac{\partial H(x, z, t)}{\partial z}=-\frac{\partial H_0(x, z)}{\partial z}+\varepsilon \frac{\partial H_1(x, z,t)}{\partial z},\\[3mm]
\frac{dz}{dt}&=\frac{\partial H(x, z, t)}{\partial x}=\frac{\partial H_0(x, z)}{\partial x}+\varepsilon \frac{\partial H_1(x, z,t)}{\partial x}.
\end{split}
\end{equation}
In the above, $\varepsilon \in \mathbb{R}$ is a given magnitude of the perturbation and the perturbed terms $\frac{\partial H_1(x, z,t)}{\partial z}$ and $\frac{\partial H_1(x, z,t)}{\partial x}$ are respectively given by the periodic function:
\begin{equation*}
\begin{split}
\frac{\partial H_1(x, z,t)}{\partial z}&=- A \pi \cos (\omega t) {\cos(kx)\cos(\pi z)},\\[2mm]
\frac{\partial H_1(x, z,t)}{\partial x}&=-Ak \cos(\omega t) {\sin(kx) \sin(\pi z)}.
\end{split}
\end{equation*}
Fig.\ref{fig:rbflow} illustrates a schematic figure of this model, 
where the wavy dashed lines indicate the perturbed cell boundaries.

\begin{figure}[htb]
\begin{center}
\includegraphics[scale=0.45]{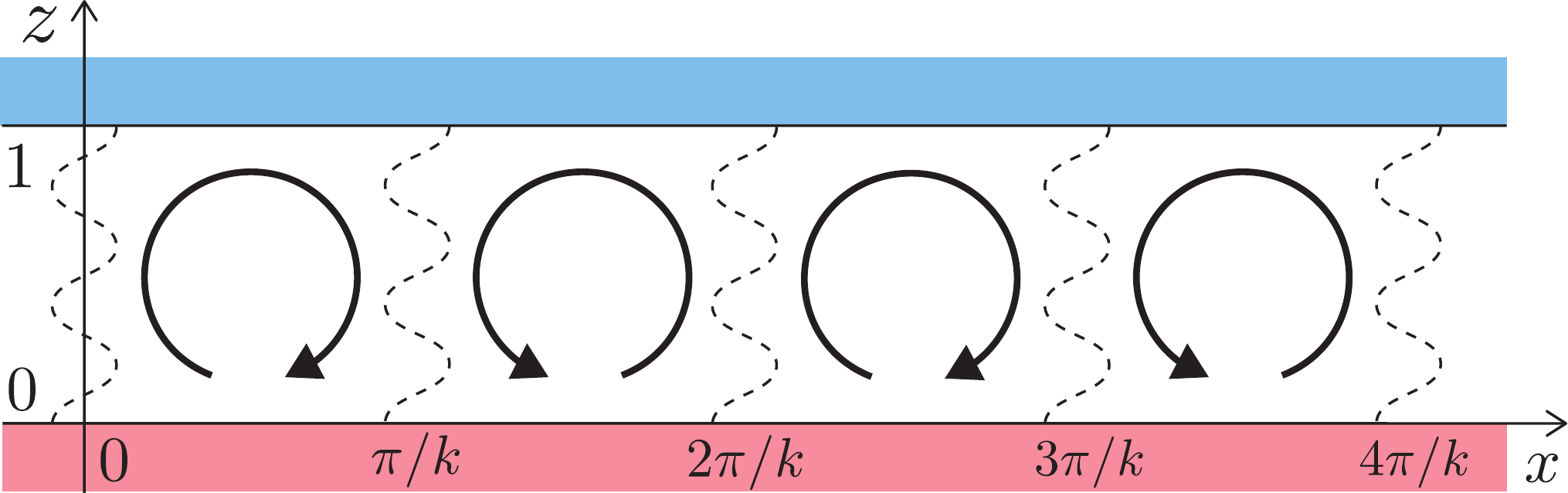}
\caption{Two-dimensional Rayleigh-B\'enard convection with perturbations}
\label{fig:rbflow}
\end{center}
\end{figure}

\paragraph{Symmetric properties of the model.}
Recall from \cite{CaWi1991a} that the perturbed Hamiltonian system \eqref{PerHamEq} is invariant under the following coordinate transformations:
\begin{equation*}
\begin{split}
&\textrm{i)}\qquad \displaystyle  x \mapsto x + \frac{2a\pi}{k},\quad z \mapsto -z+1, \quad t \mapsto -t+bT,\\[3mm]
&\textrm{ii)}\qquad \displaystyle  x \mapsto x + \frac{(2a+1)\pi}{k},\quad z \mapsto z, \quad t \mapsto -t+bT,\\[3mm]
&\textrm{iii)}\qquad \displaystyle  x \mapsto x + \frac{2a\pi}{k},\quad z \mapsto z, \quad t \mapsto t+bT,
\end{split}
\end{equation*}
where $T(=2\pi/\omega)$ is the period of the perturbation and $a,b \in \mathbb{Z}$.
Note that there exists two more symmetries associated with the following transformation:
\begin{eqnarray*}
\begin{split}
&\textrm{iv)}\qquad x \mapsto -x + \frac{(2a+1)\pi}{k},\quad z \mapsto z,\quad t \mapsto -t + \biggl(b+\frac{1}{2} \biggr)T,\\[3mm]
&\textrm{v)}\qquad x \mapsto -x + \frac{(2a+1)\pi}{k},\quad z \mapsto -z + 1,\quad t \mapsto t + \biggl(b+\frac{1}{2} \biggr)T,
\end{split}
\end{eqnarray*}
which will be used for investigating symmetric properties of periodic orbits in \S\ref{Sec:symmetry}.

\subsection{Structures of periodic points}
%
%
In this subsection, we numerically compute Poincar\'e maps to detect periodic points on a Poincar\'e section. To do this, we transform the perturbed Hamiltonian system that is a non-autonomous system on $M=\mathbb{R} \times U$ with local coordinates $(x,z)$ into the setting of an autonomous system by introducing the extended phase space $\mathcal{M}=M \times S^1$ with  local coordinates $(x,z,\theta)$ and then define a Poincar\'e map $P^{\theta_{0}}_{\varepsilon}: \Sigma^{\theta_{0}} \to \Sigma^{\theta_{0}}$, where $\Sigma^{\theta_{0}}\subset \mathcal{M}$ is a chosen Poincar\'e section. 

\paragraph{Autonomous Hamiltonian systems.}
By introducing an angle variable $\theta:=\omega t+\theta_0 \in S^1$, where $\theta_0 \in [0,2\pi)$, the Hamiltonian can be rewritten on the extended phase space $\mathcal{M} = M \times S^1$ as
$$
H(x,z,\theta):=H_0(x,z)+\varepsilon H_1(x,z,\theta), 
$$
where 
$$
H_1(x, z,\theta)=A \cos(\theta-\theta_0){\cos(kx)\sin(\pi z)}.
$$
Then, the vector field for the non-autonomous Hamiltonian system given in \eqref{PerHamEq}  can be transformed into the form of the vector field $X_{H}: \mathcal{M} \to T\mathcal{M}$ of the autonomous system on the extended phase space $\mathcal{M}$, which can be described by using local coordinates $(x,z,\theta)$: 
\begin{equation}\label{ExtHamEq}
\begin{split}
\frac{dx}{dt}&=-\frac{\partial H(x, z,\theta)}{\partial z}=-\frac{\partial H_0(x, z)}{\partial z}+\varepsilon \frac{\partial H_1(x, z,\theta)}{\partial z},\\[3mm]
\frac{dz}{dt}&=\frac{\partial H(x, z,\theta)}{\partial x}=\frac{\partial H_0(x, z)}{\partial x}+\varepsilon \frac{\partial H_1(x, z,\theta)}{\partial x},\\[3mm]
\frac{d\theta}{dt}&=\omega,
\end{split}
\end{equation}
and the perturbed terms are given as
\begin{equation*}
\begin{split}
\frac{\partial H_1(x, z,\theta)}{\partial z}&=- A \pi \cos (\theta-\theta_0) {\cos(kx)\cos(\pi z)},\\[2mm]
\frac{\partial H_1(x, z,\theta)}{\partial x}&=-Ak \cos(\theta-\theta_0) {\sin(kx) \sin(\pi z)}.
\end{split}
\end{equation*}

\paragraph{Poincar\'e map of the model.}
Associated with the autonomous Hamiltonian system described in \eqref{ExtHamEq}, let $\phi^\varepsilon:  \mathbb{R} \times \mathcal{M}  \to \mathcal{M}; ~ (t,x,z,\theta) \mapsto \phi^\varepsilon (x,z,\theta)$ be the flow, where $t \in \mathbb{R}$ indicates a time interval.  Hence, for some fixed $t$ and given $\varepsilon$, we define the diffeomorphism on the extended phase space as
 \begin{equation}\label{Flow}
\phi^\varepsilon_{t}: \mathcal{M} \to \mathcal{M};\; (x,z,\theta) ~\mapsto~  \phi^\varepsilon_{t}(x,z,\theta).
 \end{equation}
Let $(x(t),z(t),\theta(t))$ be an integral curve of the Hamiltonian system in \eqref{ExtHamEq}. For some fixed $\theta_0$, the angle variable $\theta(t)$ may be written as a periodic function with period $T=2\pi/\omega$ such that $\theta(t)=\theta_{0}+\omega t=\theta_{0}+2\pi t/T$. 
For each discrete time $t=kT,\; k \in \mathbb{Z}$, we can identify $\theta$ with $\theta_0+2\pi k$  and the equivalent class $[\theta]$ of $S^1$ is given by $[\theta]:=\{\theta \in S^1 \mid \theta= \theta_0+2\pi k\}$. 
Choose a representative $\theta_0$ for the equivalent class to define a Poincar\'e section $\Sigma^{\theta_{0}} $ by setting
\begin{equation}\label{PoincareSection}
\Sigma^{\theta_{0}}:=\left\{ (x,z, \theta_0) \in \mathcal{M}/S^1 \mid (x, z) \in M, \;\;\theta_0 \in [\theta]  \right\}.
\end{equation}
Then, for some fixed parameter $\varepsilon \in \mathbb{R}$, we define a Poincar\'e map $P^{\theta_{0}}_{\varepsilon}$ on $\Sigma^{\theta_{0}}$ by
\begin{equation}\label{PoincareMap}
P^{\theta_{0}}_{\varepsilon}:= \phi^\varepsilon_T \Bigr\rvert_{\Sigma^{\theta_{0}}} : \Sigma^{\theta_{0}} \to \Sigma^{\theta_{0}},
\end{equation}
which is locally given by
%
\begin{eqnarray*}
(x(kT), z(kT), \theta(kT)=\theta_{0}+2\pi k \equiv \theta_0) \hspace{60mm}\\
 \hspace{30mm} \mapsto (x ((k+1)T), z ((k+1)T), \theta((k+1)T)=\theta_{0}+2\pi(k+1) \equiv \theta_0).
\end{eqnarray*}
Note that  one special choice for $\theta_0$ may be $\theta_0=0$ and then the Poincar\'e section $\Sigma^{\theta_{0}}$ is locally isomorphic to $M \cong \mathcal{M}/S^1$. Hence, we note that a point on $\Sigma^{\theta_{0}}$ is mapped by $P^{\theta_{0}}_{\varepsilon}: \Sigma^{\theta_{0}} \to \Sigma^{\theta_{0}}$ to another point on $\Sigma^{\theta_{0}}$
during the period $T$.

\paragraph{Periodic points.}
A fixed point of the Poincar\'e map corresponds to a {\it periodic orbit} with period $T$ for the flow, and an $m$-{\it periodic point}, which corresponds to the periodic orbit with period $mT$ ($m \in \mathbb{Z}^{+}$), namely the {\it $m$-periodic orbit}, 
is the fixed point $\mathbf{x}_0 \in \Sigma^{\theta_{0}}$ such that
$$
(P^{\theta_{0}}_{\varepsilon})^{m} ( \mathbf{x}_0 ) = \mathbf{x}_0 \;\;\textrm{for}\;\; m \ge1,\;\; \textrm{while}\;\;  (P^{\theta_{0}}_{\varepsilon})^{\ell}(\mathbf{x}_0) \ne \mathbf{x}_0 \;\; \textrm{for}\;\; 1\le \ell \le m-1, ~ m \ge 2,
$$
where
$$
(P^{\theta_{0}}_{\varepsilon})^{m}=\underbrace{(P^{\theta_{0}}_{\varepsilon})\circ \cdots \circ (P^{\theta_{0}}_{\varepsilon})}_{m}.
$$

Since the Poincar\'e section $\Sigma^{\theta_0}$ is two-dimensional, it is apparent that
the Jacobian matrix of the Poincar\'e $m$-return map
\begin{equation*}
J_\varepsilon (\mathbf{x}) := \frac{\partial (P^{\theta_{0}}_{\varepsilon})^{m} (\mathbf{x})}{\partial \mathbf{x}} 
\end{equation*}
have two eigenvalues. Especially, the eigenvalues of the Jacobian matrix evaluated at an periodic point 
is called the \textit{multipliers}. Let $\mu_1$ and $\mu_2$ be the two multipliers of an $m$-periodic point 
$\mathbf{x_0} \in \Sigma^{\theta_0}$, where $|\mu_1| \leq |\mu_2|$. Since $| J_\varepsilon(\mathbf{x}_0) | =1$,
the multipliers have the product $\mu_1\mu_2 = 1$. The $m$-periodic points are classified according to the conditions of the associated multipliers as follows (see \cite{GuHo1983}): 
\begin{itemize}
\item hyperbolic: $|\mu_1|<1<|\mu_2|$
\item elliptic: $|\mu_i|=1$ but $\mu_i \neq \pm1 ~(i=1,2)$
\item parabolic: $\mu_i=\pm1 ~(i=1,2)$
\end{itemize}
The periodic orbits are stable when the associated periodic points are elliptic, 
while they are unstable when the associated ones are hyperbolic.

%
%
%
%
%
%
%
%

\paragraph{Numerical algorithm for detecting periodic points.}
Now we compute the image of the Poincar\'e map $P^{\theta_{0}}_{\varepsilon}: \Sigma^{\theta_{0}} \to \Sigma^{\theta_{0}}$ 
in order to detect periodic points, each of which corresponds to a periodic orbit in $\mathcal{M}$ through itself. 
First, we describe our numerical algorithm for detecting $m$-periodic points for some {\it fixed} amplitude $\varepsilon$
of the perturbation. Define a map $F_{\varepsilon}:\Sigma^{\theta_{0}} \to \mathbb{R}^2$ as
\begin{eqnarray*}
F_{\varepsilon}({\bf x}):={\bf x} - (P_\varepsilon^{\theta_0})^m({\bf x}),\;\;\textrm{for ${\bf x}=(x,z) \in \Sigma^{\theta_0}$}.
\end{eqnarray*}
For detecting $m$-periodic points, we shall numerically compute the kernel of the map $F_{\varepsilon}$ to find a solution ${\bf x}$ for $F_{\varepsilon}({\bf x})={\bf 0}$, where we employ Newton's method as follows:
\medskip

\begin{framed}\paragraph{\textsf{Numerical algorithm for detecting an $m$-perodic point:}\vspace{2mm}}
\begin{itemize}
\item[(1)]
Set $k=0$ with an initial approximation  ${\bf x}^{(0)}$ for the required $m$-periodic point.
\item[(2)] Set $k:=k+1$ and compute the $k$-th approximation ${\bf x}^{(k)}$ by Newton's method as
\begin{equation*}
\begin{split}
{\bf x}^{(k)} &:= {\bf x}^{(k-1)} - \left( \frac{\partial F_{\varepsilon}({\bf x})}{\partial \mathbf{x}} \Biggr|_{\mathbf{x}=\mathbf{x}^{(k-1)}} \right) ^{-1} F_{\varepsilon}({\bf x}^{(k-1)}),
\end{split}
\end{equation*}
where
\begin{equation*}
\frac{\partial F_{\varepsilon}({\bf x})}{\partial \mathbf{x}} \Biggr|_{\mathbf{x}=\mathbf{x}^{(k-1)}} = \mathbf{I} - J_\varepsilon (\mathbf{x}^{(k-1)}).
\end{equation*}
Here, $\mathbf{I}$ is the unit matrix and the Jacobian matrix
\begin{equation*}
J_\varepsilon (\mathbf{x}^{(k-1)}) = \frac{\partial (P^{\theta_{0}}_{\varepsilon})^{m} (\mathbf{x})}{\partial \mathbf{x}} \Biggr|_{\mathbf{x}=\mathbf{x}^{(k-1)}}
\end{equation*}
is numerically obtained by using the central difference scheme. 
\item[(3)]
If $|F_{\varepsilon}({\bf x}^{(k)})| < \delta$, where the convergence radius is set to $\delta=10^{-10}$, then 
the computation ends up and the $m$-periodic point is to be detected as ${\bf x}={\bf x}^{(k)}$. 
\item[(4)] Otherwise, return to (2) in order to iterate the computation until convergence.
\end{itemize} 
\end{framed}


\begin{remark}\rm
Since the approximation value of the periodic points are unknown, 
we cover the Poincar\'e section with a small grid spacing and set each grid point as the initial condition ${\bf x}^{(0)}$. In our computation the grid spacing is set to $0.005$. The Poincar\'e maps are computed with 7th-order Runge-Kutta method 
with double precision floating point, which are the same through this study. 
\end{remark}



\paragraph{Periodic points at $\varepsilon=0.1$.}
Let us consider to detect the periodic points for the case $\varepsilon=0.1$. For numerical computations, throughout the paper, we fix other parameters of the convection to $A=\pi,k=\pi$ and $T=1/\pi$.
Now we illustrate in Fig.\ref{fig:pm_prd} the image of the Poincar\'e section by the Poincar\'e map 
and the detected periodic points in a cell which range from $x=0$ to $x=1(=\pi/k)$, 
where the elliptic and hyperbolic periodic points with period $m \leq 15$ are depicted. 
The color and the shape of the plots denote the period $m$ and the symbols of plots, i.e., $\bullet$ and $\star$, 
indicate elliptic and hyperbolic respectively. 
The number of recurrences due to the Poincar\'e map is set to $N=1000$ and the initial condition for $\theta$ is $\theta_0=0$. 
Note that there is no loss of generality to investigate only one single cell, 
since there is a topological isomorphism among cells. 
The left figure in Fig.\ref{fig:pm_prd} shows an enlarged view of the squared section in the right figure of Fig.\ref{fig:pm_prd}.
In conjunction with symmetry, it is observed that the periodic points appear symmetrically with respect to $z=1/2$, 
which is consistent with the symmetric property i) of the non-autonomous system in \eqref{PerHamEq}.

\paragraph{The periodic points and KAM curves.}
It is apparent from the Poincar\'e map that there exists one large island in the middle of the cell 
which we denote by label $I_1$, while there are three small islands surrounding the main island $I_1$, 
each of which is respectively denoted by labels $I_2$, $I_3$, and $I_4$ as in Fig.\ref{fig:pm_prd}. 
As is well known, inside the islands, there exist quasi-periodic points, 
while outside the islands there is a chaotic sea where points correspond to chaotic orbits. 
We can see that the elliptic and hyperbolic periodic points inside the islands appear alternately along the KAM curves in the perturbed Hamiltonian systems as is well known; see \cite{GuHo1983} and \cite{DoOt1988}.

In particular, it is observed in Fig.\ref{fig:pm_prd} that the elliptic periodic points appear at the center of islands, which is surrounded by KAM curves. For example, the elliptic 3-periodic points exist at the center of islands $I_2$, $I_3$, and $I_4$ in Fig.\ref{fig:pm_prd}, and the elliptic 5, 7, 8, and 13-periodic points appear at the center of the small islands in $I_1$.  
The relation between the elliptic periodic points and the islands will be discussed in detail in \S\ref{subsec:kam}.
In contrast, it is observed that the hyperbolic periodic points appear in the chaotic regions. 
This is because the stable and unstable manifolds associated with the hyperbolic periodic points form complicated homoclinic tangles around them and the points in the neighborhood are to be transported chaotically. Further, we note that some of the elliptic and hyperbolic periodic points do not appear as mentioned above, since not all of the islands and chaotic regions can be numerically detected in Fig.\ref{fig:pm_prd}. Especially, the chaotic regions between KAM curves in the islands cannot be observed in details.

\begin{figure}[H]
\begin{center}
\includegraphics[scale=0.3]{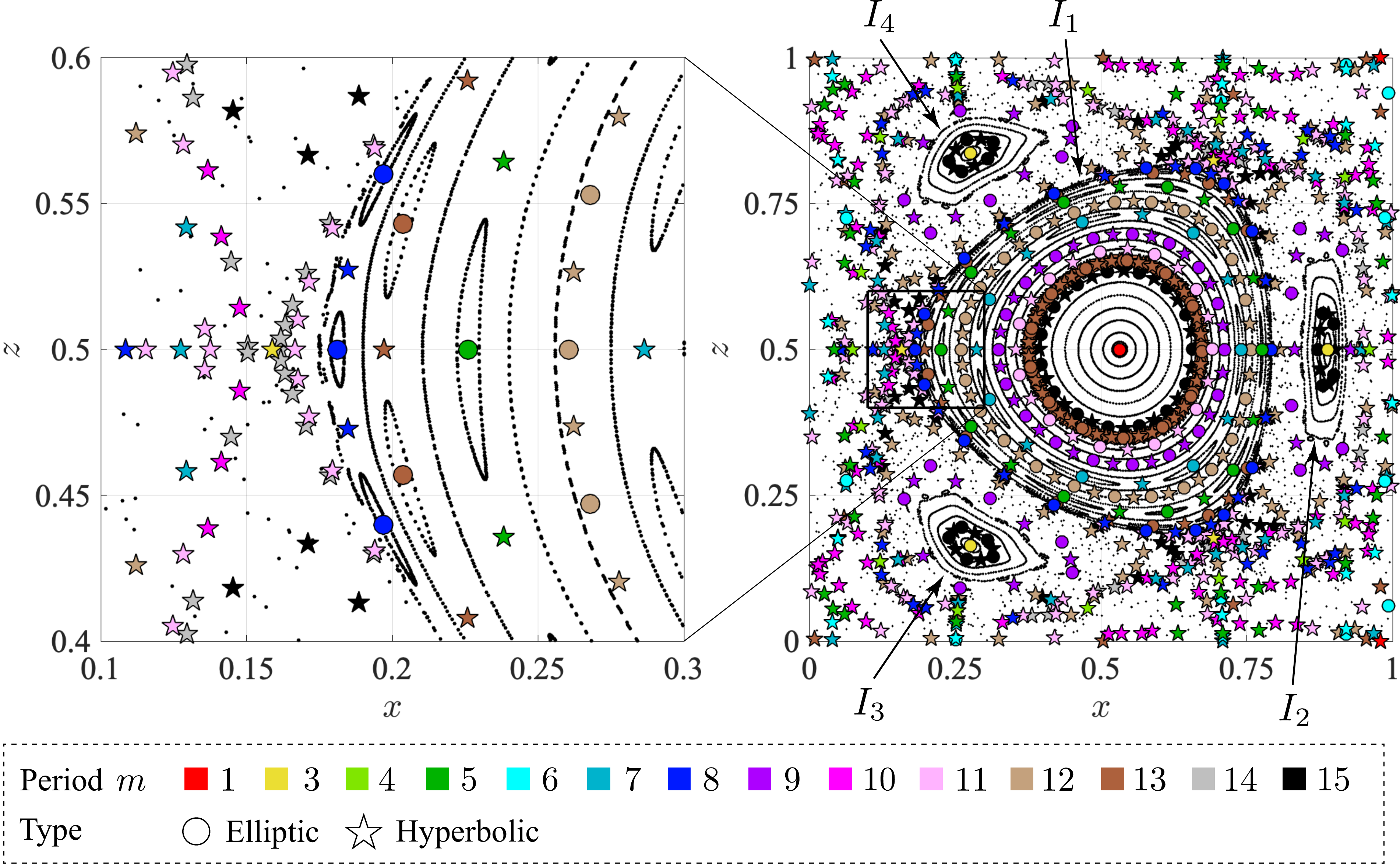}
\caption{Structure of elliptic and hyperbolic periodic points ($\varepsilon=0.1$)}
\label{fig:pm_prd}
\end{center}
\end{figure}

\subsection{Structures of periodic orbits and KAM tori.}\label{subsec:kam}
As we have shown in Fig.\ref{fig:pm_prd}, the elliptic periodic points appear at the center of the islands of KAM tori. 
In this subsection, we investigate the structures of periodic orbits and KAM tori
in the extended phase space $\mathcal{M}=M \times S^1$, which are associated with elliptic periodic points.
Here, we especially focus on those associated with the elliptic 3-periodic points at the center of islands 
$I_2$, $I_3$, and $I_4$ in Fig.\ref{fig:pm_prd}.

\paragraph{Twisted structures of periodic orbits and KAM tori.}
Fig.\ref{fig:prd_point_MP3} illustrates the elliptic 3-periodic points at the center of islands $I_2$, $I_3$, and $I_4$
on the Poincar\'e section $\Sigma^{\theta_0}$. 
Their 3-periodic orbit and the associated KAM torus in the extended phase space $\mathcal{M}$ 
are shown in Fig.\ref{fig:torus} in yellow and blue respectively. 
The Poincar\'e section $\Sigma^{\theta_0}$ given in \eqref{PoincareSection} is depicted in gray, 
where it is restricted to $U \times U \subset M$ and where  we choose $\theta_0=0$ for $[\theta]=\theta_0 +2\pi k$. 
The intersection of the KAM torus and the Poincar\'e section $\Sigma^{\theta_0}$ corresponds to the 
KAM curve of the island, and those of the periodic orbit 
$\widetilde c \in \mathcal{M}$ and $\Sigma^{\theta_0}$ corresponds to the elliptic 3-periodic points.
It is apparent that the periodic orbit and the associated KAM tori for the 3-periodic points are connected 
with each other and thus they globally have a twisted structure. 
Generally, this implies that KAM tori for elliptic periodic points whose period is more than two have twisted structures in the extended phase space $\mathcal{M}$ and also that the orbit of the elliptic periodic points goes through the center of it. 
Note that such KAM torus do not appear around the orbits of hyperbolic periodic points. 


\begin{figure}[H]
\begin{center}
\subfigure[Elliptic 3-peirodic points on $\Sigma^{\theta_0}$]
{\includegraphics[scale=0.25]{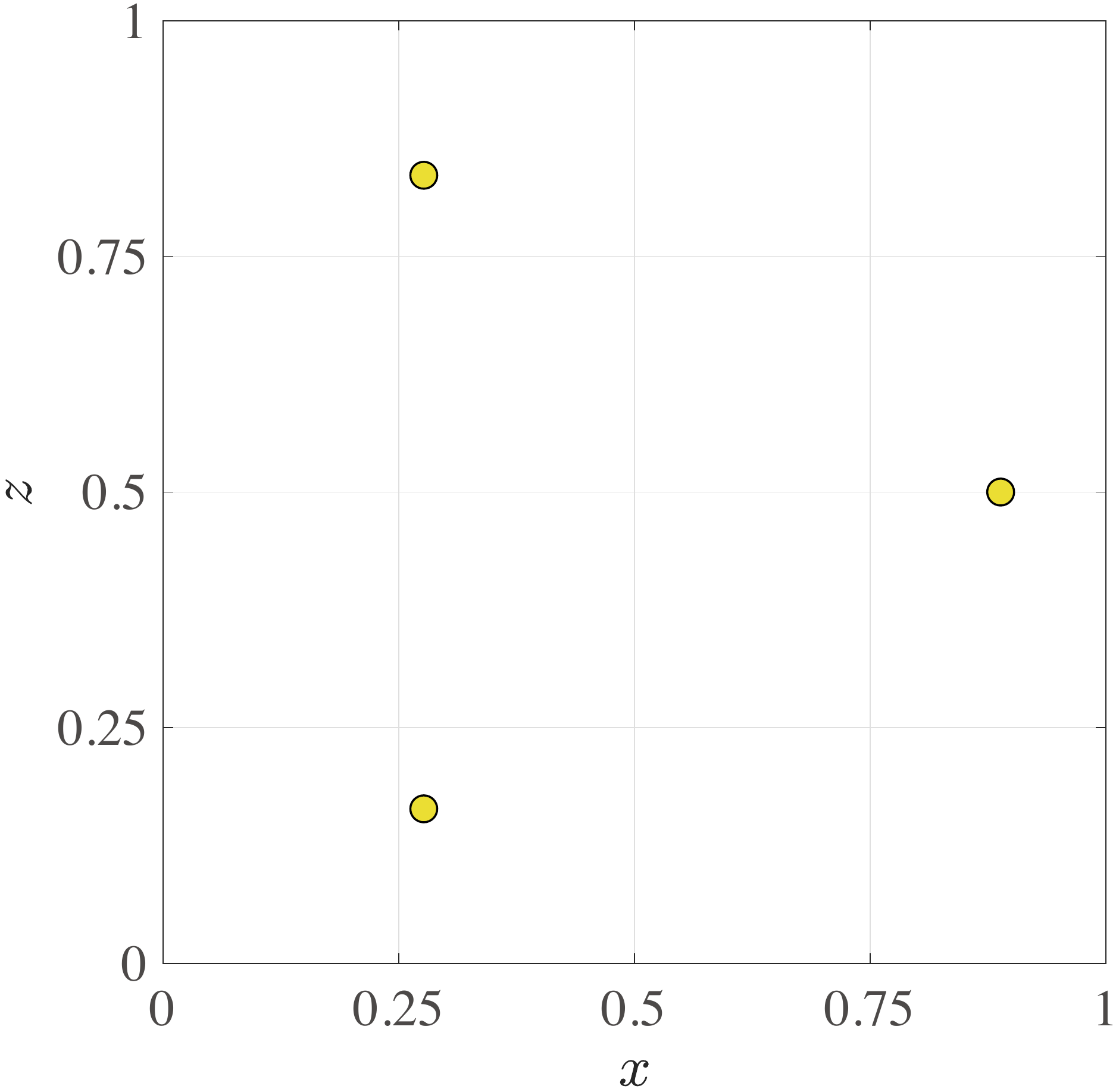}
\label{fig:prd_point_MP3}}
\hspace{13mm}
\subfigure[3-periodic orbit and KAM torus in $\mathcal{M}$]
{\includegraphics[scale=0.26]{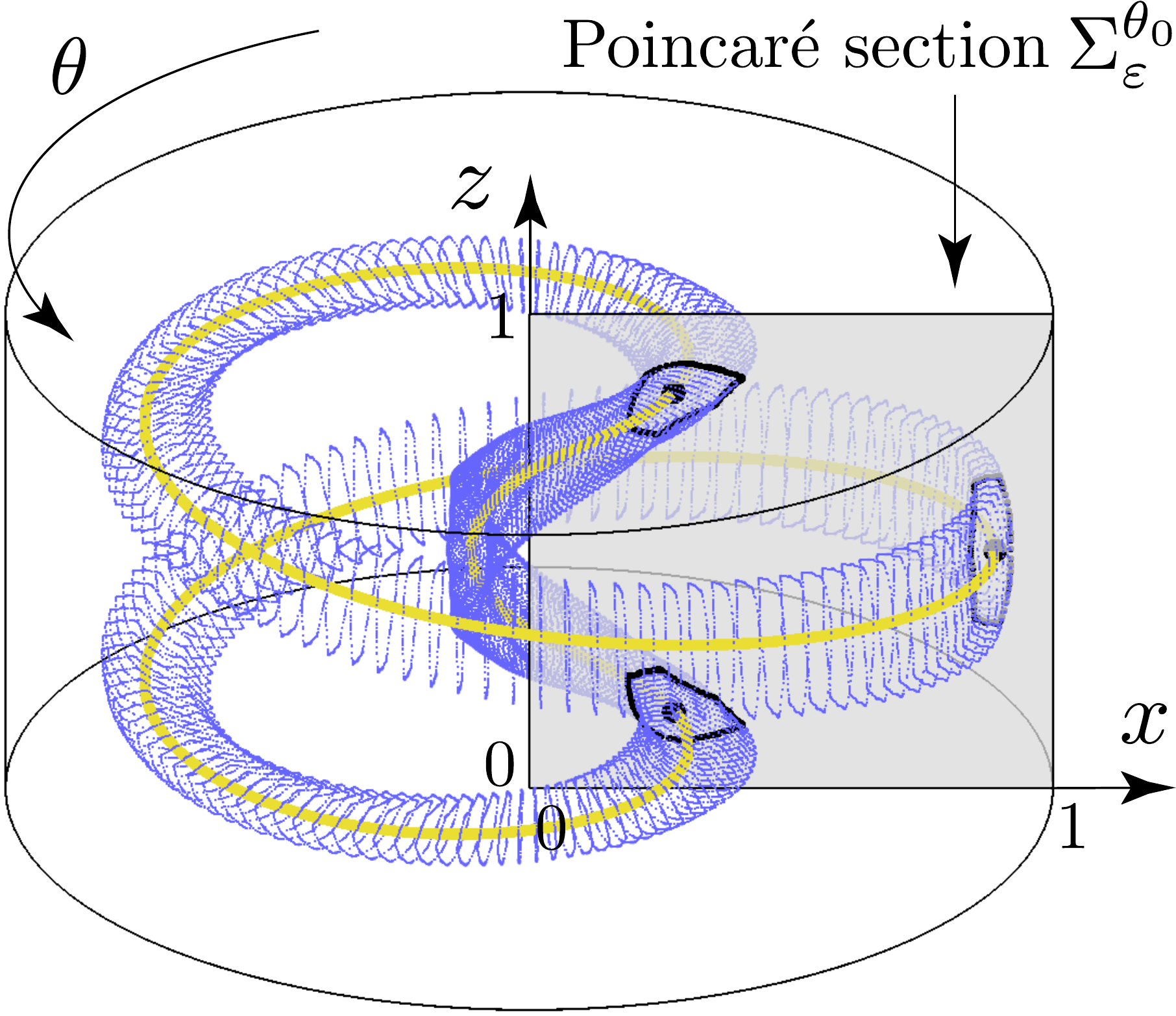}
\label{fig:torus}}
\caption{Elliptic 3-periodic points and their orbit with the associated KAM torus}
\label{fig:prd_torus}
\end{center}
\end{figure}

\paragraph{Periodic transport of islands.}
Since we have seen in Fig.\ref{fig:torus} that the KAM tori for each island are connected with each other, 
we next investigate the images of the island regions by Poincar\'e map $P^{\theta_0}_\varepsilon$. 
Let us denote the closed regions of island $I_i$ as $R_i \subset U \times U$ for $i=2,3,$ and $4$. 
Fig.\ref{fig:island_map} shows the initial position and the image of the regions mapped by $P^{\theta_0}_\varepsilon$. 
In order to easily recognize the deformation of the regions, each of them is illustrated in four colors. 
The elliptic 3-periodic points are indicated in yellow plots.
We can see that the regions of $I_2$, $I_3$, and $I_4$ are mapped to $I_3$, $I_4$, and $I_2$ respectively in order with the 3-periodic points as
\begin{equation*}
P^{\theta_0}_\varepsilon(R_2)=R_3,~~~P^{\theta_0}_\varepsilon(R_3)=R_4,~~~P^{\theta_0}_\varepsilon(R_4)=R_2.
\end{equation*}
It follows that the region of each island is mapped to the same island after three times of Poincar\'e maps as
\begin{equation*}
(P^{\theta_0}_\varepsilon)^3(R_i)=R_i.
\end{equation*}
Of course, this implies that the region $R$ of an island associated with an $m$-periodic point is mapped to the same island 
after $m$ times of Poincar\'e maps as
\begin{equation*}
(P^{\theta_0}_\varepsilon)^m(R)=R.
\end{equation*}

Furthermore, Fig.\ref{fig:island_map} indicates that the regions of the islands rotate around the 3-periodic points 
when they are mapped. It follows from the physical point of view that fluid in the region of an island is 
transported periodically as a sort of vortex by the Lagrangian transport as a whole, though each point is transported 
quasi-periodically. The KAM curve around the region, which is an invariant manifold, seems to act as a barrier and 
enclose the fluid inside. Notice that these vortex structures do not appear in a vortex field in the Eulerian description. 
It seems that these structures are quite relevant with the "Lagrangian vortices" or "Lagrangian eddies", 
which are regions that are transported stably as rotating regions; see \cite{HaBe2013}, {\cite{BlHa2014}, and \cite{FaHa2016}. However, we will seek for the relevance with Lagrangian vortices in details in future works.

\begin{figure}[H]
\begin{center}
\subfigure[Initial position of the regions]
{\includegraphics[scale=0.25]{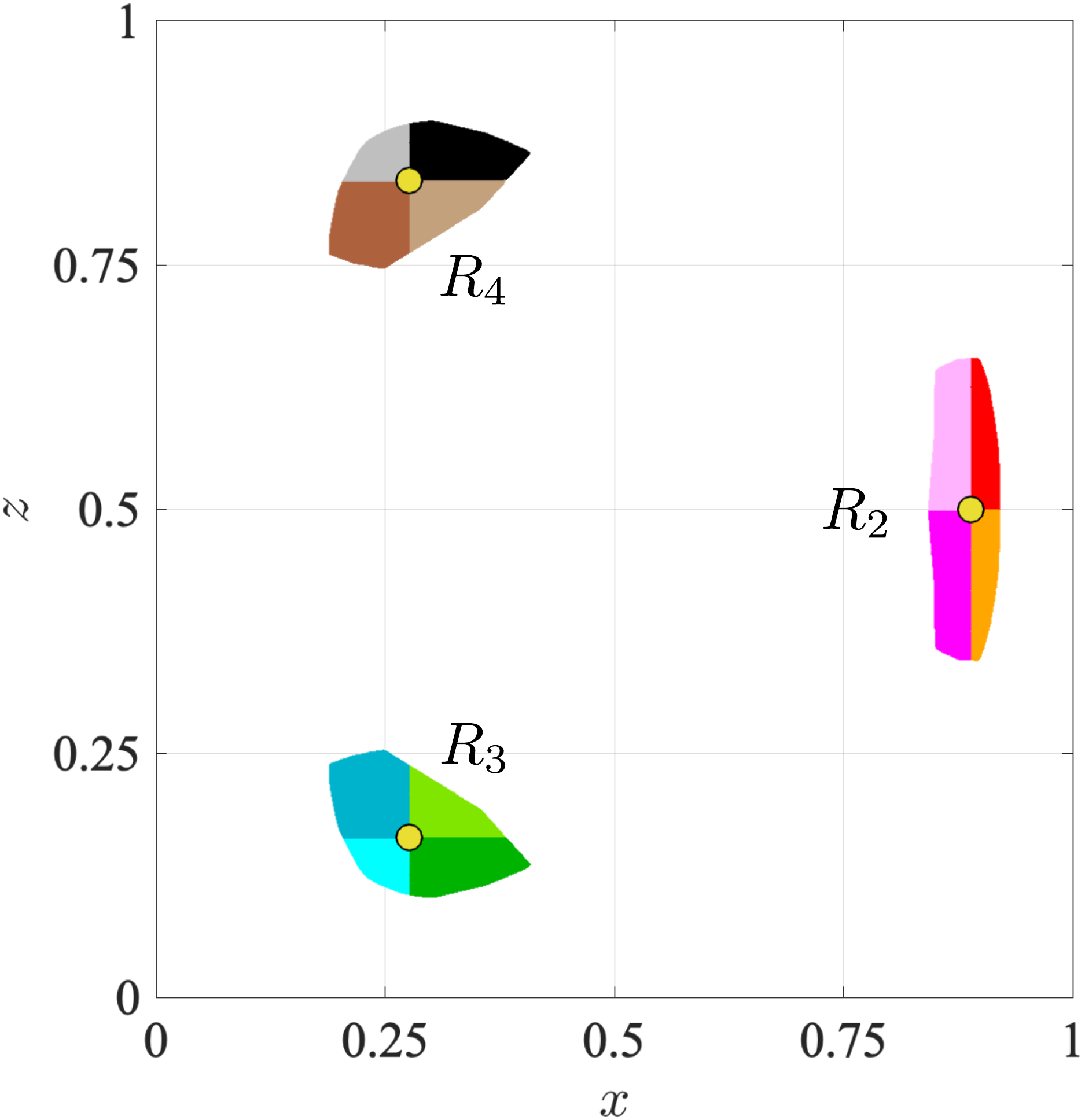}}
\hspace{10mm}
\subfigure[Position of the mapped regions]
{\includegraphics[scale=0.25]{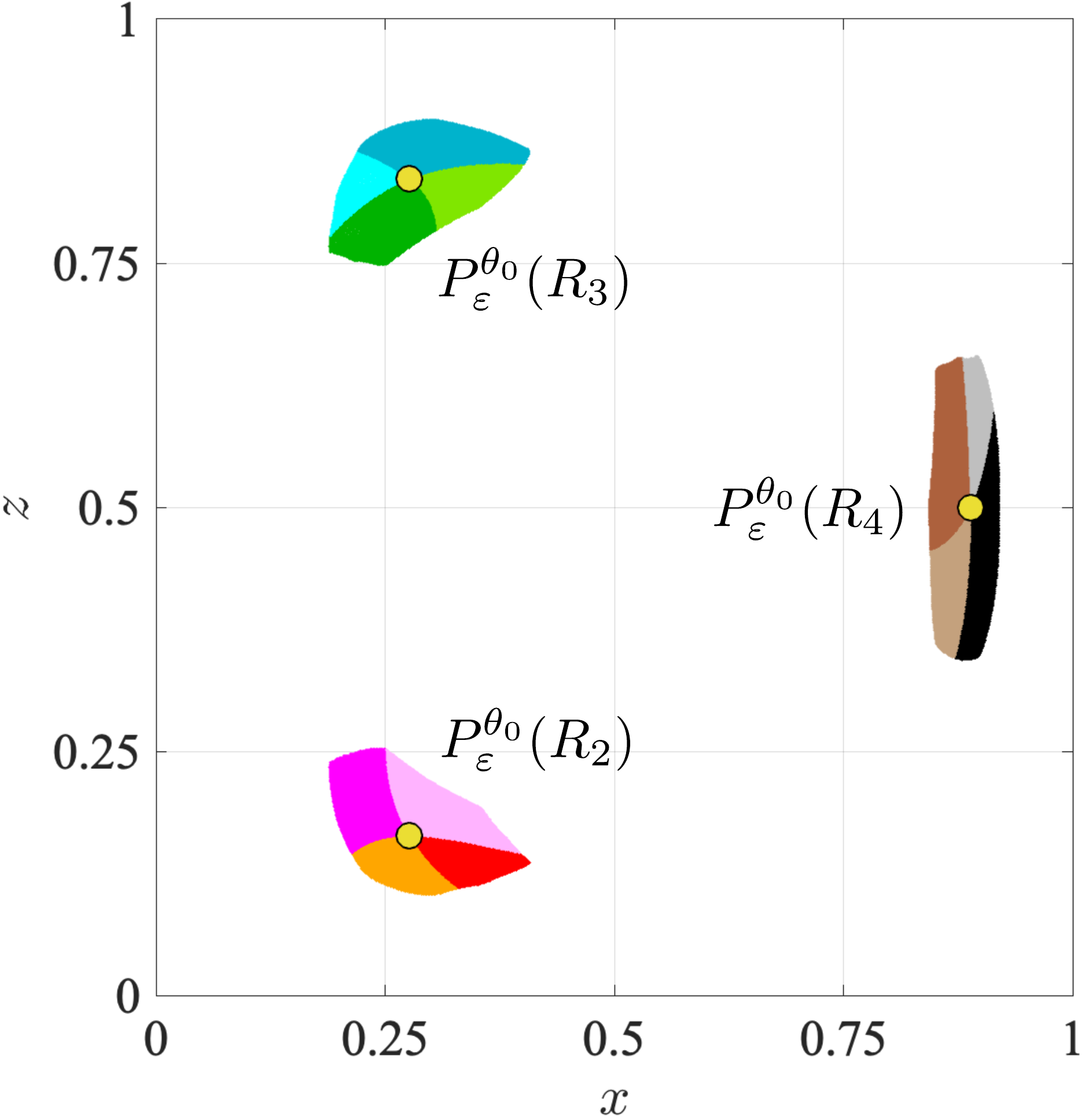}}
\caption{Mapping of the regions of the islands by Poincar\'e map}
\label{fig:island_map}
\end{center}
\end{figure}

\section{Resonances and symmetries of periodic orbits}
\label{Sec:symmetry}
In this section, we investigate the resonances and symmetric properties of periodic orbits which is a solution curve passing through periodic points. To do this, we consider the projection of the $m$-periodic orbits in the extended phase space $\mathcal{M}=M \times S^1$ to the original phase space $M$ and analyze the winding number $n$ of the projected orbits around the center of a cell. 

\subsection{Resonances of periodic orbits}
\paragraph{Periodic solutions.}
Let us investigate the resonance of periodic orbits by introducing a projection. Let $\widetilde{c}(t):=(x(t),z(t),\theta(t)),\,t \in \mathbb{I} \subset \mathbb{R}$ be a periodic solution of the perturbed Hamiltonian system in \eqref{ExtHamEq}, which is given by a curve on the extended phase space $\mathcal{M}$, and let $\pi: \mathcal{M} \to M; (x,z,\theta) \mapsto (x,z)$ be the natural projection. Then, from the periodic solution $\widetilde{c}(t)$, the projected curve $c(t)$ can be defined on $M$ as 
$$
c(t):=\pi(\widetilde{c}(t))=(x(t),z(t)),
$$
which can be identified with the solution curve of the non-autonomous Hamiltonian system in \eqref{PerHamEq} on $M$. 

Fig.\ref{fig:prd_orbit_MP3} illustrates the projection of the 3-periodic orbit in Fig.\ref{fig:torus} by $\pi$ onto $M$. 
It follows that the projection is a closed orbit and that it goes around the center of the cell $(x,z)=(1/2,1/2)$ once.

\paragraph{Winding number of periodic orbits.}
In order to analyze the number of times that a projected periodic orbit goes around the center of a cell, 
let us introduce the concept of {\it winding number} $n$ of a projected orbit. 

\begin{figure}[H]
\begin{center}
\includegraphics[scale=0.25]{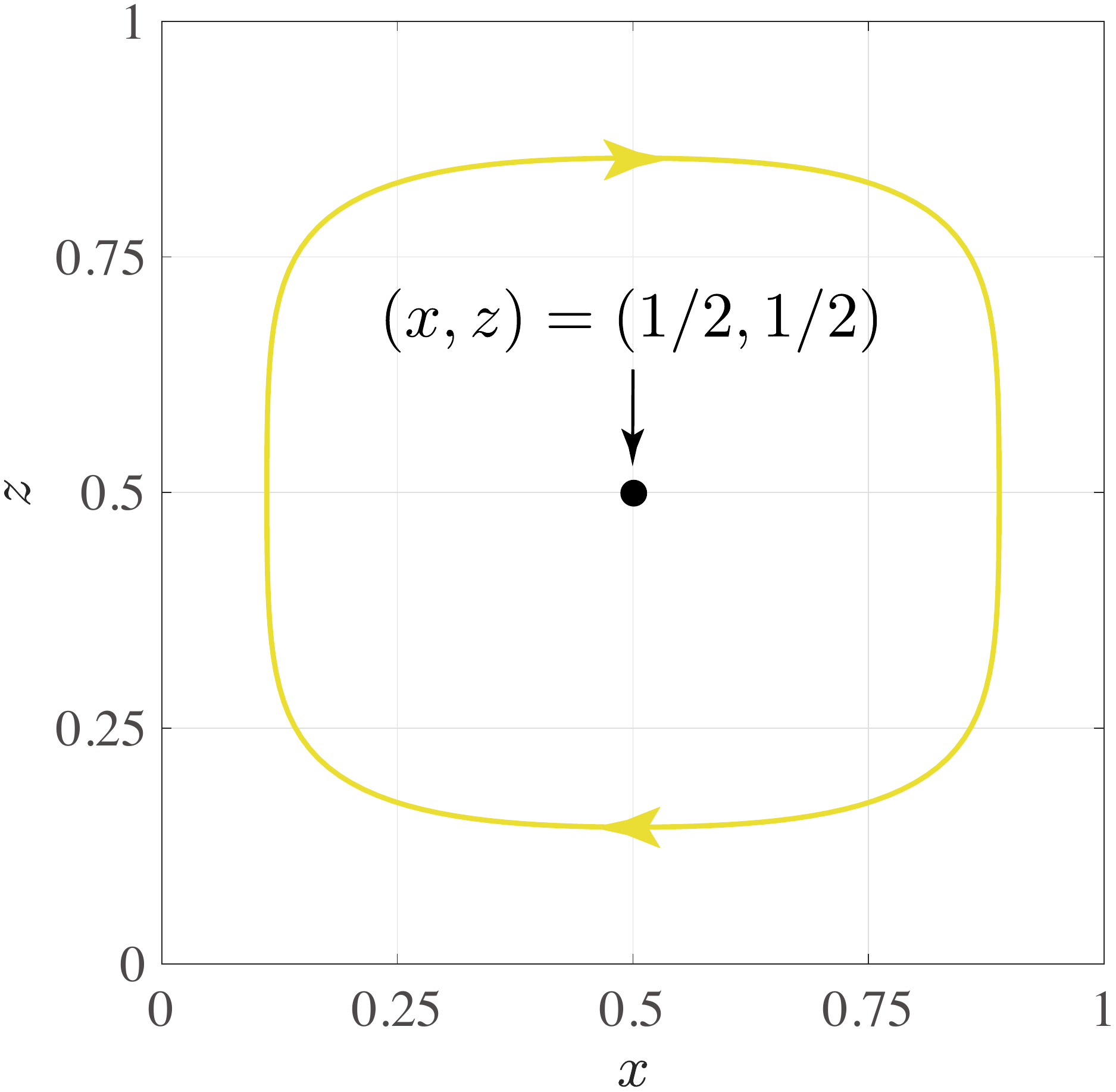}
\vspace{-2mm}
\caption{3-periodic orbit projected by $\pi$ on $M$}
\label{fig:prd_orbit_MP3}
\end{center}
\end{figure}

\vspace{-5mm}

\begin{definition}\rm
Consider an $m$-periodic orbit $\widetilde{c}(t):=(x(t),z(t),\theta(t)), t \in\mathbb{I} \subset \mathbb{R}$ on the extended phase space $M \times S^1$. Then we can define the periodic curve on $M$ by $c(t)=(x(t), z(t)):=\pi (\tilde{c}(t))$. Then, the {\it winding number} of $c(t)$ is given by
\begin{equation}\label{eq:definition_n}
n= \frac{1}{2\pi i} \oint_c \frac{dw}{w-w_c},
\end{equation}
where $w=x+iz \in \mathbb{C}$ is a point on $c(t)$ and $w_c=x_c+iz_c \in \mathbb{C}$ is 
a point on $M$ such that $w_c \notin c(t)$. Regarding the winding number, see \cite{Fl1983}.
\end{definition}

The absolute value of the winding number $n$ corresponds to the number of times that the orbit goes around the center of a cell, while it could take both positive and negative values in general according to the direction. Namely, the winding number is positive when the orbit goes in counter-clockwise direction,
while it is negative when it goes in clockwise direction. 
For example, the winding number of the projection of the 3-periodic orbit shown in Fig.\ref{fig:prd_orbit_MP3} 
is $n=-1$ when $x_c=z_c=1/2$, since the orbit goes around $(x,z)=(1/2,1/2)$ once in clockwise direction. 

\vspace{-1mm}

\paragraph{Resonant periodic orbits.}
Fig.\ref{fig:prd_point_orbit_MP1_3_5_7_11} -- Fig.\ref{fig:prd_point_orbit_MP5_11} illustrate 
some of the periodic points in Fig.\ref{fig:pm_prd} and the projection of the associated periodic curves onto $M$, where they are classified according to the symmetry with respect to the horizontal and 
vertical center lines of the cell, namely $x=\pi/(2k)$ and $z=1/2$.
As can be seen, the projected orbits go around the center of the cell once or several times. 
For example, the projection of the 7-periodic orbit in Fig.\ref{fig:prd_point_orbit_MP1_3_5_7_11} goes around the center 
of the cell three times in clockwise direction, which means that the winding number is $n=-3$ when $x_c=z_c=1/2$. 
It follows that $m$-periodic orbits can be considered as {\it resonant orbits} in the sense that those with winding number $n$ 
go around the center of a cell $n$ times when they are projected on $M$, 
while they go around $m$ times in $\theta$ direction in the extended phase space $\mathcal{M}$.

\vspace{-1mm}

\paragraph{Resonance condition of periodic orbits.}
Let us define the {\it resonance condition} of an $m$-periodic orbit with winding number $n$ as $|n/m|$.
The resonance conditions of the detected periodic orbits are indicated besides each orbit 
in Fig.\ref{fig:prd_point_orbit_MP1_3_5_7_11} - Fig.\ref{fig:prd_point_orbit_MP5_11}, where $x_c=z_c=1/2$. 
Fig.\ref{fig:prd_point_orbit_MP1_3_5_7_11} - Fig.\ref{fig:prd_point_orbit_MP5_11} shows that there are many kinds of 
periodic orbits with different resonance conditions. It follows that the resonance conditions of the orbits 
in the middle of the cell tend to be larger than that of those in the outer area, since the absolute value of the winding number of 
those in the middle tend to be larger. It is also observed that some of the $m$-periodic orbits have different resonance conditions
even when their periods are the same. For example, we can see two different kinds of 7-periodic orbits with $|n/m|=3/7$ and 2/7, 
and also 11-periodic orbits with $|n/m|=5/11$ and 3/11. 

\begin{figure}[H]
\begin{center}
\subfigure[Peirodic points on $\Sigma^{\theta_0}$]
{\includegraphics[scale=0.25]{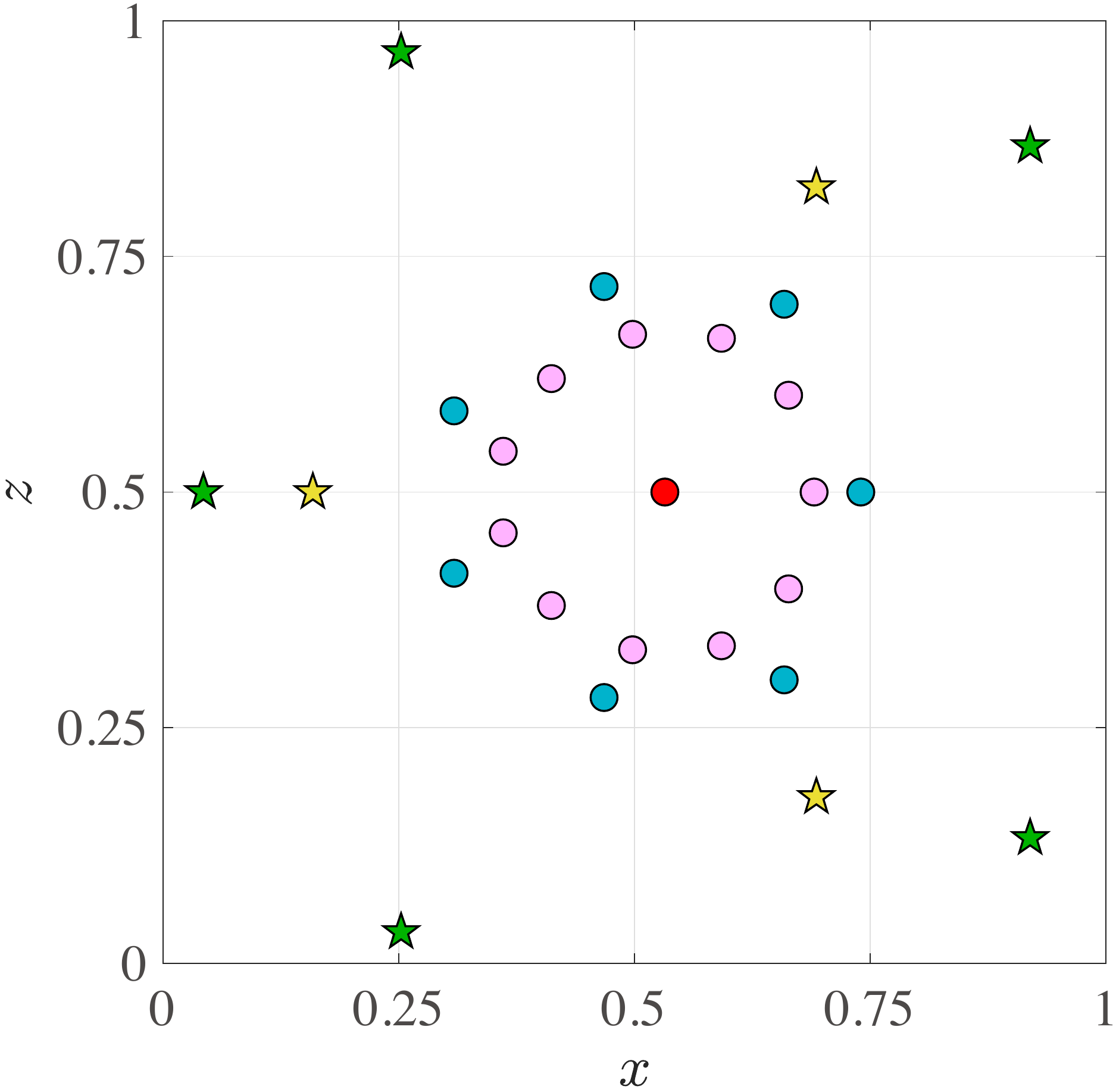}
\label{fig:prd_point_MP1_3_5_7_11}}
\hspace{5mm}
\vspace{-1mm}
\subfigure[The projection of periodic orbits onto $M$]
{\includegraphics[scale=0.25]{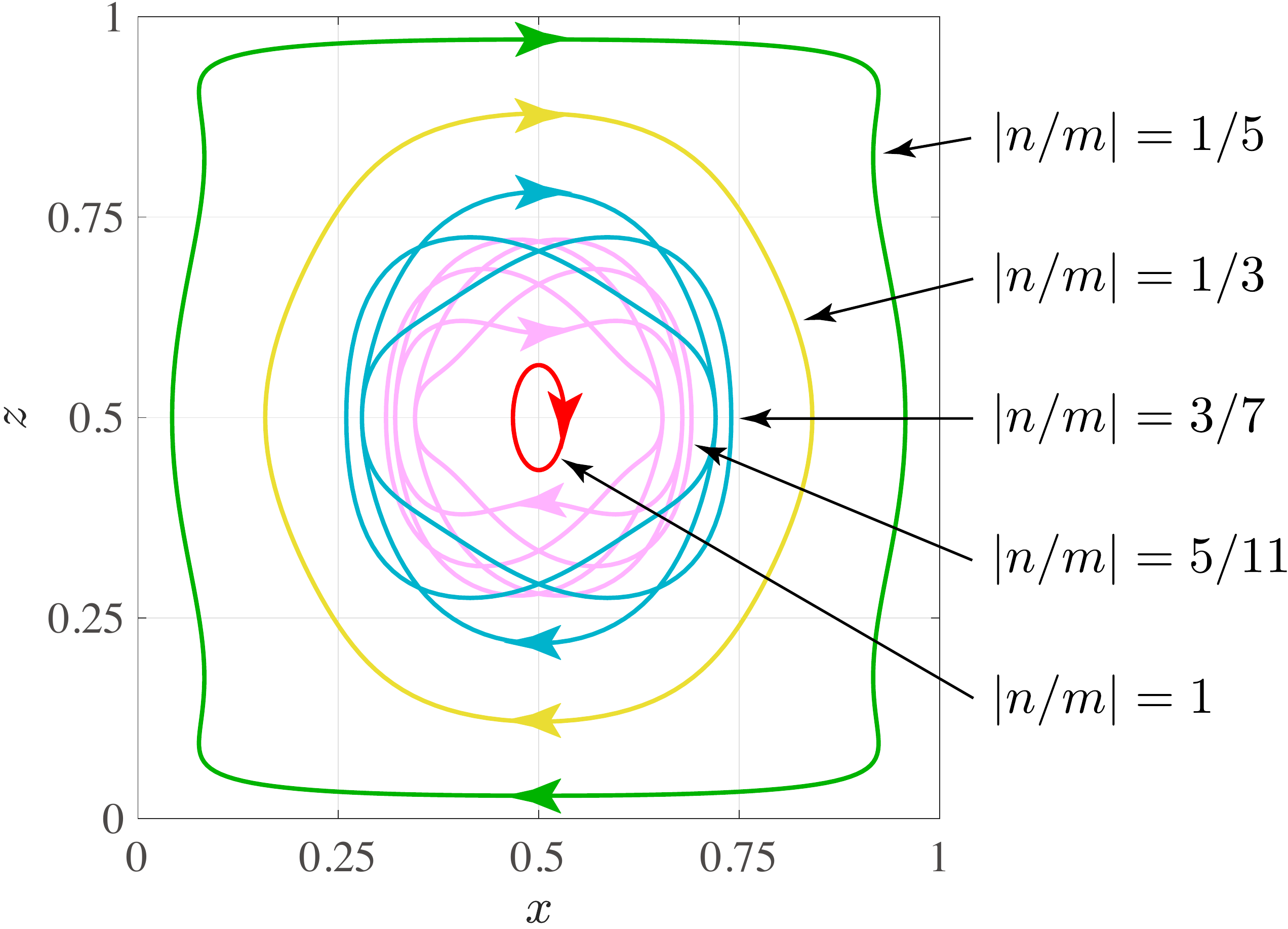}
\label{fig:prd_orbit_MP1_3_5_7_11}}
\subfigure{\includegraphics[scale=0.3]{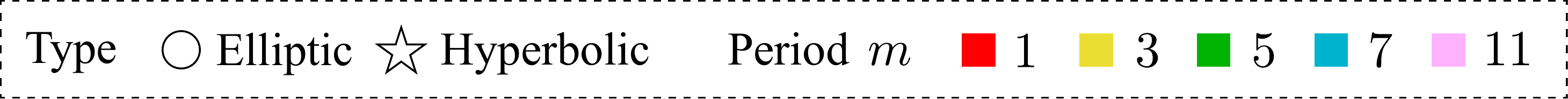}}
\vspace{-3mm}
\caption{Orbits symmetric with respect to $x=\pi/(2k)$ and $z=1/2$}
\label{fig:prd_point_orbit_MP1_3_5_7_11}
\end{center}
\end{figure}

\vspace{-7mm}

\begin{figure}[H]
\begin{center}
\subfigure[Peirodic points on $\Sigma^{\theta_0}$]
{\includegraphics[scale=0.25]{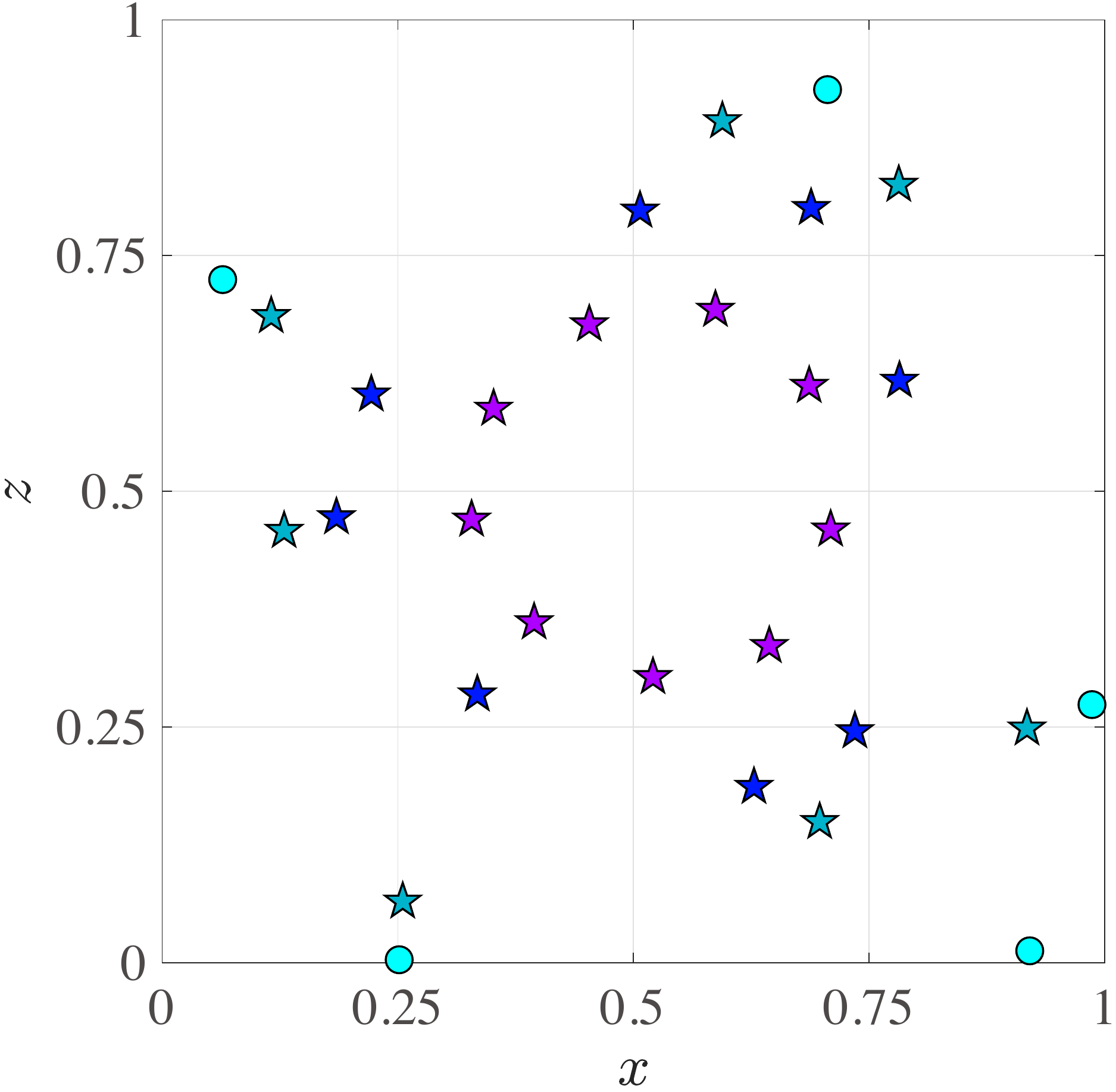}
\label{fig:prd_point_MP6_7_8_9}}
\hspace{5mm}
\vspace{-1mm}
\subfigure[The projection of periodic orbits onto $M$]
{\includegraphics[scale=0.25]{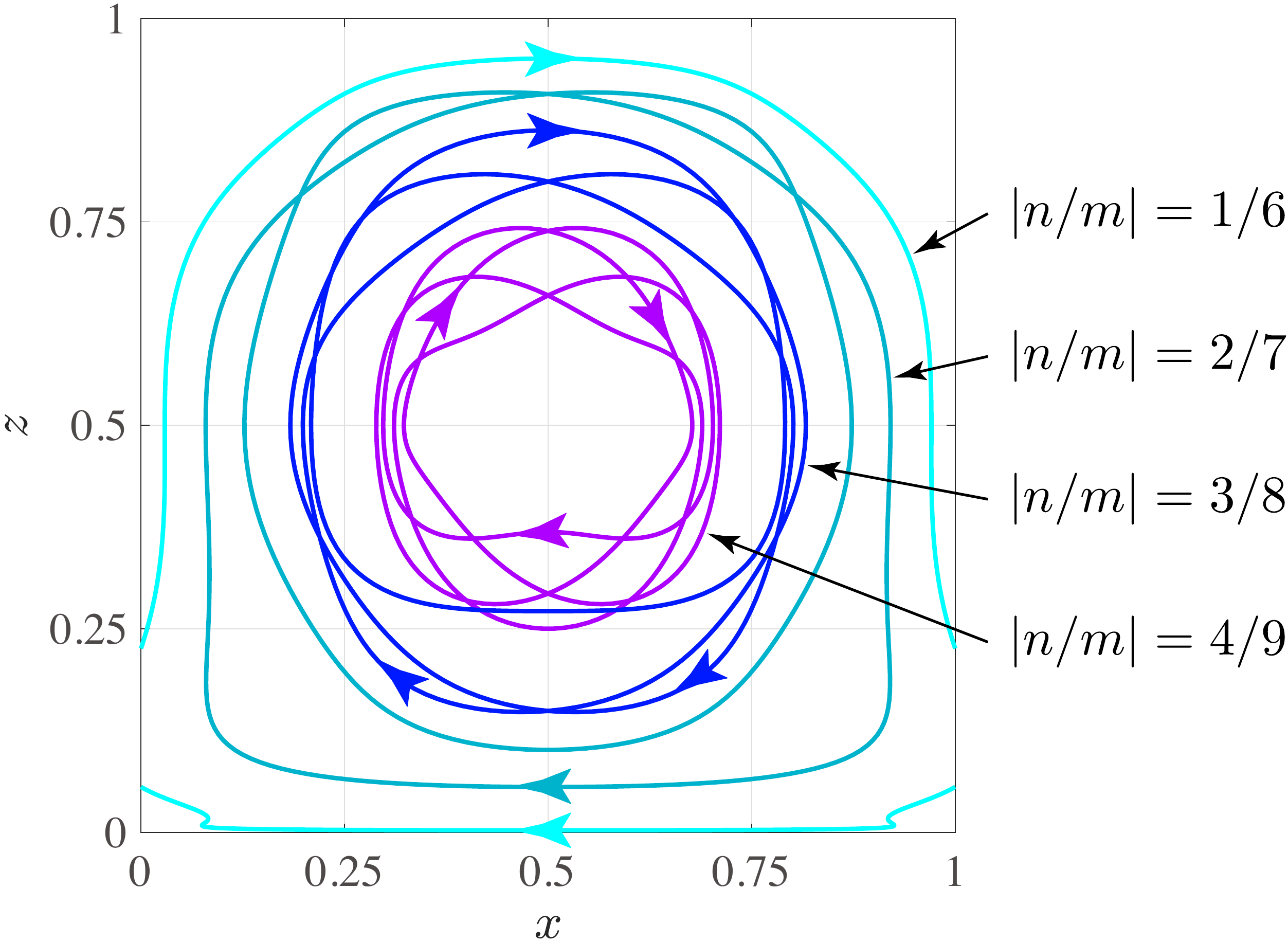}
\label{fig:prd_orbit_MP6_7_8_9}}
\subfigure{\includegraphics[scale=0.3]{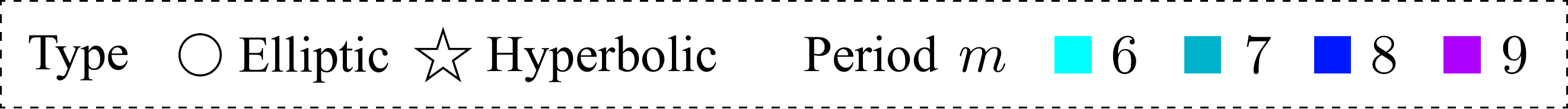}}
\vspace{-3mm}
\caption{Orbits symmetric only with respect to $x=\pi/(2k)$}
\label{fig:prd_point_orbit_MP6_7_8_9}
\end{center}
\end{figure}

\vspace{-7mm}

\begin{figure}[H]
\begin{center}
\subfigure[Peirodic points on $\Sigma^{\theta_0}$]
{\includegraphics[scale=0.25]{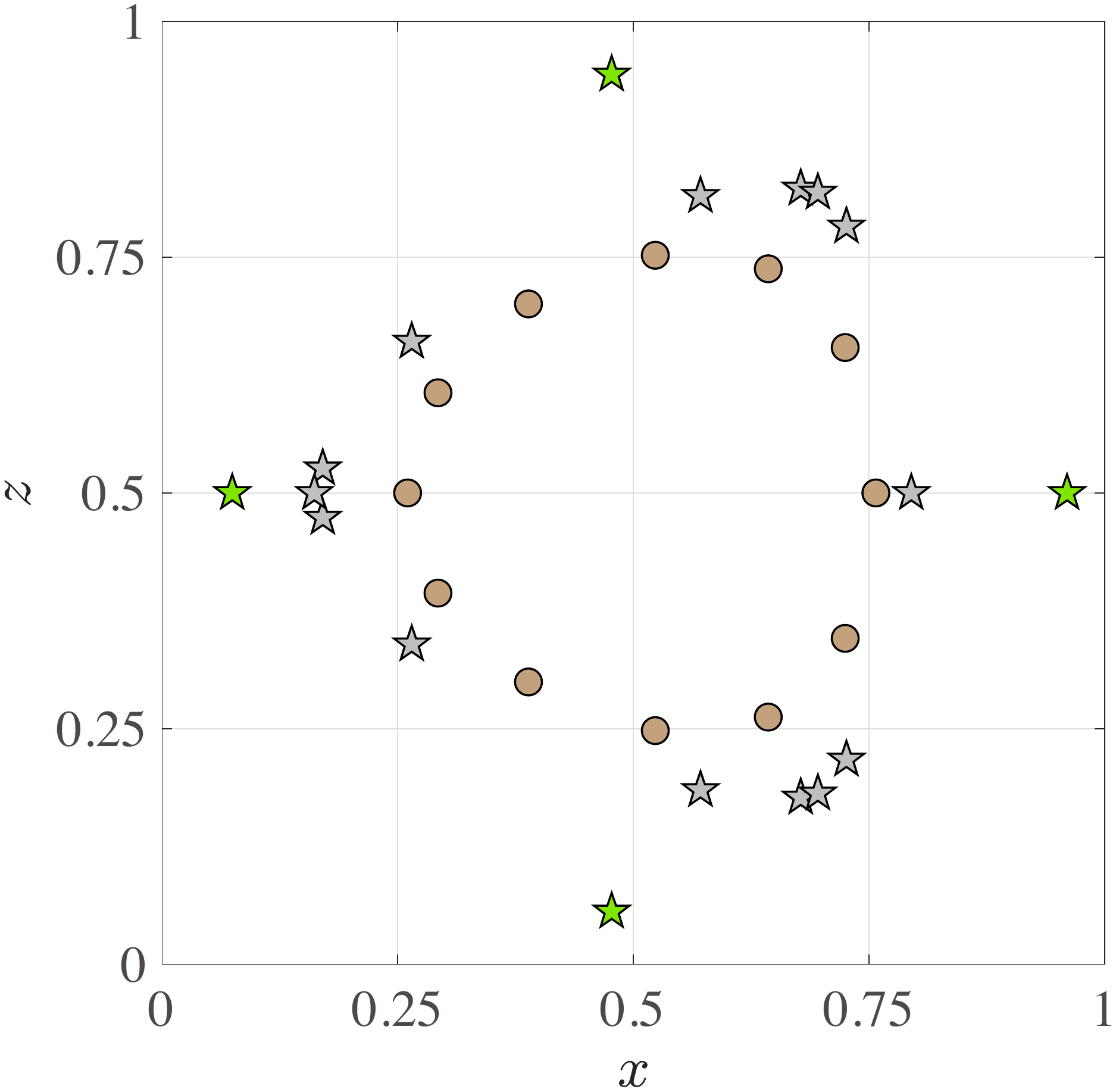}
\label{fig:prd_point_MP4_12_14}}
\hspace{5mm}
\vspace{-1mm}
\subfigure[The projection of periodic orbits onto $M$]
{\includegraphics[scale=0.25]{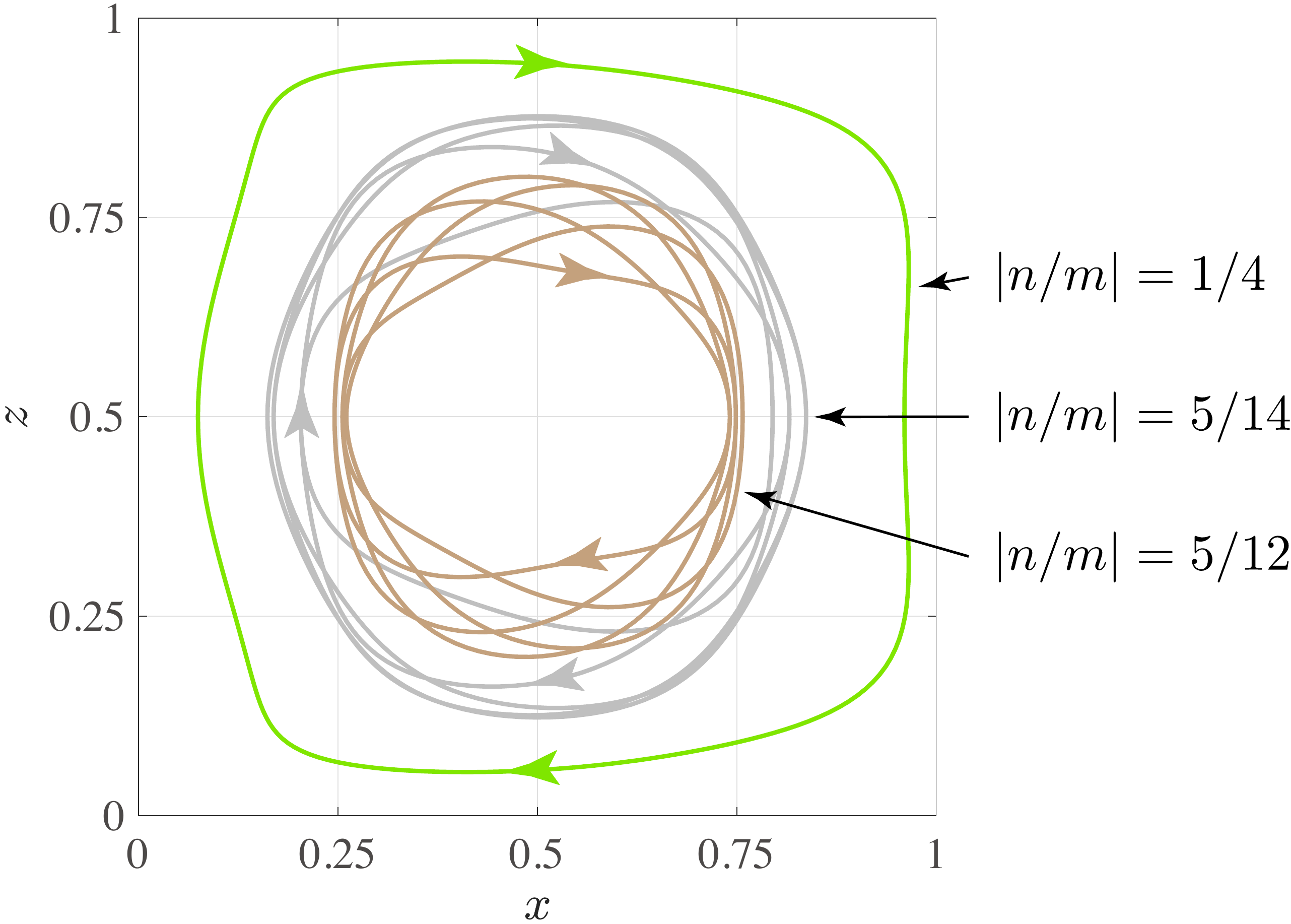}
\label{fig:prd_orbit_MP4_12_14}}
\subfigure{\includegraphics[scale=0.3]{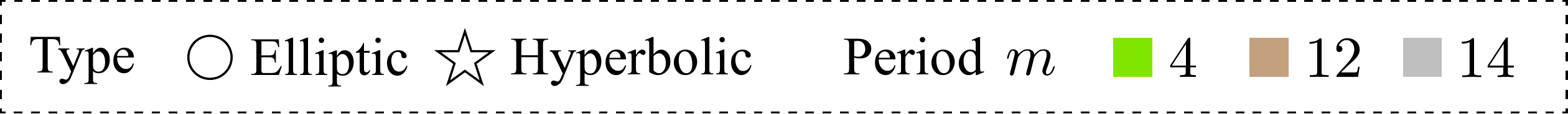}}
\vspace{-3mm}
\caption{Orbits symmetric only with respect to $z=1/2$}
\label{fig:prd_point_orbit_MP4_12_14}
\end{center}
\end{figure}

\begin{figure}[H]
\begin{center}
\subfigure[Peirodic points on $\Sigma^{\theta_0}$]
{\includegraphics[scale=0.25]{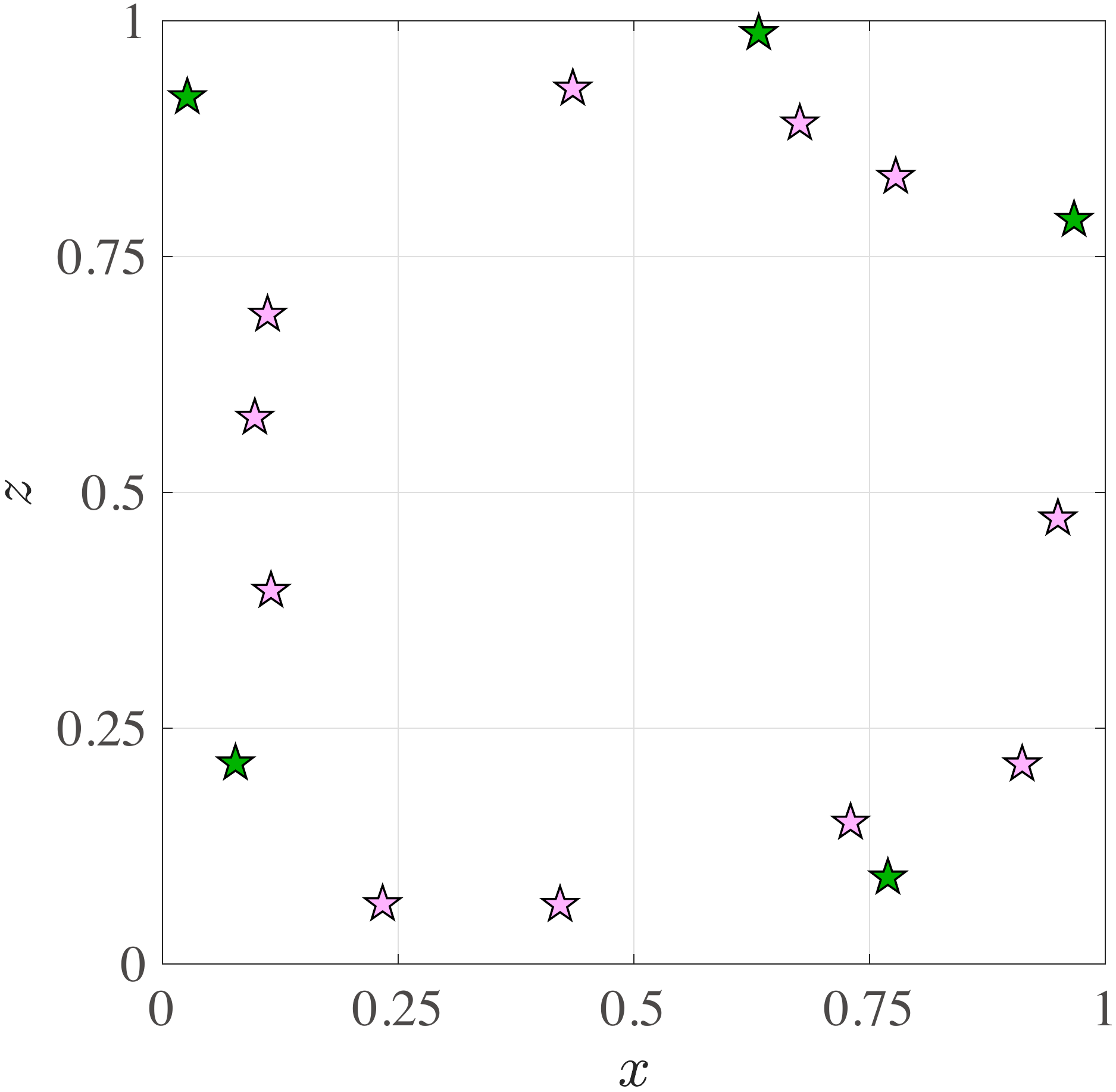}
\label{fig:prd_point_MP5_11}}
\hspace{5mm}
\subfigure[The projection of periodic orbits onto $M$]
{\includegraphics[scale=0.25]{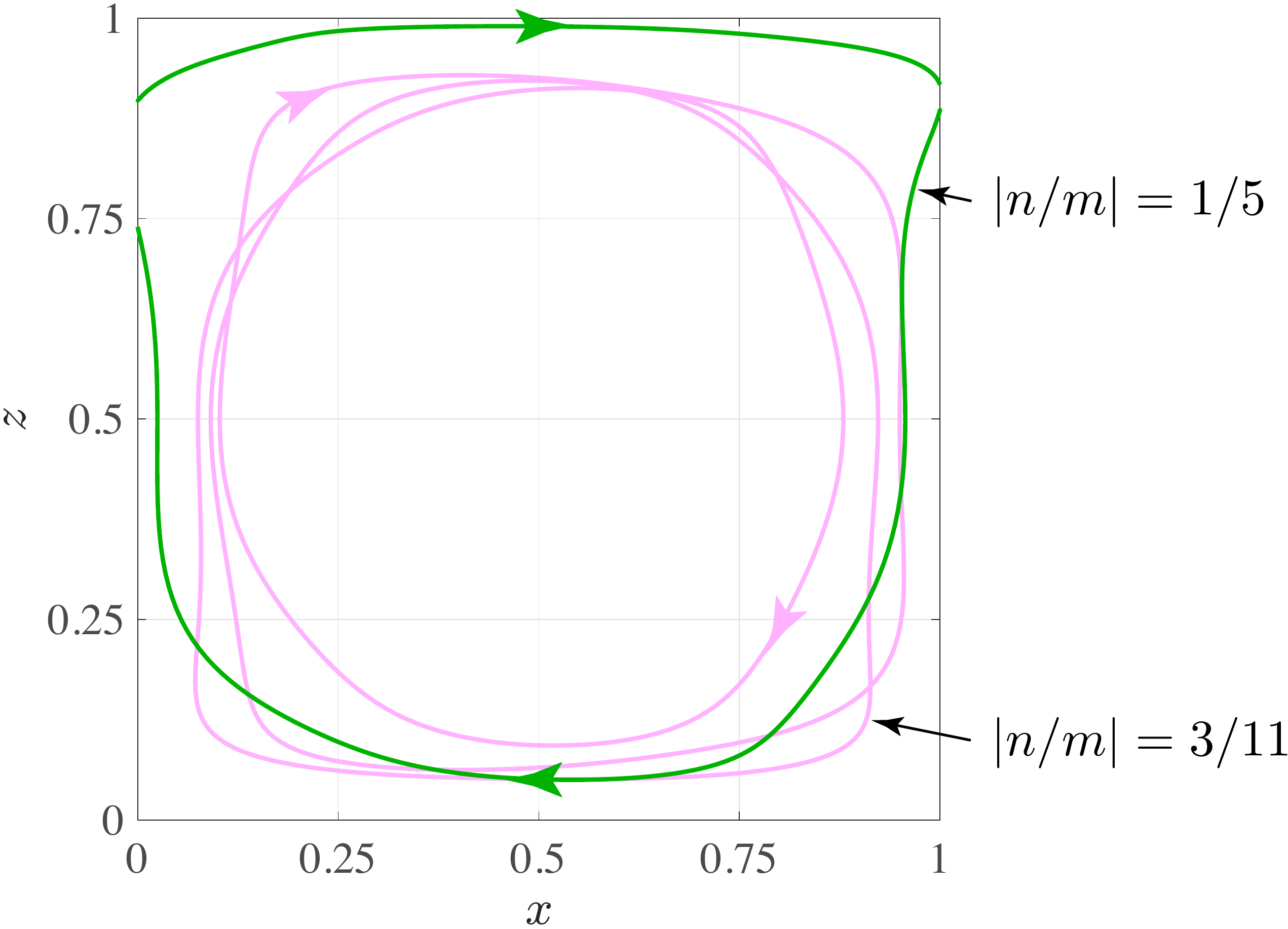}
\label{fig:prd_orbit_MP5_11}}
\subfigure{\includegraphics[scale=0.3]{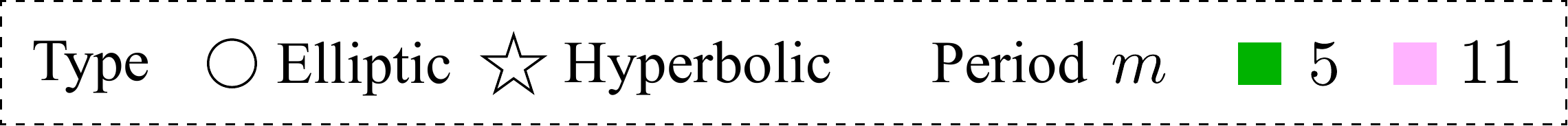}}
\vspace{-2mm}
\caption{Asymmetric orbits}
\label{fig:prd_point_orbit_MP5_11}
\end{center}
\end{figure}

\vspace{-5mm}

\begin{figure}[H]
\begin{center}
\subfigure[9-peirodic orbit ($n=-3$)]
{\includegraphics[scale=0.25]{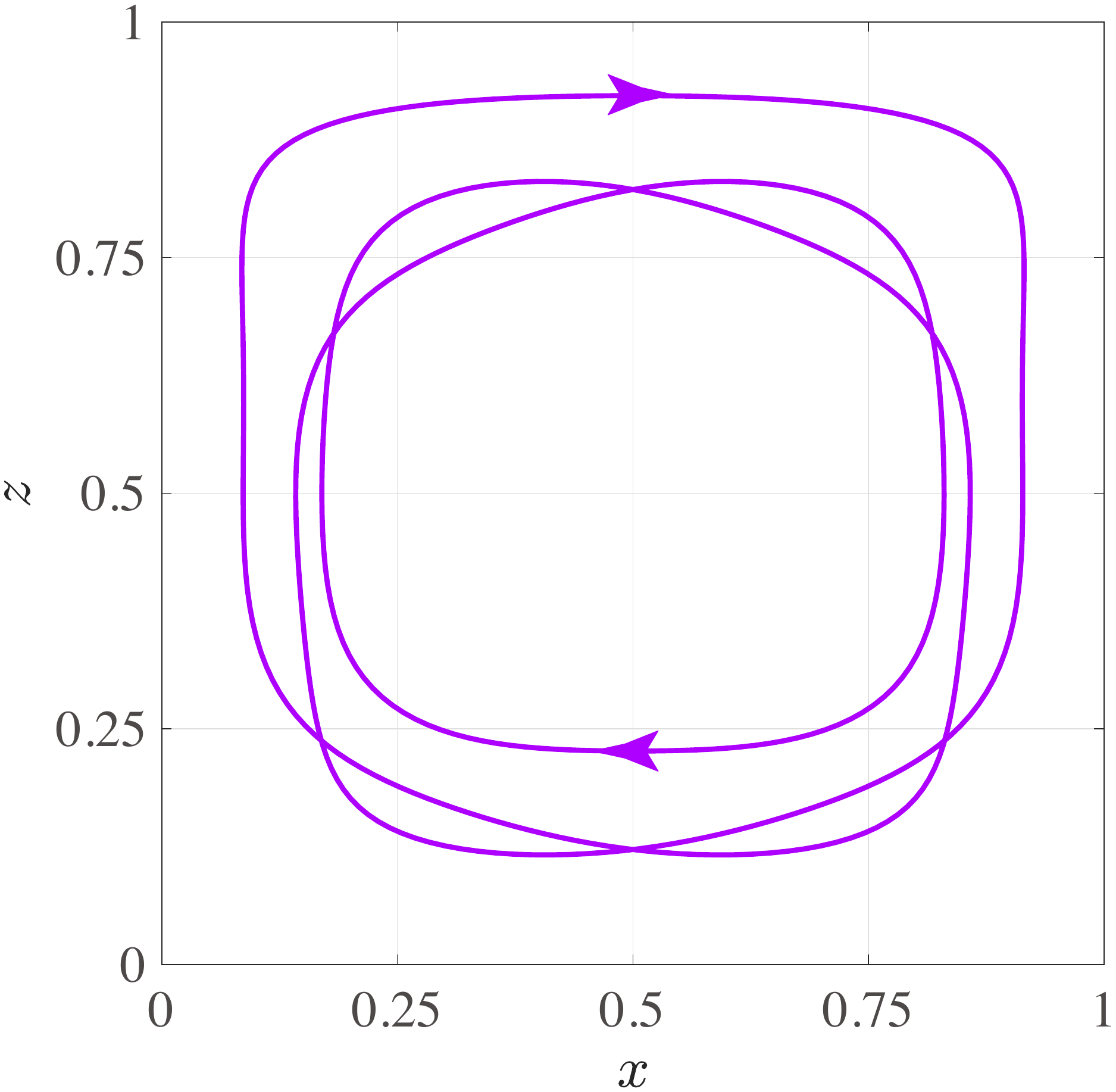}
\label{fig:prd_point_9a}}
\hspace{5mm}
\subfigure[12-peirodic orbit ($n=-4$)]
{\includegraphics[scale=0.25]{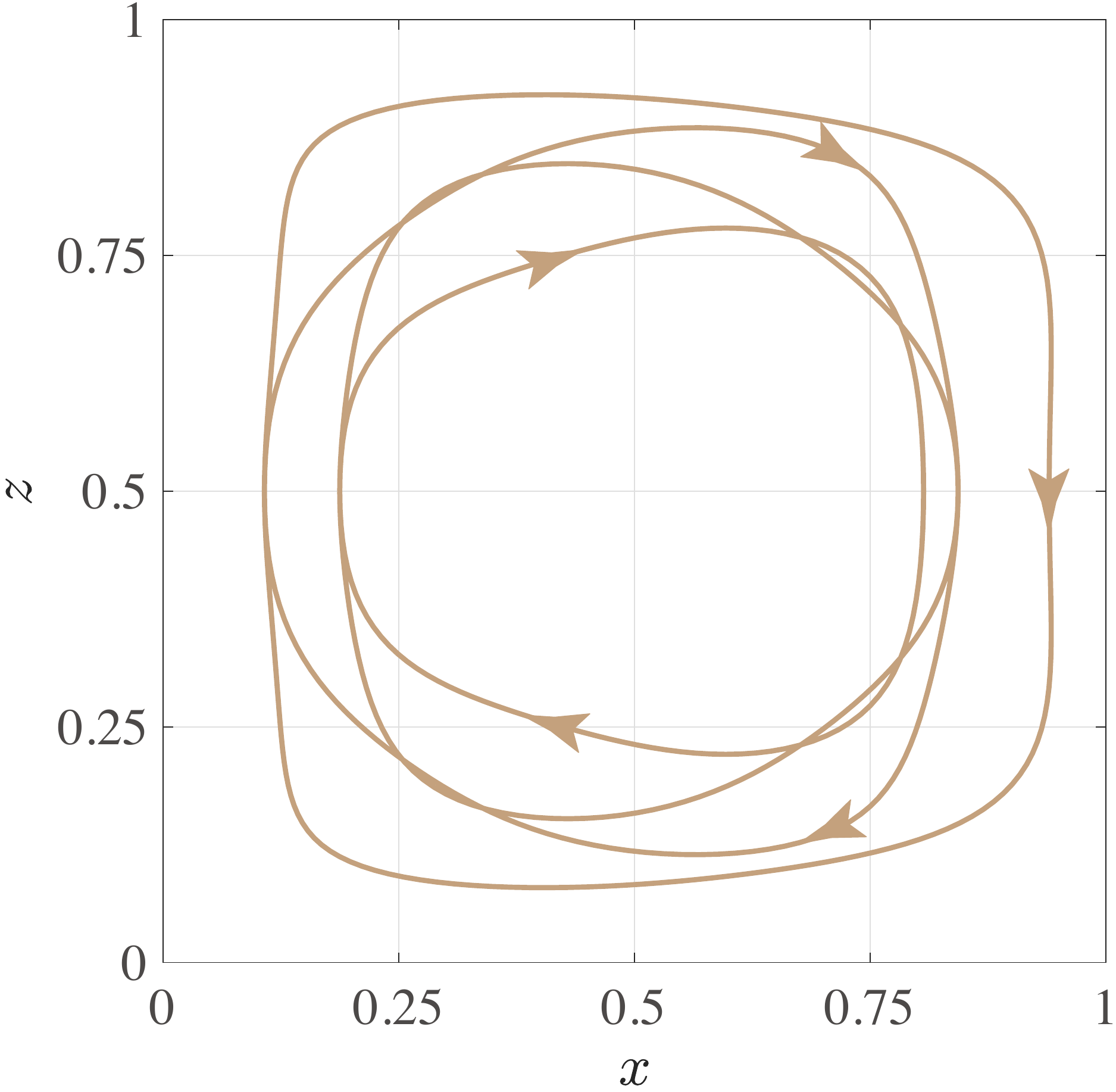}
\label{fig:prd_point_12a}}
\subfigure[15-peirodic orbit ($n=-5$)]
{\includegraphics[scale=0.25]{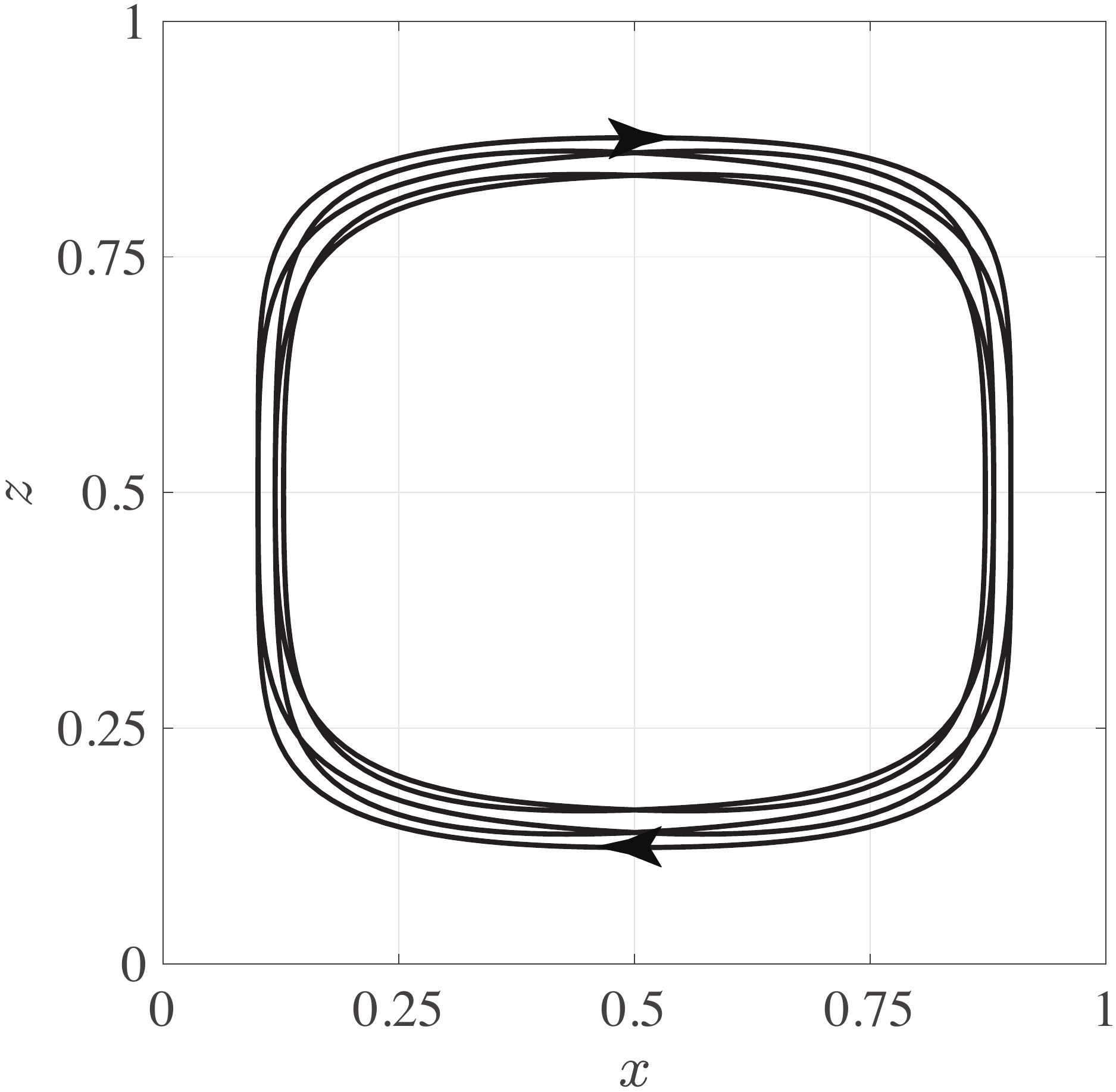}
\label{fig:prd_point_15a}}
\hspace{5mm}
\subfigure[15-peirodic orbit ($n=-5$)]
{\includegraphics[scale=0.25]{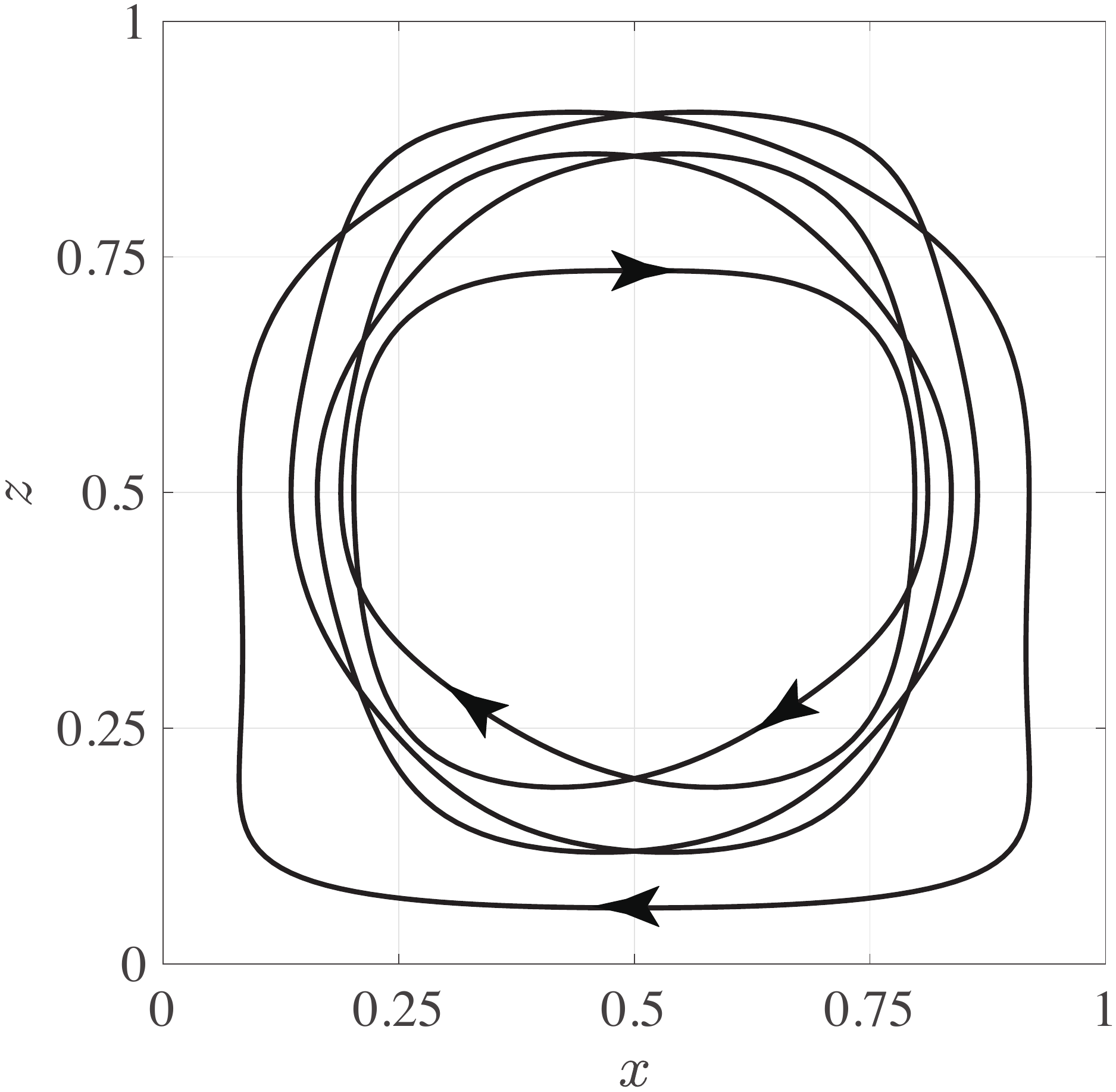}
\label{fig:prd_point_15a}}
\vspace{-2mm}
\caption{The projection of periodic orbits with $|n/m|=1/3$}
\label{fig:prd_point_orbit_1/3}
\end{center}
\end{figure}

Furthermore, it is found in our numerical computation that some of the orbits have the same resonance conditions 
even when their periods are different. For example, we illustrate some of the orbits of which 
resonance condition is $|n/m|=1/3$ in Fig.\ref{fig:prd_point_orbit_1/3}. 
As can be seen, the winding number of the 9, 12, and 15-periodic orbits are $n=-3, -4$ and -5 respectively, 
where we recall that the negative sign indicates that the periodic orbits have the clockwise direction. 
Such periodic orbits seem to be related to the fold bifurcations as we shall discuss this in \S\ref{Sec:bifurcation}.

\paragraph{Symmetries of the projected orbits.}
We next focus on the symmetric properties of the projected orbits with respect to the horizontal and vertical center lines of a cell. 
Let us recall the following symmetric properties i) and iv) of the non-autonomous system in \eqref{PerHamEq}: 
\begin{equation*}
\begin{split}
&\textrm{i)} \quad \displaystyle  x \mapsto x + \frac{2a\pi}{k},\quad z \mapsto -z + 1, \quad t \mapsto -t+bT,\\[3mm]
&\textrm{iv)} \quad \displaystyle x \mapsto -x + \frac{(2a+1)\pi}{k},\quad z \mapsto z, \quad t \mapsto -t + \biggl(b+\frac{1}{2} \biggr)T,
\end{split}
\end{equation*}
where we recall $T(=2\pi/\omega)$ is the period of the perturbation and $a,b \in \mathbb{Z}$. 
Since the projection of the periodic curve $c(t)=\pi(\tilde{c}(t))$ correspond to the solution curve of the non-autonomous system, it follows 
that if the projection $c$ of a periodic orbit is not symmetric with respect to the vertical center line $x=(2a+1)\pi/(2k)$, 
there exists another periodic orbit of which projection is symmetric with $c$ with respect to $x=(2a+1)\pi/(2k)$. 
This is the same with respect to the horizontal center line $z=1/2$ as well.  
Hence, if the projection of a periodic orbit is not 
symmetric with respect to $x=(2a+1)\pi/(2k)$ and $z=1/2$, there exist three more orbits of which each projection is symmetric 
with $c$ with respect to $x=(2a+1)\pi/(2k)$ or $z=1/2$.
Fig.\ref{fig:prd_orbit_sym} illustrate the orbits that are symmetric with those in Fig.\ref{fig:prd_point_orbit_MP6_7_8_9} 
and Fig.\ref{fig:prd_point_orbit_MP4_12_14}. Note that the evolution of the orbits are depicted in the positive direction of time $t$ in both figures and also that the orientation of orbits could be opposite when computing the evolution for the negative direction of $t$, while the resonance conditions of the orbits that are symmetric in spatial coordinates $(x,z)$ with each other are the same.

\begin{figure}[H]
\begin{center}
\subfigure[Orbits symmetric with those in Fig.\ref{fig:prd_point_orbit_MP6_7_8_9}]
{\includegraphics[scale=0.25]{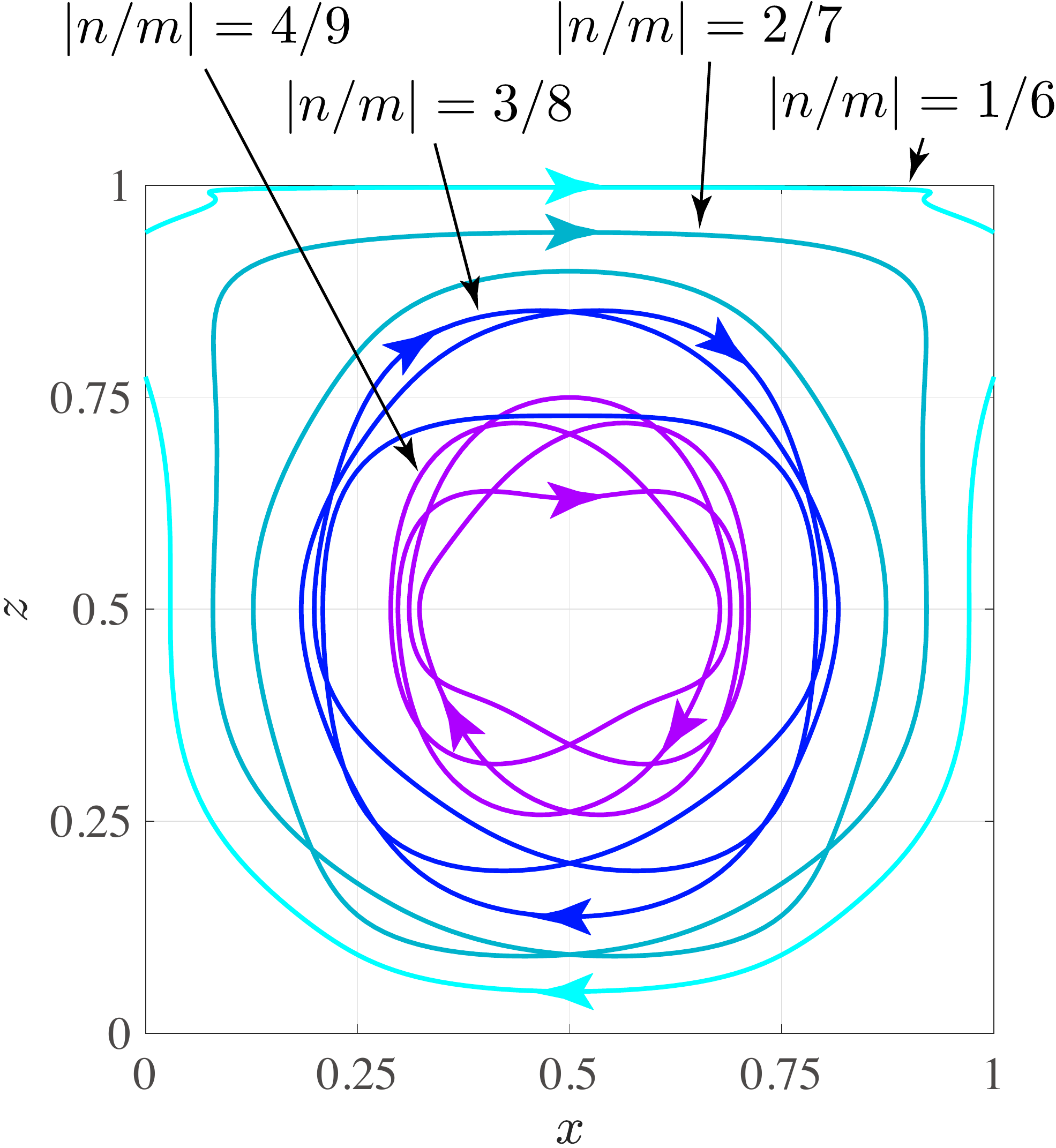}
\label{fig:prd_orbit_MP6_7_8_9b}}
\hspace{8mm}
\subfigure[Orbits symmetric with those in Fig.\ref{fig:prd_point_orbit_MP4_12_14}]
{\includegraphics[scale=0.25]{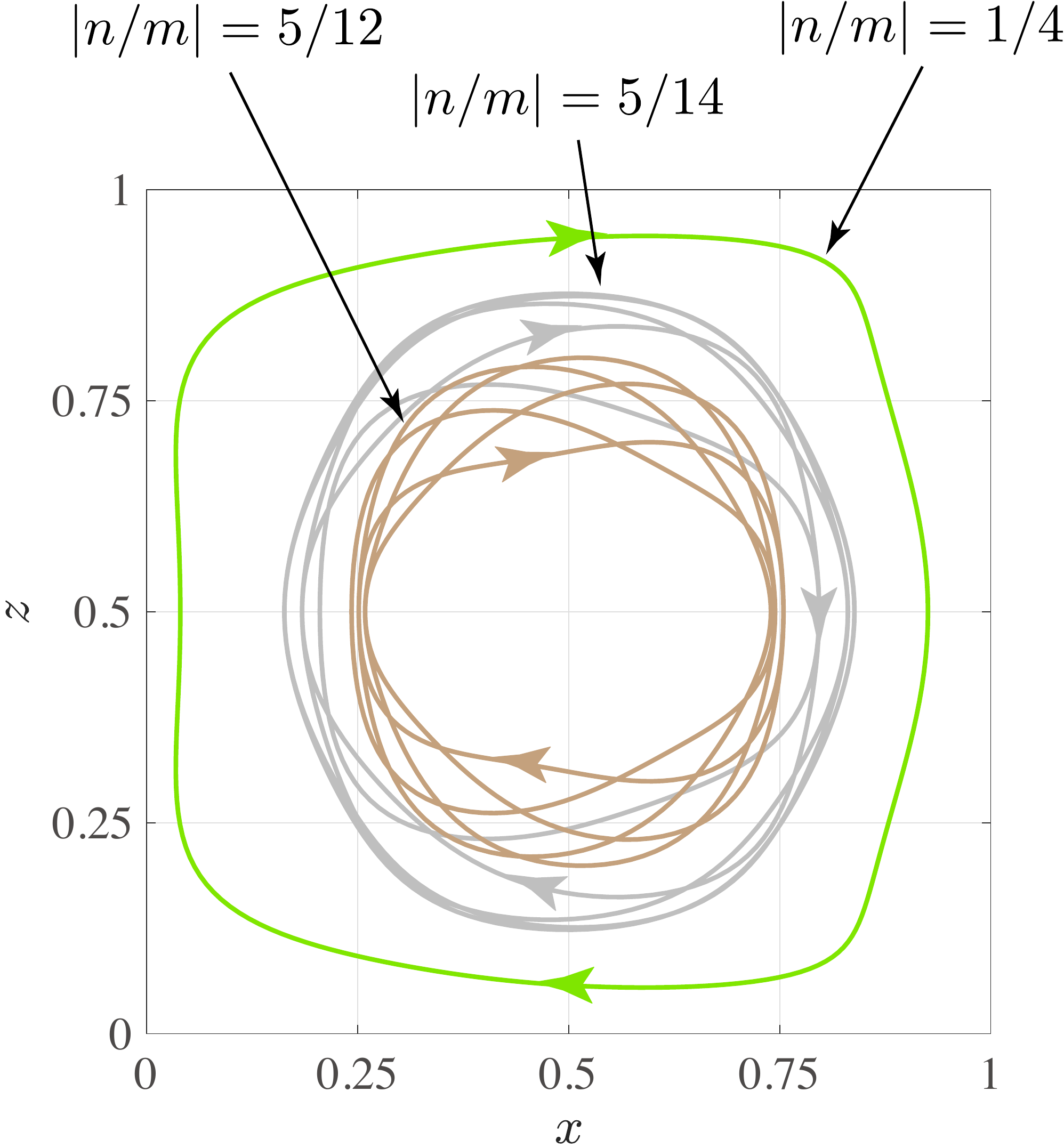}
\label{fig:prd_orbit_MP4_2_14b}}
\subfigure{\includegraphics[scale=0.3]{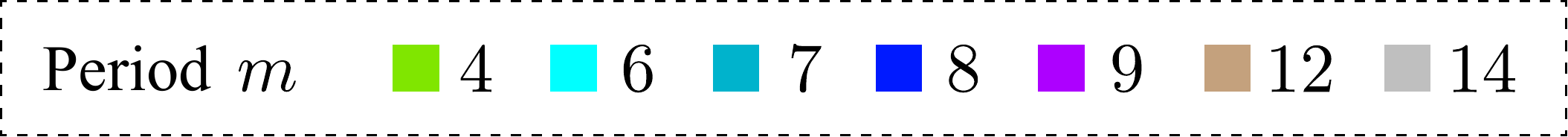}}
\caption{Orbits symmetric with those in Fig.\ref{fig:prd_point_orbit_MP6_7_8_9} and Fig.\ref{fig:prd_point_orbit_MP4_12_14}}
\label{fig:prd_orbit_sym}
\end{center}
\end{figure}

\begin{remark}[Action angle variables]\rm
We can introduce the action angle variables $(J,\phi)$ to transform the Hamiltonian system 
in terms of $(x,z)$ to that in terms of $(J,\phi)$. 
When the model is unperturbed, i.e., $\varepsilon=0$, $J$ and $\phi$ are obtained by
\begin{equation}\label{actionangle_unperturbed}
\begin{split}
J = \frac{1}{2\pi} \oint zdx,\qquad 
\phi = \frac{2\pi}{T}t,
\end{split}
\end{equation}
where the integral is taken over one cycle of the periodic curve of \eqref{RBCModel} which preserves
$
H(x,z)=H \;(\textrm{constant})
$
and $T$ is the period of the orbit. 
Then the unperturbed model \eqref{RBCModel} can be rewritten as
\begin{eqnarray*}
\frac{dJ}{dt}=0,\qquad
\frac{d\phi}{dt}=\Lambda(J),
\end{eqnarray*}
where
$
\Lambda(J)=2\pi/T.
$
Then, the perturbed system \eqref{PerHamEq} can be restated in terms of $(J,\phi)$ as
\begin{equation}\label{actionangle_perturbed}
\begin{split}
\frac{dJ}{dt}=\varepsilon f(J,\phi,t),\qquad
\frac{d\phi}{dt}=\Lambda(J)+\varepsilon g(J,\phi,t),
\end{split}
\end{equation}
where
$f(J,\phi,t)=\frac{\partial J}{\partial x}\frac{\partial H_1}{\partial z}+\frac{\partial J}{\partial z}\frac{\partial H_1}{\partial x}$ and $g(J,\phi,t)=\frac{\partial \phi}{\partial x}\frac{\partial H_1}{\partial z}+\frac{\partial \phi}{\partial z}\frac{\partial H_1}{\partial x}$; see, for instance, \cite{Wi1990}.


As was shown in \eqref{ExtHamEq}, the perturbed Hamiltonian system \eqref{actionangle_perturbed} can be written as an autonomous system in terms of $(J,\phi,\theta) \in \mathbb{R} \times S^1 \times S^1$ as
\begin{equation}\label{ActAngle_pertHamSys}
\begin{split}
\frac{dJ}{dt}&=\varepsilon f (J,\phi,\theta),\\[2mm]
\frac{d\phi}{dt}&=\Lambda(J)+\varepsilon g (J,\phi,\theta),\\[2mm]
\frac{d\theta}{dt}&=\omega,
\end{split}
\end{equation}
where $\theta=\omega t + \theta_0$.
Of course, the perturbed Hamiltonian system \eqref{ExtHamEq} with the variables $(x,z,\theta)$ is transformed into the system \eqref{ActAngle_pertHamSys} with the action-angle variables  $(J,\phi,\theta)$.

\end{remark}

\begin{remark}[Poincar\'e-Birkhoff theorem]\rm
Let us consider an invariant curve with action $J$ such that $\Lambda(J)=n/m$ in the unperturbed system, 
where $m$ and $n$ are integers. The Poincar\'e-Birkhoff theorem states that when the system is perturbed, 
$2lm$ of $m$-periodic points appear in the neighborhood of the original invariant curve, where $l$ is some unknown integer. 
In particular,  $l$ of them are to be elliptic and the others are to be hyperbolic; see \cite{Bi1927} and \cite{LiLi1991}.
\end{remark}

\subsection{Symmetries of $n/m$-resonant orbits}
In this subsection, we consider the symmetric properties concerning the $n/m$-resonant orbits, namely, the $m$-periodic orbits with winding number $n$.

We consider the special case of such $n/m$-resonant orbits $\tilde{c}(t)$ in the extended phase space $M \times S^1$ in which $c(t) =\pi(\tilde{c}(t) )$ is symmetric with respect to the horizontal and vertical center lines of a cell, namely $x=(2a+1)\pi/(2k)~ (a \in \mathbb{Z})$ and $z=1/2$, by the following theorem.

\begin{theorem}[Symmetries of $n/m$-resonant orbits]\label{SymTheorem}\rm
Let
\begin{equation*}
\tilde{c}(t):=(x(t),z(t), \theta(t)) \subset M \times S^1, ~ t \in \mathbb{I}=[0,mT]
\end{equation*}
be a $n/m$-resonant orbit such that $\tilde{c}(0) = \tilde{c}(mT) = p$, where $p$ is an $m$-periodic point on $\Sigma^{\theta_0}$.
Then, let 
\begin{equation*}
c(t):=\pi(\tilde{c}(t))
\end{equation*}
be a periodic curve on $M$. If $c(t)$ is symmetric with respect to the horizontal and vertical center lines of a cell, namely $x=(2a+1)\pi/(2k)~ (a \in \mathbb{Z})$ and $z=1/2$, the period $m$ and the winding number $n$ of $c(t)$ are both odd. 
\end{theorem}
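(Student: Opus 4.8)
The idea is to lift the two cell reflections to the extended phase space $\mathcal{M}=M\times S^{1}$, combine them into a single fixed‑point‑free involution that commutes with the flow of \eqref{ExtHamEq}, and then read off the two parities from its $\theta$‑component and from its projection to $M$. Writing $\theta=\omega t+\theta_{0}$ and reducing the time shifts modulo $2\pi$, the symmetric property i) with $a=0$ and the symmetric property iv) of the non‑autonomous system \eqref{PerHamEq} lift to the two maps
\[
S_{z}(x,z,\theta)=(x,\,1-z,\,2\theta_{0}-\theta),\qquad
S_{x}(x,z,\theta)=(2x_{c}-x,\,z,\,2\theta_{0}-\theta+\pi),
\]
where $x_{c}=(2a+1)\pi/(2k)$ is the horizontal centre of the cell carrying $c$. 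Each of $S_{z},S_{x}$ is a time‑reversing symmetry (a reversor) of \eqref{ExtHamEq}: it is an involution and it sends solutions, run backwards in time, to solutions. Consequently their composition
\[
\rho:=S_{x}\circ S_{z},\qquad \rho(x,z,\theta)=(2x_{c}-x,\,1-z,\,\theta+\pi),
\]
is a genuine (time‑direction–preserving) symmetry of \eqref{ExtHamEq}, it satisfies $\rho^{2}=\mathrm{id}$ on $\mathcal{M}$, it has \emph{no} fixed points (because of the factor $\theta\mapsto\theta+\pi$), and under the natural projection $\pi$ it descends to the point reflection $\bar\rho(x,z)=(2x_{c}-x,\,1-z)$ about the cell centre $w_{c}=x_{c}+i/2$.

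Next I would invoke the hypothesis. Since $\rho$ maps solutions of \eqref{ExtHamEq} to solutions, $\rho\circ\tilde{c}$ is again a periodic solution, and its projection $\bar\rho\circ c$ has trace $\bar\rho(\Gamma)$ where $\Gamma:=c(\mathbb{I})$. The assumption that $c$ is symmetric with respect to both $x=x_{c}$ and $z=1/2$ says precisely that $\Gamma$ is invariant under the reflections $R_{x}$ and $R_{z}$, hence under $\bar\rho=R_{x}\circ R_{z}$; so $\rho\circ\tilde{c}$ has the same projected trace $\Gamma$ as $\tilde{c}$. Because the $m$‑periodic point $p$ has minimal period $mT$ for the flow, the periodic orbit of \eqref{ExtHamEq} lying over this trace is determined up to time translation, so $\rho(\tilde{c}(t))=\tilde{c}(t+s)$ for some fixed $s$; applying $\rho$ once more and using $\rho^{2}=\mathrm{id}$ gives $2s\equiv 0\ (\mathrm{mod}\ mT)$, and $s\equiv 0$ is impossible since $\rho$ is fixed‑point free on the orbit circle. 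Hence $\rho(\tilde{c}(t))=\tilde{c}\!\left(t+\tfrac{mT}{2}\right)$ for all $t$.

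Finally I would extract the parities. Comparing $\theta$‑components in $\rho(\tilde{c}(t))=\tilde{c}(t+mT/2)$ and using $\theta(t)=\theta_{0}+\omega t$ yields $\theta(t)+\pi\equiv\theta(t)+m\pi\ (\mathrm{mod}\ 2\pi)$, i.e.\ $(m-1)\pi\equiv 0\ (\mathrm{mod}\ 2\pi)$, so $m$ is odd. Projecting the same identity to $M$ and setting $v(t):=w(t)-w_{c}=(x(t)-x_{c})+i\bigl(z(t)-\tfrac12\bigr)$, the point reflection $\bar\rho$ becomes multiplication by $-1$, so $v\!\left(t+\tfrac{mT}{2}\right)=-v(t)$. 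Over $[0,mT/2]$ the path $v$ runs from $v(0)$ to $-v(0)$, so its argument increases by an odd multiple of $\pi$; over $[mT/2,mT]$ the path equals $-1$ times the first half, which increments the argument by the same amount; hence the total increase of $\arg v$ over one period equals $2\times(\text{odd multiple of }\pi)$, and by the winding‑number formula \eqref{eq:definition_n} (with $w_{c}$ the cell centre, which $c$ avoids by hypothesis) $n=\tfrac{1}{2\pi}\Delta\arg v$ is odd, as claimed.

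\textbf{Main obstacle.} The explicit lift and the two parity computations are routine. The step that requires care is the passage from ``$\Gamma$ is $\bar\rho$‑invariant'' to ``$\rho\circ\tilde{c}$ is a time translate of $\tilde{c}$'', i.e.\ that $\rho$ cannot carry $\tilde{c}$ to a \emph{different} periodic orbit lying over $\Gamma$. I would settle this from uniqueness of solutions together with the fact that $\Gamma$ is traced once per period $mT$ and therefore lifts to the orbit circle in only one way; an equivalent and perhaps cleaner route is to reformulate the hypothesis directly as $S_{z}$‑ and $S_{x}$‑invariance of $\tilde{c}$ itself (so that $\tilde{c}$ meets $\mathrm{Fix}(S_{z})$ and $\mathrm{Fix}(S_{x})$ in exactly two points each) and then run the half‑turn argument on $\tilde{c}$ without any lifting at all.
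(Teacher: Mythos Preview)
Your argument is correct, and the delicate step you flag is exactly the one the paper also leaves implicit. The two proofs rest on the same ingredient---symmetry~v), i.e.\ your $\rho$, which advances time by $T/2$ and rotates the spatial variables by $\pi$ about the cell centre---but they organize it quite differently.

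The paper works pointwise on the projected curve: it fixes $p_{1}=c(0)$, invokes symmetry~i) to locate its $R_{z}$-reflection $p_{2}=c(lT)$, then builds midpoints $p_{3}=c(lT/2)$ and $p_{4}=c((l+m)T/2)$ lying on $z=1/2$, and uses symmetry~v) to pin down a further point $p_{5}$ and show that $p_{3},p_{4}$ are $R_{x}$-reflections of one another. Oddness of $m$ then comes from a contradiction on the residues of $t_{3},t_{4}$ modulo $T$ via property~iv); oddness of $n$ comes from splitting the winding integral at $p_{3}$ and $p_{4}$ and computing $\psi(t_{4})-\psi(t_{3})=\pi$. Your route instead lifts both reflections to reversors $S_{z},S_{x}$ on $\mathcal{M}$, composes them to the single fixed-point-free time-preserving involution $\rho$, and reads off the half-period shift $\rho(\tilde{c}(t))=\tilde{c}(t+mT/2)$ directly from $\rho^{2}=\mathrm{id}$; both parities then drop out in one line each from the $\theta$-component and from $v(t+mT/2)=-v(t)$. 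This is more economical and makes transparent why symmetry~v) is the operative one; the paper's argument, by contrast, keeps the individual symmetries i), iv), v) in play and is more explicit about where on the curve the relevant points sit.

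Regarding your ``main obstacle'': the paper does not address it any more carefully than you do. Its passage from ``$c$ is symmetric w.r.t.\ $z=1/2$'' to ``$p_{2}=c(lT)$ for some integer $l$'' already presupposes that the $S_{z}$-image of $\tilde{c}$ is $\tilde{c}$ itself rather than a distinct periodic orbit with the same projected trace. Your second suggestion---taking the hypothesis to mean $S_{z}$- and $S_{x}$-invariance of the orbit $\tilde{c}$ (equivalently, that $\tilde{c}$ meets $\mathrm{Fix}(S_{z})$ and $\mathrm{Fix}(S_{x})$)---is the cleanest way to close this for both proofs, and is in any case the interpretation under which the theorem is meant to be applied to the numerically observed orbits.
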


\begin{proof}
For the sake of proving this theorem, recall the following symmetric properties i), iv), and v) of the non-autonomous system in \eqref{PerHamEq}, since $c(t) \subset M$ corresponds to the solution curve of the non-autonomous system. 
\begin{equation*}
\begin{split}
&\textrm{i)} \quad \displaystyle  x \mapsto x + \frac{2a\pi}{k},\quad z \mapsto -z + 1, \quad t \mapsto -t+bT,\\[3mm]
&\textrm{iv)} \quad \displaystyle x \mapsto -x + \frac{(2a+1)\pi}{k},\quad z \mapsto z, \quad t \mapsto -t + \biggl(b+\frac{1}{2} \biggr)T, \\[3mm]
&\textrm{v)}\qquad x \mapsto -x + \frac{(2a+1)\pi}{k},\quad z \mapsto -z + 1,\quad t \mapsto t + \biggl(b+\frac{1}{2} \biggr)T,
\end{split}
\end{equation*}

As is shown in Fig.\ref{fig:proofB}, consider an $n/m$-resonant periodic orbit $\tilde{c}(t) \in M \times S^1$ such that $\tilde{c}(0)=\tilde{c}(mT)=p$, 
where $p$ is an $m$-periodic point, and suppose that the periodic curve $c(t) =\pi(\tilde{c}(t) )$ on $M$ has the symmetric properties that $c(t)$ is symmetric with respect to $x=(2a+1)\pi/(2k)~ (a \in \mathbb{Z})$ and $z=1/2$. 
Note that $c(t)$ is partly illustrated in dashed lines to indicate a general curve in Fig.\ref{fig:proofB}, 
which denotes that the dashed lines can have a loop as long as $c(t)$ maintain the symmetric properties.
\medskip

First, we shall prove that $m$ is odd. To do this, let $p_1:=\pi(p)$ such that $p_1=c(0)=c(mT) \in M$ 
and let $p_2 \in M$ be the associated symmetric point with $p_1$ regarding the horizontal center line of a cell, 
namely $z=1/2$. Since $c(t)$ is symmetric with respect to $z=1/2$, 
it follows from property i) that $p_2$ can be expressed as $p_2=c(lT)$, where $l$ is some integer such that $0 \leq l \leq m-1$.

We denote the initial time for $p_1$ and $p_2$ by $t_1=0$ and $t_2=lT$ respectively. 
Then, one can define an intermediate point $p_3$ in a path from $p_1$ to $p_2$ such that $p_3:=c(t_3)$, where
\begin{equation*}
t_3=\frac{t_1 + t_2}{2}=\frac{l}{2}T
\end{equation*}
is the middle time between $t_1$ and $t_2$.
Further, we denote the first return time for $p_1$ as $t_1'=mT$. 
Then, one can define an intermediate point $p_4$ in a path from $p_2$ to $p_1$ such that $p_4:=c(t_4)$, 
where 
\begin{equation*}
t_4=\frac{t_2 + t_1'}{2}=\frac{l+m}{2}T
\end{equation*}
is the middle time between $t_2$ and $t_1'$.
Since $p_1$ and $p_2$ are the points of curve $c(t)$ at $t \equiv 0 ~({\rm mod}\; T)$ 
and that they are symmetric with respect to $z=1/2$, 
it follows from property i) that $p_3$ and $p_4$ lie on the horizontal center line $z=1/2$, as is shown in Fig.\ref{fig:proofB}. 

Next, let $p_5 \in M$ be the associated symmetric point with $p_1$ regarding the vertical center line of a cell, 
namely $x=(2a+1)\pi/(2k)$. Since $c(t)$ is symmetric with respect to the vertical center line, $p_5$ is a point of $c(t)$. 
Furthermore, it follows from property v) that the integration time from $p_2$ to $p_5$ is the half of the period of the orbit, 
namely $mT/2$, since $p_2$ and $p_5$ are symmetric with respect to point $(x,z)=((2a+1)\pi/(2k),1/2)$.
Thus, $p_5$ can be expressed as $p_5=c(t_5)$, where
\begin{equation*}
t_5=t_2+\frac{mT}{2}=\left( l + \frac{m}{2} \right) T.
\end{equation*}
Then, the integration times from $p_1$ to $p_3$ and $p_4$ to $p_5$ become the same as is shown below.
\begin{eqnarray*}
\begin{split}
t_3 - t_1 = t_5 - t_4 = \frac{l}{2}T
\end{split}
\end{eqnarray*}
Since $p_1$ and $p_5$ are symmetric with respect to the vertical center line, 
it follows from property iv) that $p_3$ and $p_4$ are also symmetric with respect to the vertical center line, 
as is shown in Fig.\ref{fig:proofB}. 

Now, we prove by contradiction that period $m$ is an odd number. To do this, let us assume that $m$ is an even number. 
Then, time $t_3$ and $t_4$ become $t_3 \equiv t_4 \equiv 0$ when $l$ is even, 
while they become $t_3 \equiv t_4 \equiv T/2$ when $l$ is odd.
However, since $p_3$ and $p_4$ are symmetric with respect to $x=(2a+1)\pi/(2k)$, 
it follows from property iv) that there are only two cases; One is the case when $t_3 \equiv 0$ and $t_4 \equiv T/2$, 
and the other is the case when $t_3 \equiv T/2$ and $t_4 \equiv 0$. 
Therefore, the assumption that $m$ is an even number is not correct.
Thus, it is proved that $m$ is an odd number. 

\begin{figure}[H]
\begin{center}
\includegraphics[scale=0.48]{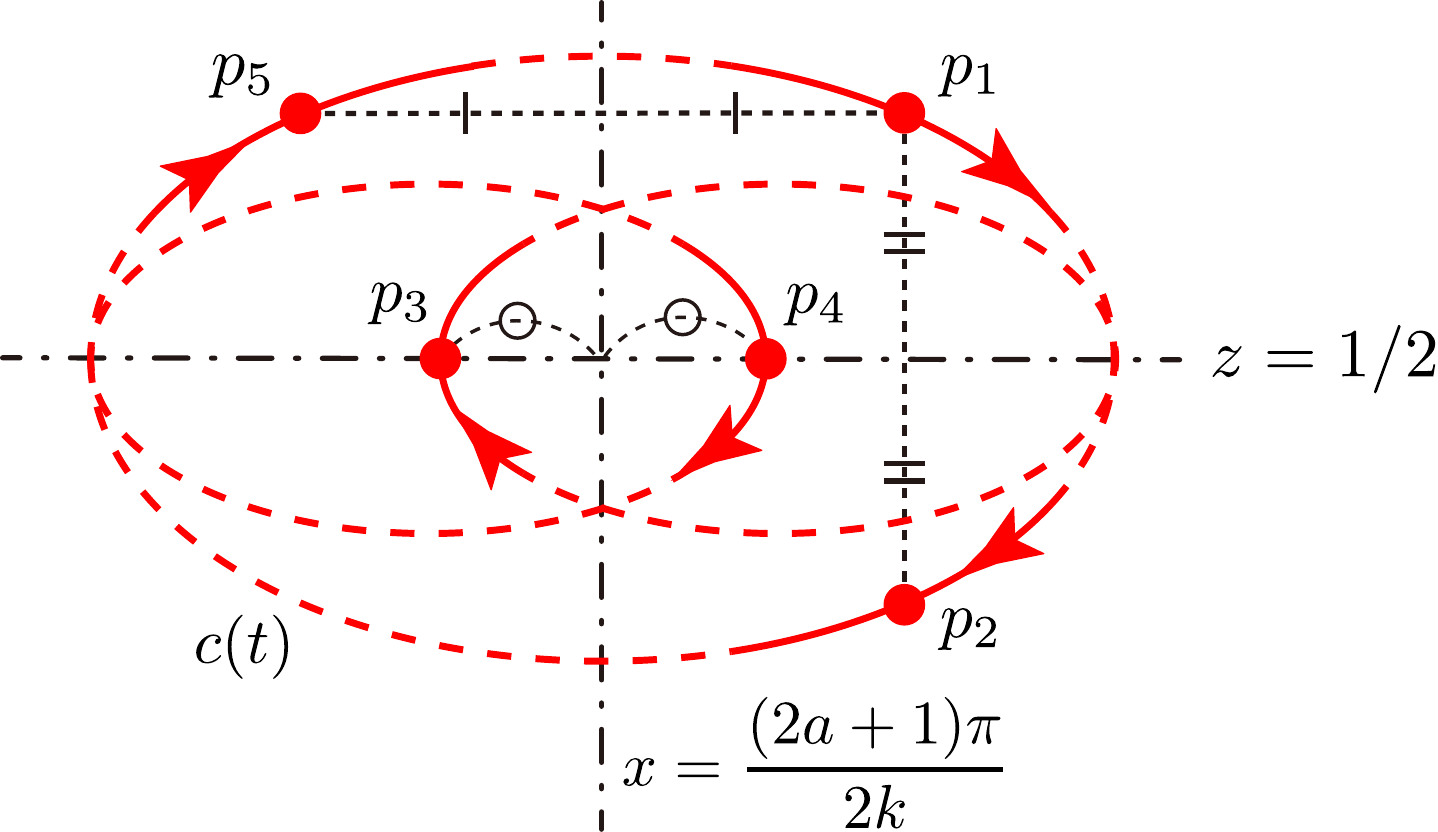}
\caption{Points $p_1,p_2,p_3$ and $p_4$ on curve $c(t)$}
\label{fig:proofB}
\end{center}
\end{figure}
\medskip

Next, we shall prove that $n$ is odd. Recall that the winding number $n$ of a periodic orbit $c(t)$ is given by \eqref{eq:definition_n}, where $c(t)$ is regarded as a closed curve $c(t)=x(t) + iz(t)$ in $\mathbb{C}$ and the interval of integration can be divided as
\begin{eqnarray*}
n &=& \frac{1}{2\pi i} \oint_{c(t)} \frac{dw}{w-w_c}\\
&=& \frac{1}{2\pi i} \left( \int_{c_{3,4}(t)} \frac{dw}{w-w_c} + \int_{c_{4,3}(t)} \frac{dw}{w-w_c} \right). 
\end{eqnarray*}
Here, $c_{3,4}(t)$ and $c_{4,3}(t)$ respectively denote the part of curve $c(t)$ from $p_3$ to $p_4$ and vice versa. 
From assumption, note that $c(t)$ is symmetric with respect to $z=1/2$ and also that $p_3$ and $p_4$ lie on $z=1/2$, and it follows
\begin{equation*}
 \int_{c_{3,4}(t)} \frac{dw}{w-w_c} = \int_{c_{4,3}(t)} \frac{dw}{w-w_c},
\end{equation*}
where $w=x+iz$ is an arbitrary point on $c(t)$ and $(x_c,z_c)=((2a+1)\pi/(2k),1/2)$ is set as a fixed point.
Therefore, 
\begin{eqnarray*}
n &=& \frac{1}{\pi i} \int_{c_{3,4}(t)} \frac{dw}{w-w_c}\\
&=& \frac{1}{\pi i} \{ \ln (w(t_4)-w_c) - \ln (w(t_3)-w_c) \}.
\end{eqnarray*}

Now, we rewrite a point $w=x+iz$ on $c(t)$ in the polar coordinates as 
\begin{equation*}
w(t)-w_c=r(t)e^{i \psi (t)},
\end{equation*}
where $r=|w|=\sqrt{x^2+ z^2}\geq 0$ and $\psi=\mathrm{arg}\, w=\arctan (z/x)$. 
Since $p_3$ and $p_4$ lie on $z=1/2$ and are symmetric with each other with respect to $x=(2a+1)\pi/(2k)$, 
\begin{eqnarray*}
r(t_3) = r(t_4),\\
\psi(t_4) - \psi(t_3) = \pi,
\end{eqnarray*}
Therefore, 
\begin{eqnarray*}
n &=& \frac{1}{\pi i} \{ \ln r(t_4) + i(\psi(t_4)+2l'\pi) - \ln r(t_3) - i(\psi(t_3)+2l''\pi) \}\\
&=& 2(l' - l'') +1,
\end{eqnarray*}
where $l', l'' \in \mathbb{Z}$. 
Hence, it is proved that $n$ is an odd number.
Thus, theorem is proved.
\end{proof}

As the theorem states, we can see in Fig.\ref{fig:prd_orbit_MP1_3_5_7_11} that the periodic orbits of which projection 
is symmetric with respect to the horizontal and vertical center lines of the cell have odd period $m$ and 
winding number $n$. Furthermore, the following corollary can be stated from Theorem \ref{SymTheorem}. 

\begin{corollary}\label{corollary}\rm
If the period $m$ or the winding number $n$ of a periodic orbit $\tilde{c}(t)$ is an even number, there appear one or three more 
$n/m$-resonant orbits of which projection is symmetric with $c(t)=\pi(\tilde{c}(t))$ 
with respect to either horizontal or vertical center lines of a cell.
\end{corollary}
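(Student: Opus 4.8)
The plan is to deduce the corollary as a counting statement resting on the contrapositive of Theorem~\ref{SymTheorem} together with the finite symmetry group of the model. First I would restate Theorem~\ref{SymTheorem} in the form: if the projected curve $c(t)=\pi(\tilde{c}(t))$ is symmetric with respect to \emph{both} the horizontal center line $z=1/2$ and a vertical center line $x=(2a+1)\pi/(2k)$, then $m$ and $n$ are both odd. Hence, whenever $m$ or $n$ is even, $c(t)$ fails to be symmetric with respect to at least one of these two lines.

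Next I would organize the relevant symmetries of \eqref{PerHamEq} into a group action. Writing $\sigma_{\mathrm H}$, $\sigma_{\mathrm V}$, $\sigma_{\mathrm{HV}}$ for the transformations i), iv), v) (with $a=b=0$ for definiteness), whose spatial parts $\sigma_{\mathrm H}^{\mathrm{sp}}$, $\sigma_{\mathrm V}^{\mathrm{sp}}$, $\sigma_{\mathrm{HV}}^{\mathrm{sp}}$ are respectively the reflection across $z=1/2$, the reflection across $x=\pi/(2k)$, and the point reflection through the cell center $(\pi/(2k),1/2)$, these together with the identity generate a group $G\cong\mathbb{Z}_2\times\mathbb{Z}_2$ (note $\sigma_{\mathrm{HV}}^{\mathrm{sp}}=\sigma_{\mathrm V}^{\mathrm{sp}}\circ\sigma_{\mathrm H}^{\mathrm{sp}}$). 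Since each of i), iv), v) carries a solution curve of \eqref{PerHamEq} to another solution curve, $G$ acts on the set of periodic orbits, and every nontrivial element sends an $m$-periodic orbit to an $m$-periodic orbit with the \emph{same} winding number $n$: a reflection reverses the traversal orientation of the closed curve, but this is cancelled by the time reversal $t\mapsto -t+(\,\cdot\,)T$ in i) and iv), while v) carries only a half-period shift and is already orientation preserving. Consequently the $G$-orbit of $\tilde{c}$ consists of $n/m$-resonant orbits whose projections are $c$, $\sigma_{\mathrm H}^{\mathrm{sp}}(c)$, $\sigma_{\mathrm V}^{\mathrm{sp}}(c)$, $\sigma_{\mathrm{HV}}^{\mathrm{sp}}(c)$, i.e.\ reflections of $c$ across the center lines or the point reflection thereof.

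Then I would apply the orbit--stabilizer relation $|G\cdot\tilde{c}|=|G|/|\mathrm{Stab}(\tilde{c})|\in\{1,2,4\}$. If this orbit had size $1$, then $\sigma_{\mathrm H}$ and $\sigma_{\mathrm V}$ would both fix $\tilde{c}$, hence $c$ would be symmetric with respect to both center lines, and Theorem~\ref{SymTheorem} would force $m$ and $n$ odd, contrary to the hypothesis. Therefore $|G\cdot\tilde{c}|\in\{2,4\}$, so the $G$-orbit of $\tilde{c}$ contains exactly one or exactly three periodic orbits besides $\tilde{c}$ itself, and by the previous paragraph each of them projects to a reflection of $c$ (or, in the size-$4$ case, of one of the other copies) across a center line of the cell. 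This is precisely the content of the corollary.

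The step I expect to be the main obstacle is the careful bookkeeping that turns ``a distinct element of $G$'' into ``a distinct periodic orbit'', in particular ruling out spurious coincidences between the four candidate orbits. The tool for this is the standard flow dichotomy: $\sigma(\tilde{c})$ is again an invariant set of the flow \eqref{ExtHamEq}, so $\sigma(\tilde{c})=\tilde{c}$ iff the two sets intersect, and projecting by $\pi$ shows this holds exactly when $c$ enjoys the corresponding symmetry; this identifies $\mathrm{Stab}(\tilde{c})$ with the symmetry type of $c$ and makes the count ``one or three more orbits'' exact, including the borderline case in which $c$ has only the point symmetry $\sigma_{\mathrm{HV}}^{\mathrm{sp}}$. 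A minor technical point to record alongside this is that the half-period time shifts in iv) and v) move the natural Poincar\'e section to some $\Sigma^{\theta_0'}$; since all sections $\Sigma^{\theta_0}$ are conjugate along the flow, $m$-periodic points map to $m$-periodic points, so the transformed curves are genuine $m$-periodic orbits through $m$-periodic points, as the statement requires.
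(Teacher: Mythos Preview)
Your argument is correct and shares the paper's logical skeleton: take the contrapositive of Theorem~\ref{SymTheorem} to conclude that $c(t)$ fails at least one of the two reflection symmetries, then use the model symmetries i) and iv) to manufacture the additional $n/m$-resonant orbits. The paper simply splits into the two cases ``$c$ lacks exactly one line symmetry'' and ``$c$ lacks both'' and asserts the counts one and three directly from i) and iv). Your orbit--stabilizer packaging of the same idea is more systematic: it makes the distinctness of the new orbits explicit, it cleanly absorbs the borderline case where $c$ is point-symmetric about the cell center (stabilizer $\{\mathrm{id},\sigma_{\mathrm{HV}}\}$, hence one extra orbit) which the paper's two-case split does not isolate, and it records why each symmetry preserves both the period $m$ and the winding number $n$---a point the paper's proof takes for granted.
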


\begin{proof}
Considering the contraposition of Theorem \ref{SymTheorem}, if $m$ or $n$ is an even number, 
$c(t)=\pi(\tilde{c}(t))$ is not symmetric with respect to either horizontal or vertical center lines of a cell.
If $c(t)$ is not symmetric with only one of the two lines, it follows from property i) or iv) that there appear one more $n/m$-resonant orbit
of which projection is symmetric with $c(t)$ with respect to either of the two lines.
If $c(t)$ is not symmetric with both of the two lines, it follows from property i) and iv) that there appear three more 
$n/m$-resonant orbit of which projection is symmetric with $c(t)$ with respect to either of the two lines. 
Thus, the corollary is proved.
\end{proof}

As the corollary states, it is observed in Fig.\ref{fig:prd_point_orbit_MP6_7_8_9}, Fig.\ref{fig:prd_point_orbit_MP4_12_14} 
and Fig.\ref{fig:prd_orbit_sym} that another symmetric $n/m$-resonant orbit appear when the period $m$ or 
the winding number $n$ of the periodic orbit is even. 


\section{Bifurcations of periodic orbits}
\label{Sec:bifurcation}
As already mentioned, the amplitude of the perturbation of the Rayleigh-B\'enard convection increases 
when the Rayleigh number $Ra$ is gradually raised from the critical number $Ra_t$ 
by increasing the temperature difference between the top and bottom planes. 
In this section, we study the bifurcations of periodic orbits in the perturbed Hamiltonian system 
by varying the parameter $\varepsilon$, i.e., the amplitute of the perturbation in order to clarify 
how the fluid transport changes with $\varepsilon$. 
We first describe the global structure of $\varepsilon$-bifurcation diagram 
and then clarify the structures of the bifurcations associated with the main KAM island $I_1$ and the surrounding islands
$I_2, I_3,$ and $I_4$, and furthermore those associated with other islands.

\subsection{Structure of $\varepsilon$-parameter bifurcation}
\paragraph{Computation of one-parameter bifurcation diagrams.}
In the numerical computations in \S\ref{Sec:Poincare} and \S\ref{Sec:symmetry}, we have analyzed 
the periodic points and the associated orbits when the amplitude $\varepsilon$ of the perturbation is set to $\varepsilon=0.1$. 
In order to obtain the $\varepsilon$-parameter bifurcation diagram of the periodic points in space $(x,z,\varepsilon)$, we shall compute to detect the elliptic and hyperbolic periodic points on the Poincar\'e section of one single cell for $\varepsilon=0.001, 0.002, \cdots ,0.5$ in the same way for $\varepsilon=0.1$. The other parameters of the convection and the initial condition of $\theta$ are set to
$A=\pi, k=\pi, T=1/\pi$, and $\theta_0=0$ as the same in Fig.\ref{fig:pm_prd}.  Fig.\ref{fig:bifurcation_all} shows the detected bifurcation diagram from diagonal and $z$ direction, 
where the periodic points with period $m \leq 15$ are depicted. 
Note that the computations are conducted independently for each $\varepsilon$. 
The color of the plots indicate the period $m$ of each point, however the types of the points, 
namely elliptic or hyperbolic, are not illustrated in Fig.\ref{fig:bifurcation_all}. 
We will depict them in the figures shown latter. 

\begin{figure}[H]
\begin{center}
{\includegraphics[scale=0.25]{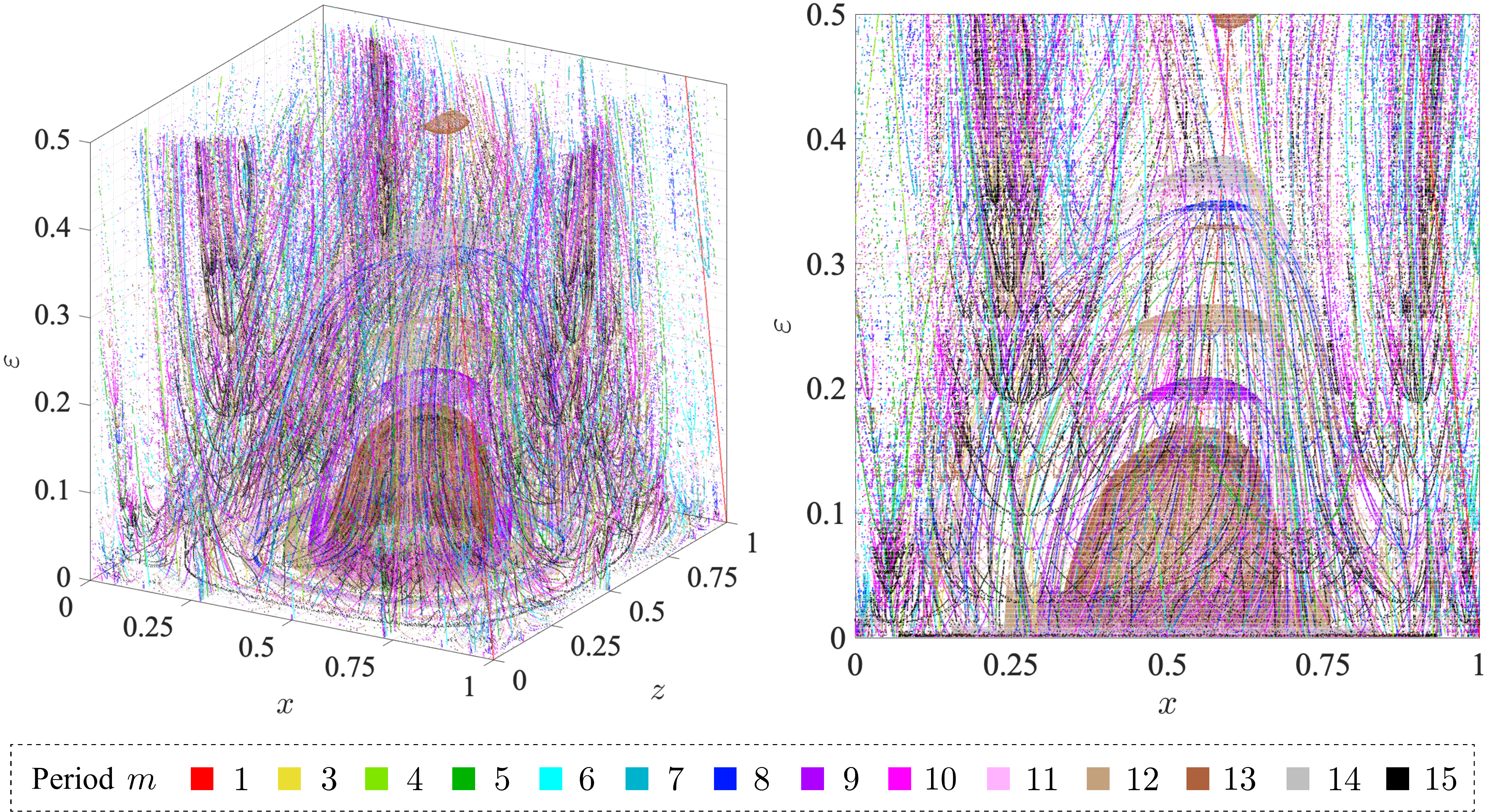}}
\caption{$\varepsilon$-parameter bifurcation diagram of periodic points}
\label{fig:bifurcation_all}
\end{center}
\end{figure}

\paragraph{Periodic points on the Poincar\'e section with some parameters $\varepsilon$.}
Before we take a look at the bifurcation diagram let us show how the Poincar\'e maps and 
the detected periodic points vary with the amplitude $\varepsilon$ of the perturbation. 
Fig.\ref{fig:pm_prd_epsilon} illustrates the image of the Poincar\'e section by Poincar\'e map $P^{\theta_0}_\varepsilon$ 
and the periodic points for $\varepsilon=0.2, 0.3, 0.4,$ and 0.5. 
As can be seen, the islands of KAM tori, which correspond to stable transport regions, 
exist for a while when $\varepsilon$ is increased.
However, when we increase it furthermore, the area of the islands and the number of elliptic periodic points 
gradually decrease. Especially, islands $I_2, I_3$ and $I_4$ seem to disappear by $\varepsilon=0.5$. 
In contrast, it is apparent that the area of chaotic regions increases. 
This denotes that the periodic orbits in the system of \eqref{ExtHamEq} bifurcate one after another 
and lead to chaotic orbits when $\varepsilon$ is increased. 

\paragraph{Bifurcations of 1 and 3-periodic points.}
Now we take a closer look at the bifurcation diagram detected in our numerical computation. 
Since it is too complicated to understand the structure of the diagram from Fig.\ref{fig:bifurcation_all}, 
let us first focus on the bifurcations of 1 and 3-periodic points, which are illustrated in Fig.\ref{fig:bifurcation_MP1_3} 
from diagonal and $z$ direction. 
Here, the branches of elliptic and hyperbolic periodic points are depicted in thick and thin lines respectively. 
In addition, we especially depict the 1 and 3-periodic points with the image of the Poincar\'e section 
at $\varepsilon=0.1, 0.4$ in Fig.\ref{fig:pm_prd_epsilon_MP1_3} so that we can clearly see the periodic points. 
As is shown in Fig.\ref{fig:pm_prd_epsilon_MP1_3}, an elliptic 1-periodic point and three hyperbolic 3-periodic points 
appear at the center and the corners of the main island $I_1$ respectively. 

\begin{figure}[H]
\begin{center}
\subfigure[$\varepsilon=0.2$]
{\includegraphics[scale=0.26]{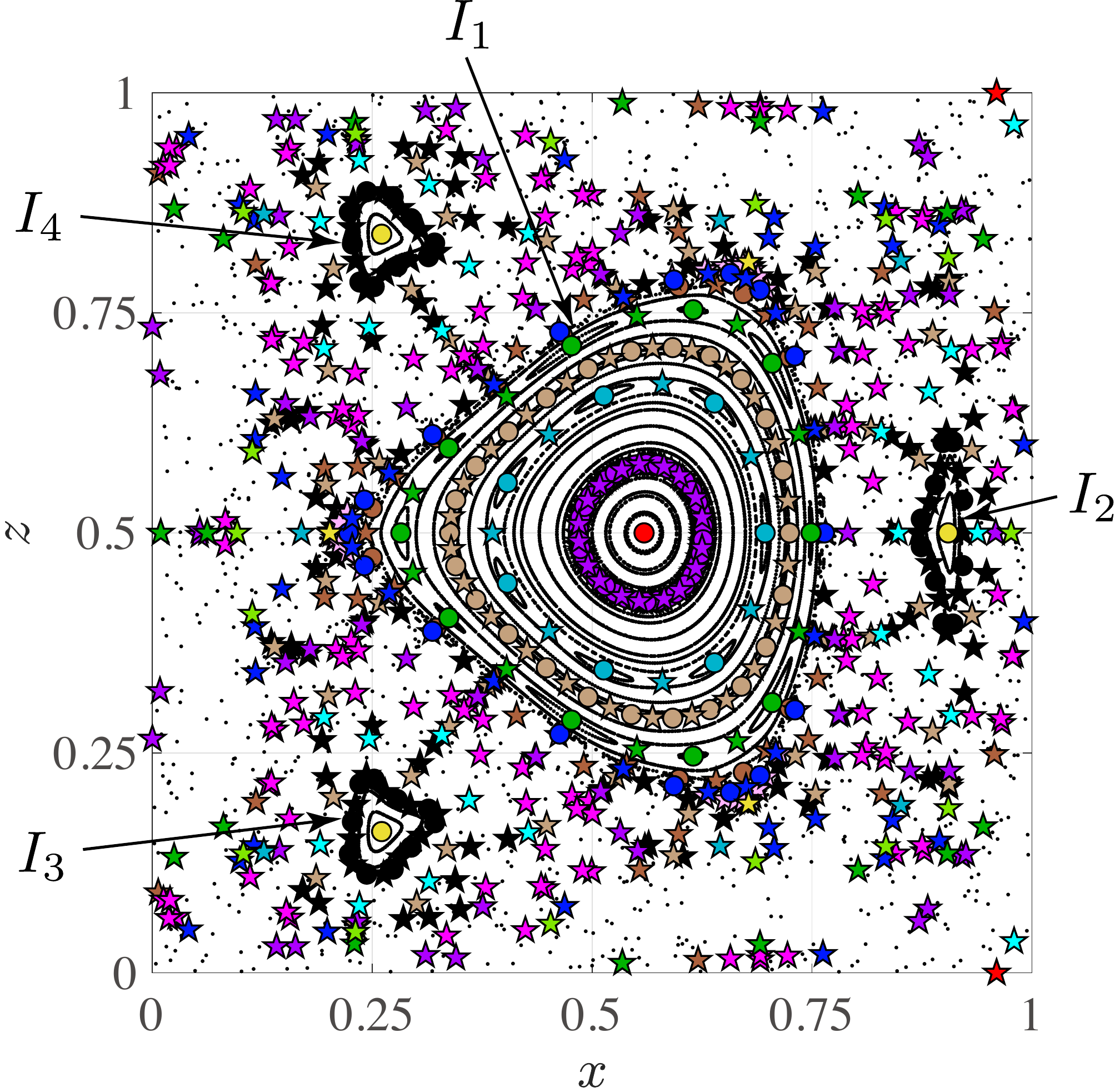}}
\hspace{3mm}
\vspace{-1mm}
\subfigure[$\varepsilon=0.3$]
{\includegraphics[scale=0.26]{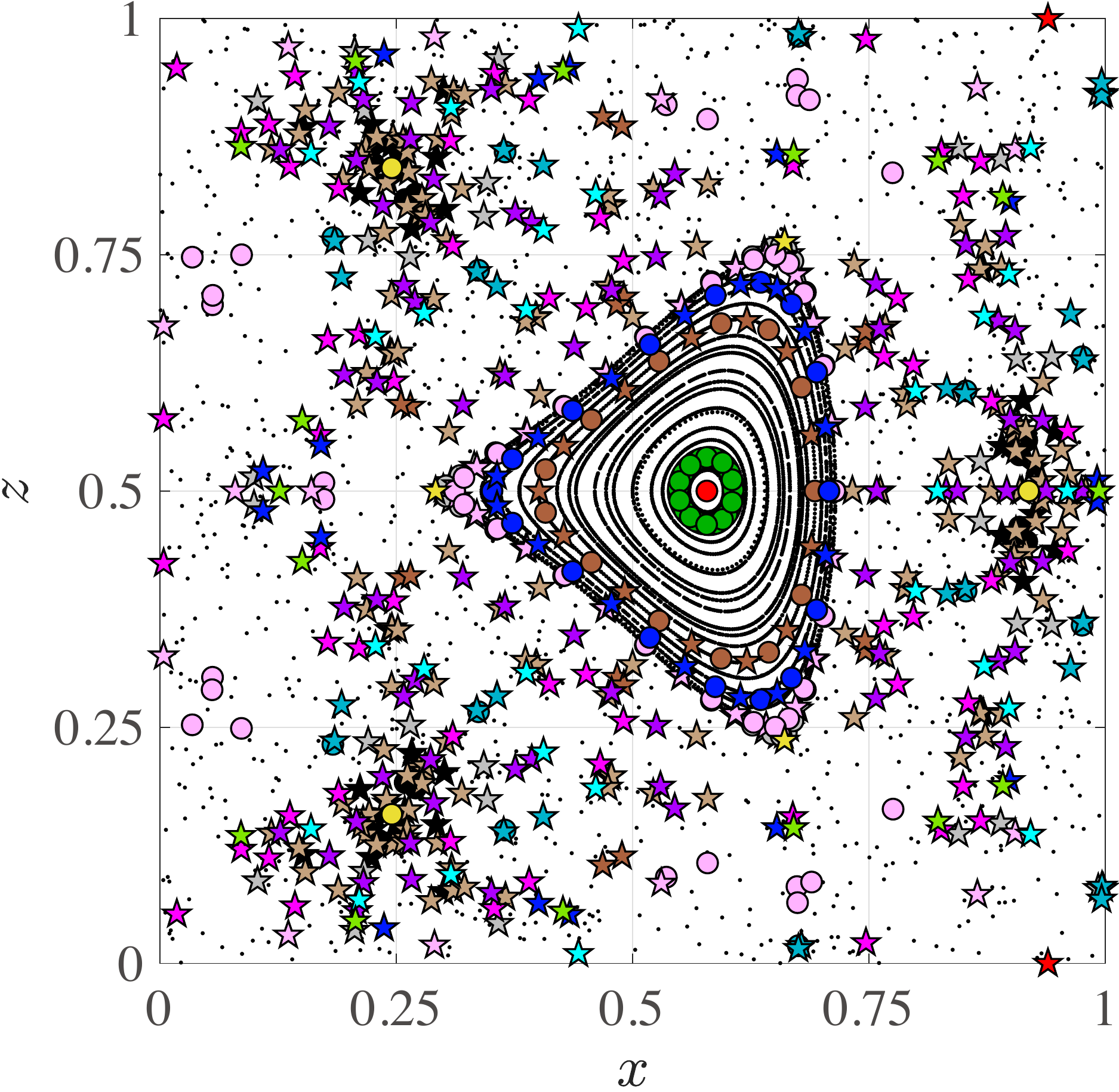}}
\vspace{-1mm}
\subfigure[$\varepsilon=0.4$]
{\includegraphics[scale=0.26]{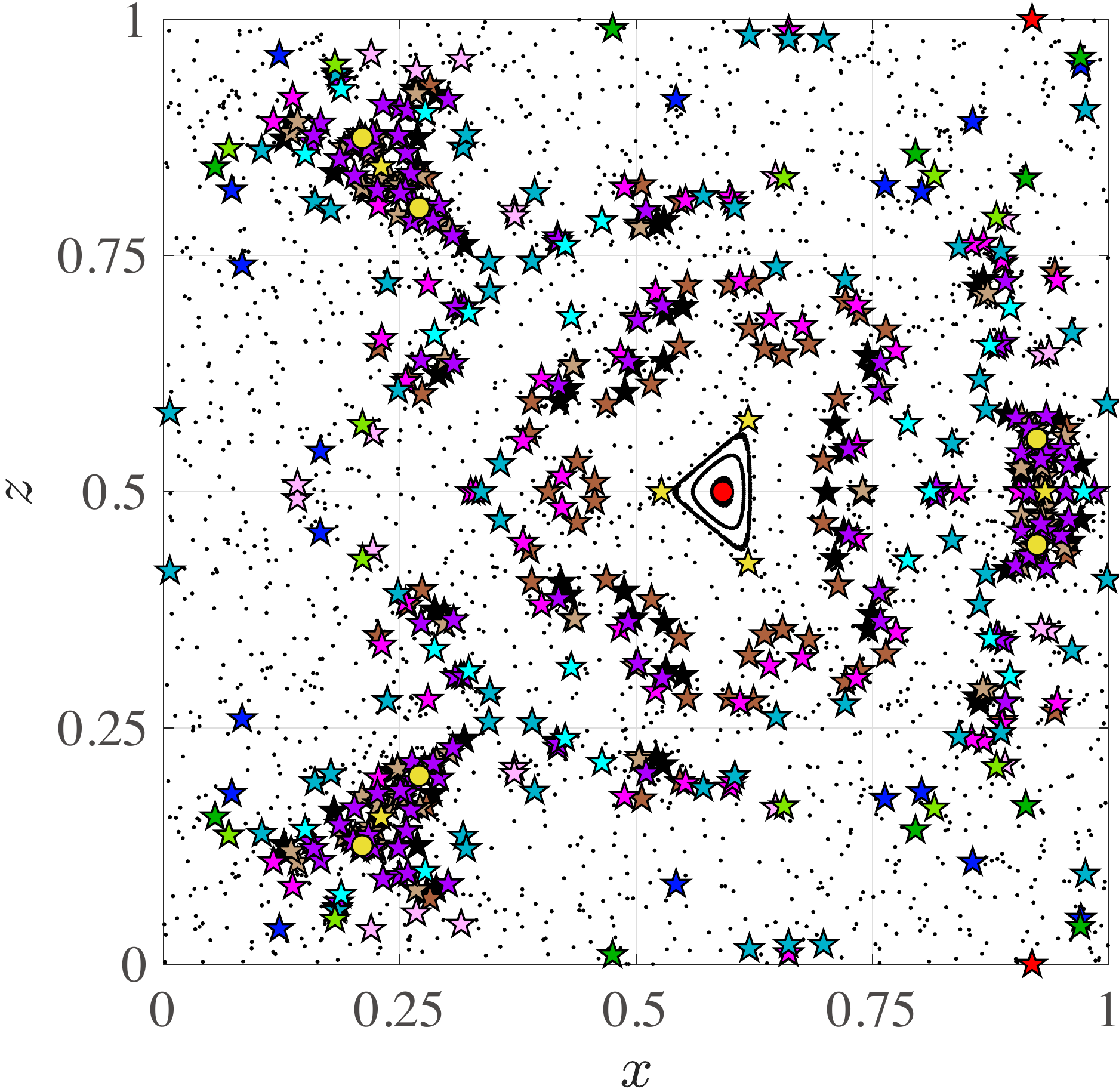}}
\hspace{3mm}
\subfigure[$\varepsilon=0.5$]
{\includegraphics[scale=0.26]{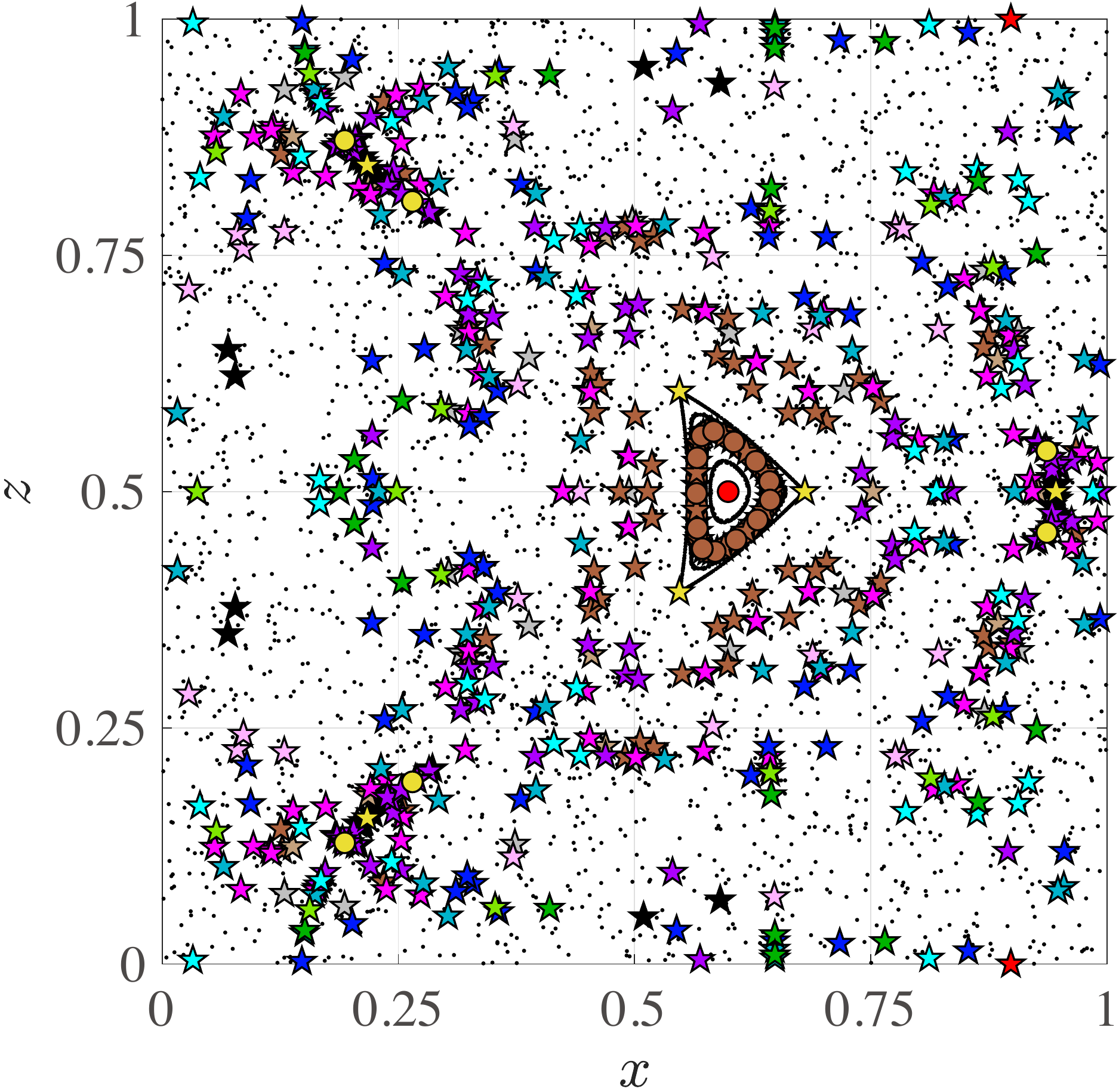}}
\subfigure
{\includegraphics[scale=0.3]{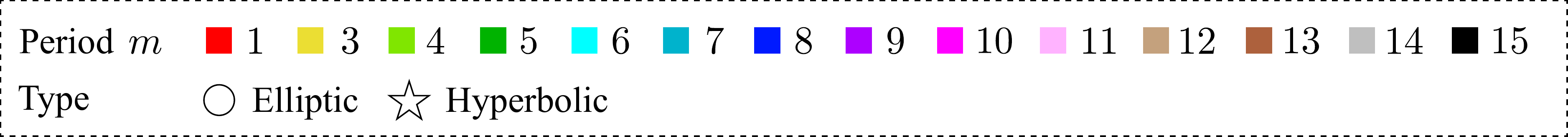}}
\vspace{-1mm}
\caption{Elliptic and hyperbolic periodic points ($\varepsilon=0.2,0.3,0.4,0.5$)}
\label{fig:pm_prd_epsilon}
\end{center}
\end{figure}

\vspace{-4mm}

\begin{figure}[H]
\begin{center}
{\includegraphics[scale=0.25]{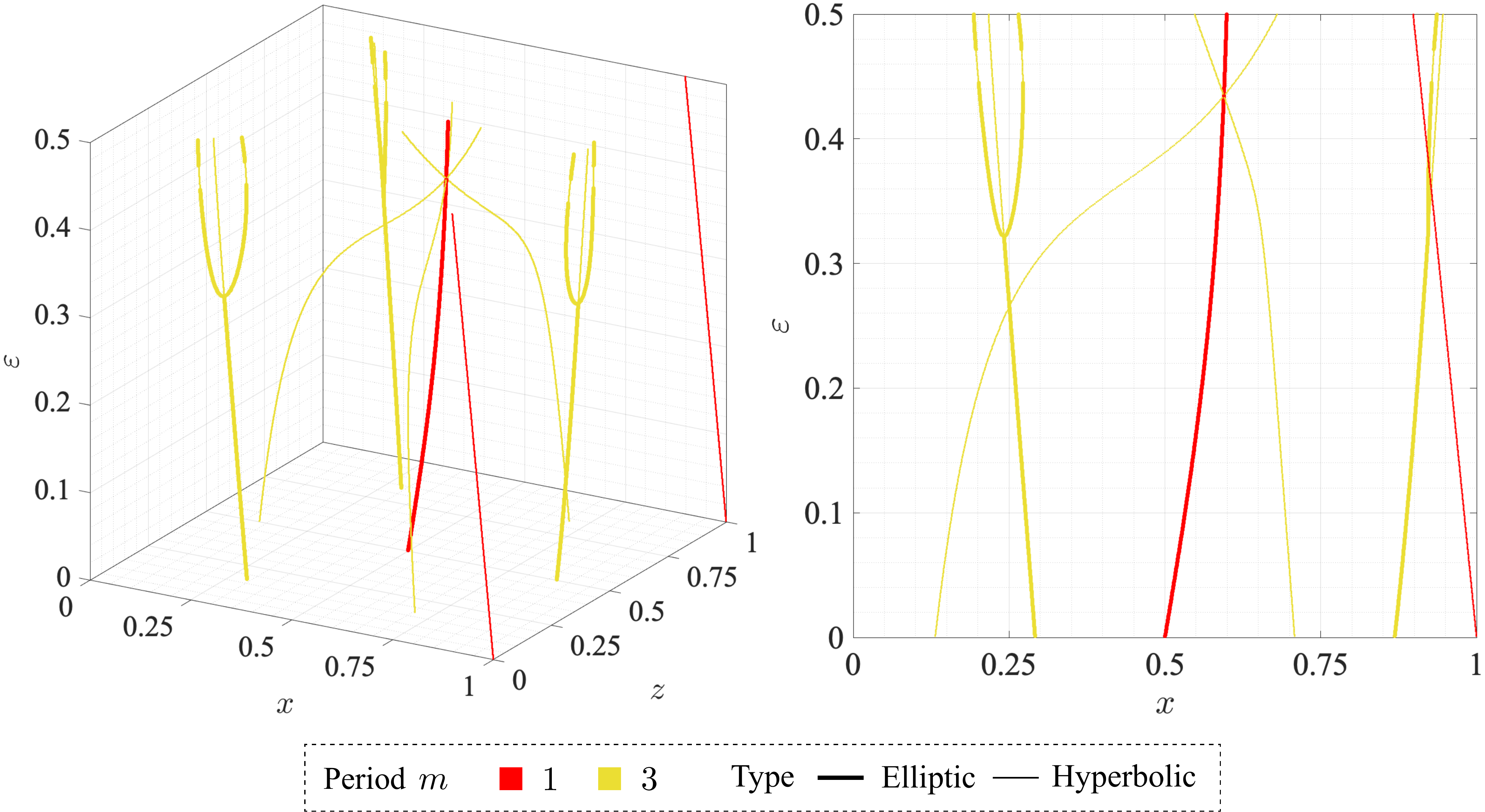}}
\caption{Bifurcations of 1 and 3-periodic points}
\label{fig:bifurcation_MP1_3}
\end{center}
\end{figure}

\begin{figure}[H]
\begin{center}
\subfigure[$\varepsilon=0.1$]
{\includegraphics[scale=0.26]{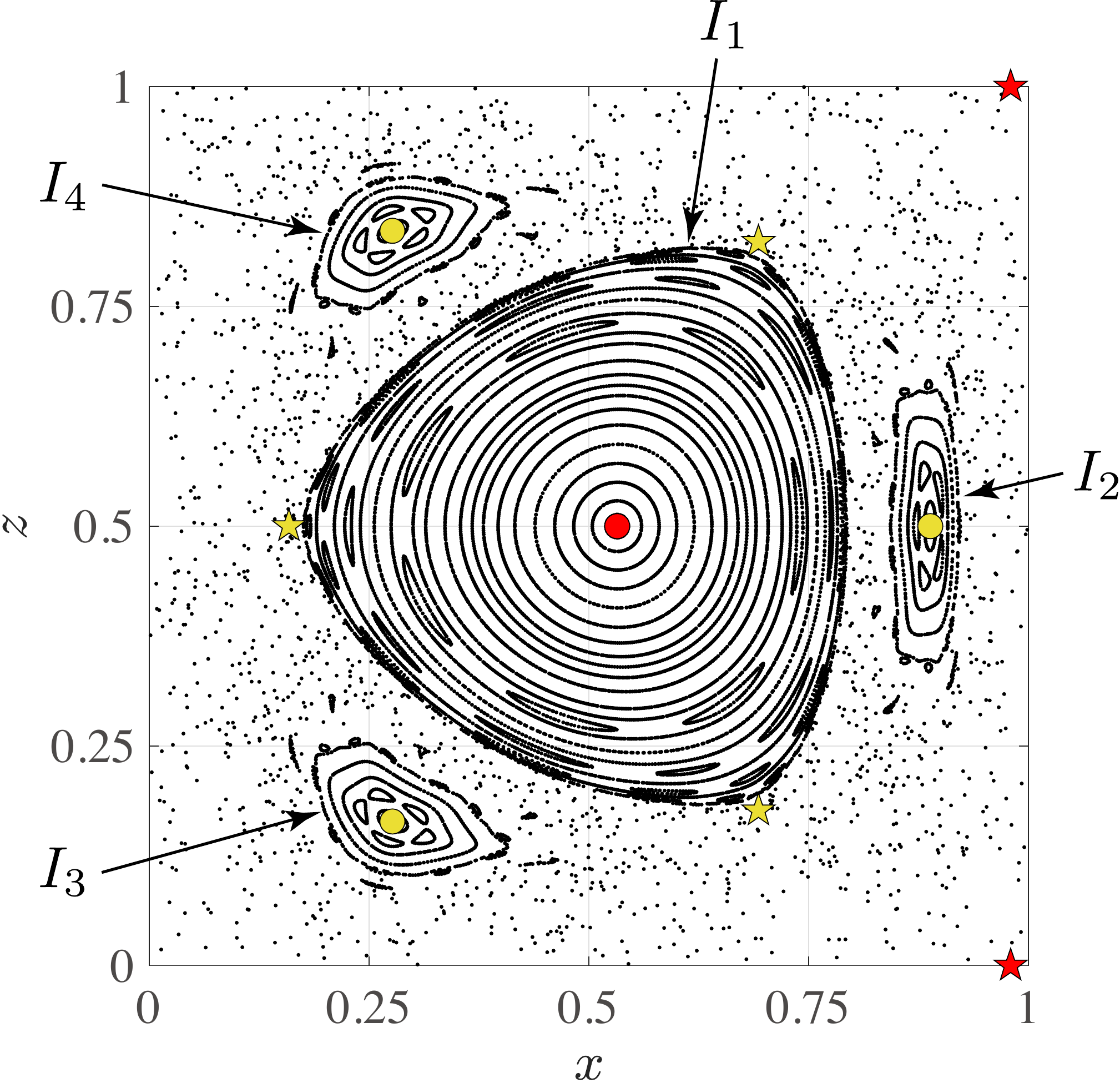}}
\subfigure[$\varepsilon=0.4$]
{\includegraphics[scale=0.26]{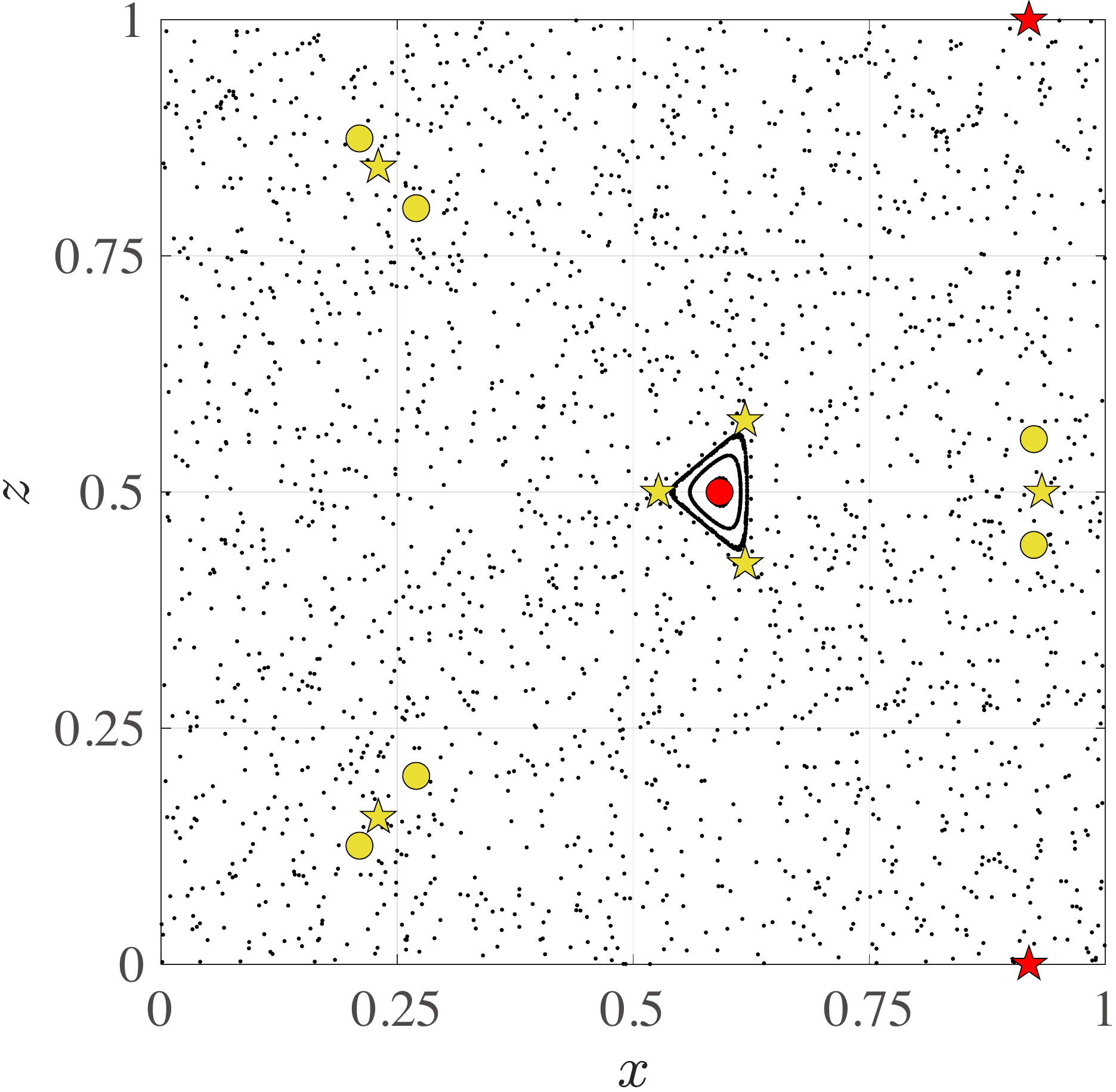}}
\subfigure
{\includegraphics[scale=0.31]{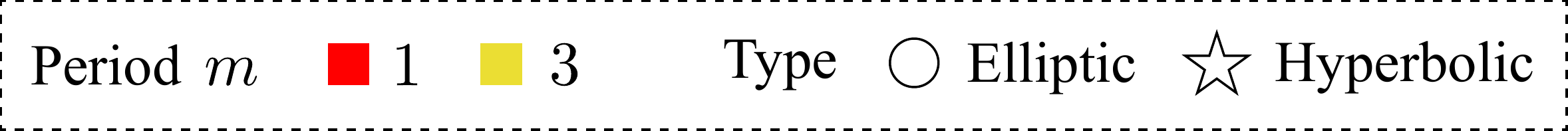}}
\caption{1 and 3-periodic points at $\varepsilon=0.1, 0.4$}
\label{fig:pm_prd_epsilon_MP1_3}
\end{center}
\end{figure}

Thus, the thick red branch of elliptic 1-periodic points in the middle of Fig.\ref{fig:bifurcation_MP1_3} and the three thin yellow branches of hyperbolic 3-periodic points, which cross with the red branch, correspond to those of the periodic points 
associated with $I_1$. Furthermore, Fig.\ref{fig:pm_prd_epsilon_MP1_3} indicates that an elliptic 3-periodic point appear
in the middle of each island $I_2, I_3,$ and $I_4$ when $\varepsilon$ is small but vary to two elliptic and one hyperbolic 
3-periodic points when $\varepsilon$ is increased. 
Thus, the three fork-shaped branches of elliptic 3-periodic points in Fig.\ref{fig:bifurcation_MP1_3}
correspond to those of the periodic points associated with islands $I_2, I_3,$ and $I_4$. 
The two straight branches of hyperbolic 1-periodic points on the wall of Fig.\ref{fig:bifurcation_MP1_3} 
are those of the 1-periodic points on the upper and lower boundaries of the convection. 

\paragraph{Bifurcations associated with KAM islands $I_1, I_2, I_3,$ and $I_4$.}
Next, we focus on the bifurcations associated with KAM islands $I_1, I_2, I_3,$ and $I_4$. 
First, we take a look at those of the main island $I_1$. As is shown in Fig.\ref{fig:pm_prd_epsilon}, 
the periodic points in $I_1$ appear along the KAM curves around an elliptic 1-periodic point. 
Thus, the mountainous structure depicted in Fig.\ref{fig:bifurcation_I1} may correspond to the bifurcations 
associated with $I_1$. Though the type of the periodic points are not illustrated here, 
it follows that many branches of various periods gather to the branch of the elliptic 1-periodic points. 
Especially, it is observed that the three branches of hyperbolic 3-periodic points at the corners of island $I_1$
appear around the outer side of the mountainous structure. 
Furthermore, since they cross with the branch of 1-periodic points at around $\varepsilon=0.432$, 
it seems that $I_1$ once disappear when the amplitude $\varepsilon$ is increased. 
We will analyze the bifurcations associated with $I_1$ more in detail in \S\ref{Sec:bifurcation_I1}.
Then, let us take a look at the bifurcations of islands $I_2, I_3,$ and $I_4$. 
It is found in our numerical computation that the bifurcations shown in Fig.\ref{fig:bifurcation_I234} 
may correspond to those associated with $I_2, I_3,$ and $I_4$. 
As can be seen, many branches of $3l$-periodic points $(l=2,3,4,5)$ grow from the fork-shaped 
branch of 3-periodic points to form the shapes of three broom tips standing upside down as in Fig.\ref{fig:bifurcation_MP1_3} and create three tree-like structures. 
We will clarify the structure of the bifurcations more in detail in \S\ref{Sec:bifurcation_I234}. 


\begin{figure}[H]
\begin{center}
{\includegraphics[scale=0.25]{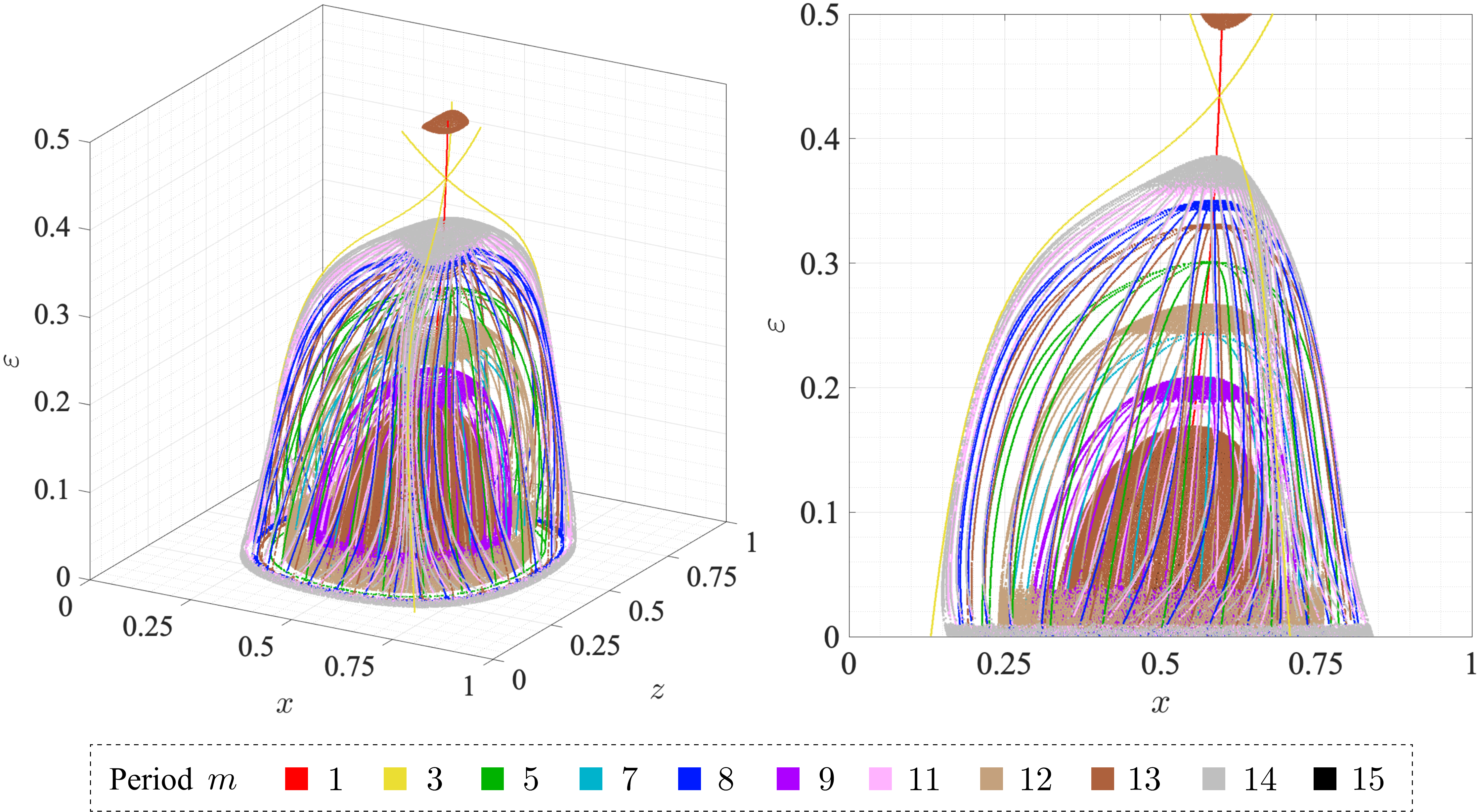}}
\caption{Mountainous structure in bifurcations of island $I_1$}
\label{fig:bifurcation_I1}
\end{center}
\end{figure}

\vspace{-2mm}

\begin{figure}[H]
\begin{center}
{\includegraphics[scale=0.25]{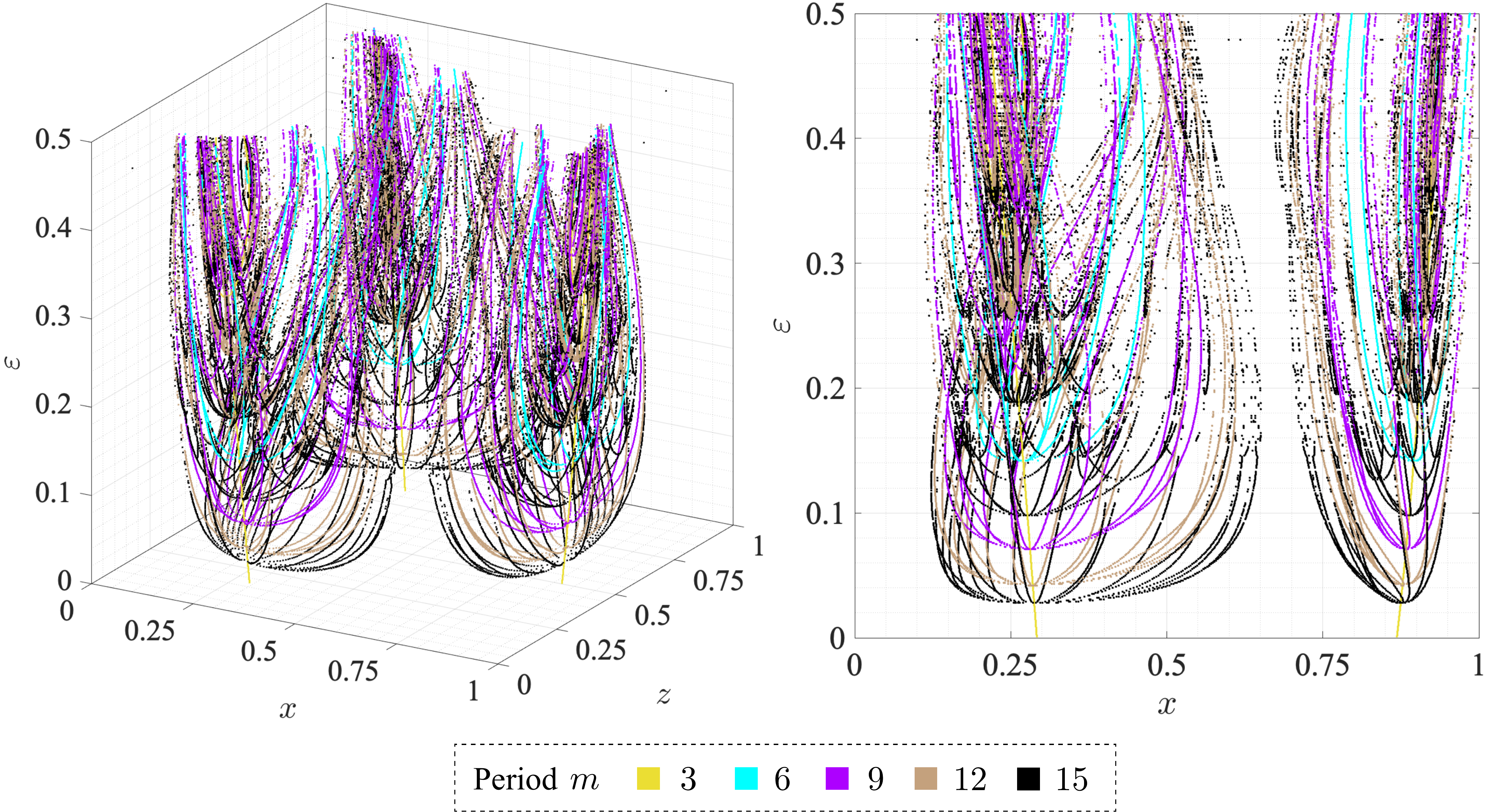}}
\caption{Bifurcations associated with island $I_2, I_3,$ and $I_4$}
\label{fig:bifurcation_I234}
\end{center}
\end{figure}


\subsection{Numerical algorithm for detecting bifurcation points}\label{Sec:comp}
Before we clarify the global structures of the $\varepsilon$-bifurcation diagram more in detail, let us briefly review 
the classification of bifurcations of periodic points and describe how each bifurcation point is detected in numerical computations. 

\paragraph{Classification of bifurcation points.}
Recall that multipliers $\mu$ of an $m$-periodic point are eigenvalues of the Jacobian matrix 
of the Poincar\'e return map
\begin{equation*}
J_\varepsilon(\mathbf{x})=\frac{\partial (P^{\theta_0}_\varepsilon)^m (\mathbf{x})}{\partial \mathbf{x}} \Biggr|_{\mathbf{x}=\mathbf{x}_0}, 
\end{equation*}
where $\mathbf{x}_0$ indicates the $m$-periodic point. 
According to the multipliers  $\mu$ of the $m$-periodic point at the bifurcation point (see, for instance, \cite{Ku2004}), 
the bifurcations of $m$-periodic points are classified into the following types:
\begin{itemize}
\item Fold bifurcation (also called, tangent or saddle-node bifurcation): $\mu=1$
\item Flip bifurcation (also called, period-doubling bifurcation): $\mu=-1$
\item Neimark-Sacker bifurcation (also called, Hopf bifurcation for maps): $|\mu|\!=\!1$ but $\mu \neq \pm 1$
\end{itemize}
In this paper, we mainly focus on the fold and flip bifurcations.

\paragraph{Computation of fold and flip bifurcation points.}
We shall show the numerical method for detecting the fold and flip bifurcation point of $m$-periodic points. To do this, we shall employ the numerical computation method that was developed by \cite{TsUeYoKa2012}; see also \cite{Ku2004}. Using the Poincare map $P_{\varepsilon}^{\theta_0}: \Sigma^{\theta_{0}} \to \Sigma^{\theta_{0}}$, the following two conditions have to be satisfied at the bifurcation point for some $m$-periodic point $\mathbf{x}_0$:
\medskip

\begin{itemize}
\item[(i)]\textbf{Condition for $m$-periodic points.}
Recall that associated with the vector field of the autonomous Hamiltonian system in \eqref{ExtHamEq}, we can uniquely define the flow $\phi^{\varepsilon}: \mathbb{R} \times \mathcal{M} \to \mathcal {M};~  (t,x_0,z_0,\theta_0) \to (x_t,z_t,\theta_t)=\phi^\varepsilon (t,x_0,z_0,\theta_0)$ for some given parameter $\varepsilon \in \mathbb{R}$. Then, a diffeomorphism $\phi^\varepsilon_t: \mathcal {M} \to \mathcal {M}; ~(x_0,z_0,\theta_0) \mapsto (x_t,z_t,\theta_t)=\phi^\varepsilon_t(x_0,z_0,\theta_0)$ can be given for each fixed $t$.
\medskip

Recall also that we can define the Poincar\'e $m$-return map by
$$
(P_{\varepsilon}^{\theta_0})^m:=\phi^\varepsilon_{mT}\Bigr\rvert_{\Sigma^{\theta_{0}}} : \Sigma^{\theta_{0}} \to \Sigma^{\theta_{0}},
$$ 
which is locally given by
%
\begin{eqnarray*}
(x(0)=x_0, z(0)=z_0, \theta(0)=\theta_{0}) \hspace{40mm}\\
 \hspace{30mm}\mapsto (x (mT), z (mT), \theta(mT)=\theta_{0}+2\pi m\equiv \theta_{0}).
\end{eqnarray*}
Therefore, the condition that some point $\mathbf{x}_0=(x_0,z_0) \in \Sigma^{\theta_{0}} $ becomes the $m$-periodic point is given by
\begin{equation}\label{mPerCon}
(P_{\varepsilon}^{\theta_0})^m (\mathbf{x}_0)=\mathbf{x}_0.
\end{equation}
\item[(ii)]\textbf{Condition for bifurcation points.}
Suppose that $\mathbf{x}_0$ is an $m$-periodic point on $\Sigma^{\theta_{0}}$ and consider to find a bifurcation point for $\mathbf{x}_0$ associated with the parameter $\varepsilon$, where we need to vary $\varepsilon$ to detect the bifurcation point. Recall that the Poincar\'e $m$-return map is given by, for some $\mathbf{x}^{(l)} \in \Sigma^{\theta_{0}}$ and with fixed $\varepsilon$,
$$
\mathbf{x}^{(l+1)}=(P^{\theta_0}_{\varepsilon})^m (\mathbf{x}^{(l)}),\;\;l=0,1,2,\cdots.
$$
Let $\mathbf{x}_0=(x_0,z_0) \in \Sigma^{\theta_{0}} $ be an $m$-periodic solution and we define the variation of $\mathbf{x}^{(l)}$ associated with $\mathbf{x}_0$, i.e., a small deviation from $\mathbf{x}_0$ by 
$$
\mathbf{w}^{(l)}:=\mathbf{x}^{(l)}-\mathbf{x}_0.
$$
Then, by definition $\mathbf{x}^{(l+1)}=\mathbf{x}_0+\mathbf{w}^{(l+1)}$, and it follows by Tayler expansion and by neglecting the higher-order terms that the variational equations may be given as
\begin{equation}\label{VarEqn}
\mathbf{w}^{(l+1)}= J_\varepsilon (\mathbf{x}_0) ~ \mathbf{w}^{(l)},
\end{equation}
where 
\begin{equation*}
J_{\varepsilon}(\mathbf{x}_0)=\frac{\partial (P_{\varepsilon}^{\theta_0})^m (\mathbf{x})} {\partial \mathbf{x}}\Biggr|_{\mathbf{x} = \mathbf{x}_0}.
\end{equation*}
The characteristic equation of \eqref{VarEqn} is
\begin{equation}\label{ChaEqn}
\mathrm{det}\,\left( J_{\varepsilon}(\mathbf{x}_0)-\mu \mathbf{I}\right)=0,
\end{equation}
where $\mathbf{I}$ denotes the unit matrix and $\mu$ a multiplier that corresponds to an eigenvalue. 
\end{itemize}
Notice that the parameter $\varepsilon$ is fixed in equations \eqref{mPerCon}, \eqref{VarEqn} and \eqref{ChaEqn}. On the other hand, the $m$-periodic point may be bifurcated at some $\varepsilon_0$ when 
 $\mu$ satisfy $|\mu|=1$; for instance, the fold and flip bifurcations can be occurred when $\mu=1$ and $\mu= -1$ respectively.

Thus, when a bifurcation associated with some specific $\mu_0 \in \mathbb{R}$ for an $m$-periodic point $\mathbf{x}_0=(x_0,z_0) \in \Sigma^{\theta_{0}}$ occurs at some $\varepsilon_0$,  the following set of $\mu_0$-dependent nonlinear algebraic equations \eqref{mPerCon} and \eqref{ChaEqn} holds:
\begin{equation}\label{BifPointCond}
G_{\mu_{0}}(\mathbf{x}_0, \varepsilon_0)=
\begin{bmatrix}\;
F(\mathbf{x}_0, \varepsilon_0)\\[3mm]
g_{\mu_{0}}(\mathbf{x}_0, \varepsilon_0)
\;
\end{bmatrix}
=\mathbf{0},
\end{equation}
where we define the map $F: \Sigma^{\theta_{0}} \times \mathbb{R}\to \mathbb{R}^2$ by, for each $(\mathbf{x}_0, \varepsilon_0) \in  \Sigma^{\theta_{0}} \times \mathbb{R}$,
\begin{equation*}
F(\mathbf{x}_0, \varepsilon_0):=\mathbf{x}_0 - (P_{{\varepsilon}_0}^{\theta_0})^m (\mathbf{x}_0),
\end{equation*}
and also the map $g_{\mu_{0}}: \Sigma^{\theta_{0}} \times \mathbb{R}\to \mathbb{R}$ by
\begin{equation*}
g_{\mu_{0}}(\mathbf{x}_0, \varepsilon_0):=\mathrm{det}\;\left( J_{\varepsilon_0}(\mathbf{x}_0)-\mu_0 \mathbf{I}\right).
\end{equation*}
In the above, notice that $\varepsilon_0$ is treated as a variable together with $\mathbf{x}_0$.
In other words,  in order to detect a bifurcation point associated with some $\mu_0$ that satisfies $|\mu_0|=1$ for the $m$-periodic point $\mathbf{x}_0$ together with the specific parameter $\varepsilon_0$, we have to find a solution $(\mathbf{x}_0, \varepsilon_0)$ that satisfies the nonlinear algebraic equations \eqref{BifPointCond}. 
\medskip
 
For numerical computations, we shall employ Newton's method again as follows.

\begin{framed}\paragraph{\textsf{Numerical algorithm for detecting the fold or flip bifurcation point:}\vspace{2mm}}
\begin{itemize}
\item[(1)]
Set $\mu_0=1$ for the fold bifurcation or $\mu_0=-1$ for the flip bifurcation.  
\item[(2)]Set $k=0$ with an initial approximation ${\bf y}^{(0)}_0=({\bf x}^{(0)}_0, \varepsilon^{(0)}_0)$ for some required bifurcation point ${\bf y}_0=({\bf x}_0, \varepsilon_0)$.
\item[(3)] Set $k:=k+1$ and compute the $k$-th approximation by
\begin{equation*}
\begin{split}
{\bf y_0}^{(k)} &:= {\bf y_0}^{(k-1)} - \left( \frac{\partial G_{\mu_0}({\bf y_0})}{\partial {\bf y_0}}\Biggr|_{{\bf y_0} = {\bf y_0}^{(k-1)}} \right)^{-1}G_{\mu_0}({ \bf y_0}^{(k-1)}),
\end{split}
\end{equation*}
where the Jacobian matrix is numerically approximated by the central difference scheme. 
\item[(4)]
If $|G_{\mu_0}({ \bf y_0}^{(k)})|< \delta,$ where the convergence radius is set to $\delta=10^{-10}$, then 
the computation ends up and the bifurcation point for the $m$-periodic point is to be detected as ${\bf y}_0={\bf y}_0^{(k)}$. 
\item[(5)] Otherwise, return to (3) in order to iterate the computation until convergence.
\end{itemize}
\end{framed}

\begin{remark}\rm
The initial approximation ${\bf y}^{(0)}_0=({\bf x}^{(0)}_0, \varepsilon^{(0)}_0)$ in the Newton's method is obtained from the $\varepsilon$-parameter bifurcation diagram.
\end{remark}

\subsection{Bifurcations associated with KAM island $I_1$}\label{Sec:bifurcation_I1}
In this subsection, we investigate the bifurcations of periodic points associated with the main KAM island $I_1$. 
As we have seen in Fig.\ref{fig:bifurcation_I1}, many branches of periodic points with various periods gather to 
the branch of 1-periodic points at the center of island $I_1$. Let us first show the bifurcation points numerically detected in our computation, 
and then illustrate how the periodic orbits vary with $\varepsilon$ by taking a look at the 7-periodic orbits for example.

\paragraph{Fold bifurcations associated with $I_1$.}
Fig.\ref{fig:I1_all} shows from $z$ direction the bifurcation points numerically detected in the $\varepsilon$-bifurcation diagram 
associated with $I_1$, where the branches of elliptic and hyperbolic periodic points are depicted in the same way. 
Each bifurcation point of $m$-periodic points is indicated with a circle in magenta. 
The amplitude $\varepsilon$ for each point is also shown beside them with the period $m$ and the type of the bifurcation. 
As can be seen, it was numerically clarified that the $m$-periodic points bifurcate in a fold bifurcation 
when they coalesce with the 1-periodic point at the center of $I_1$. 
Note that the 1-periodic points themselves do not seem to bifurcate when the $m$-periodic points bifurcate in a fold bifurcation.

\begin{figure}[H]
\begin{center}
\includegraphics[scale=0.25]{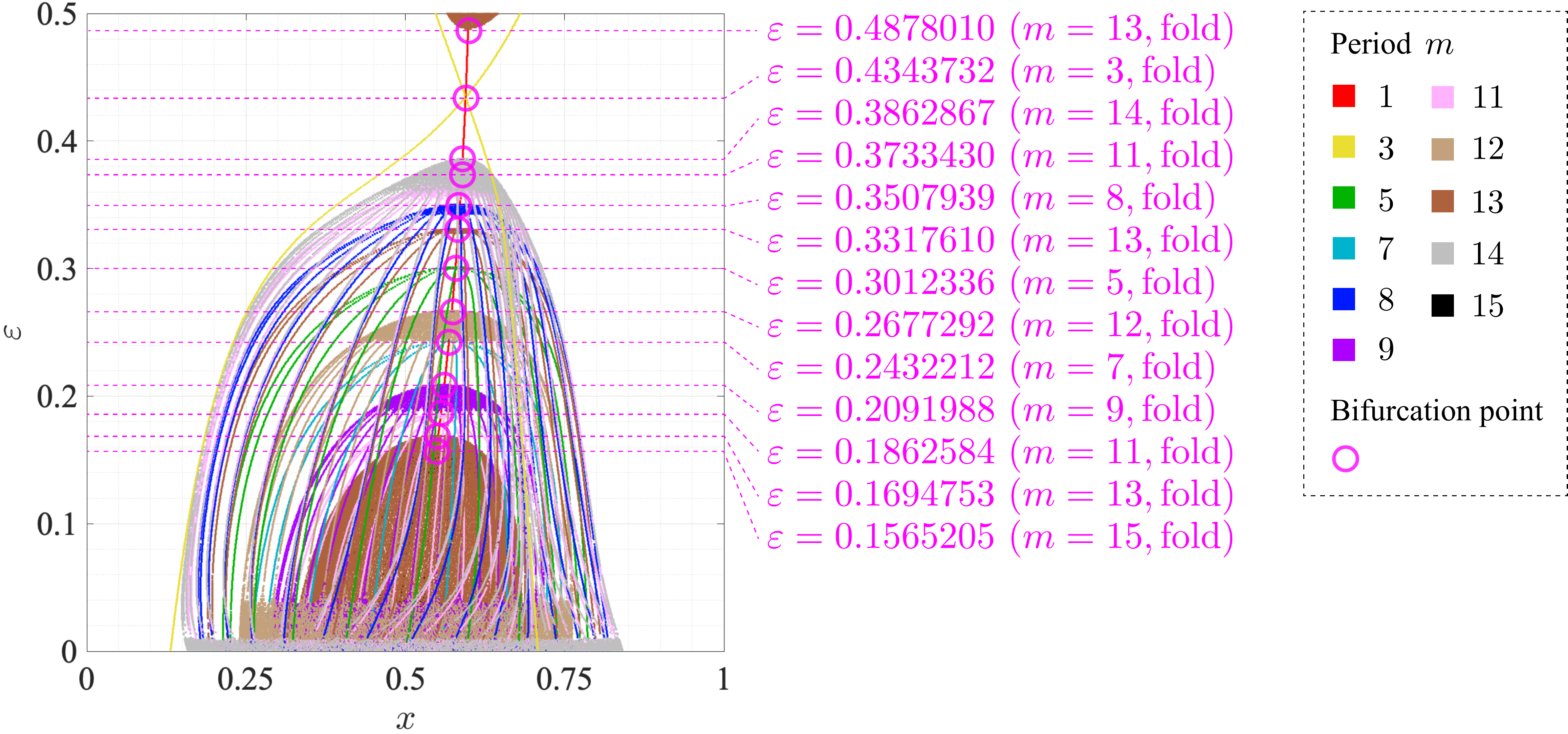}
\caption{$\varepsilon$-bifurcation diagram of periodic points associated with $I_1$}
\label{fig:I1_all}
\end{center}
\end{figure}

\vspace{-10mm}

\paragraph{Fold bifurcations of 7-periodic points.}
Next, let us investigate how the periodic orbits vary with $\varepsilon$ near the fold bifurcation point. 
Here, we take a look at the 7-periodic orbits for example. 
Fig.\ref{fig:I1_MP1_7} illustrates the $\varepsilon$-bifurcation diagram of 1 and 7-periodic points in $I_1$, 
where the branches of elliptic and hyperbolic periodic points are depicted in thick and thin lines respectively. 
Fig.\ref{fig:I1_MP1_7_e0.2} also shows the 1 and 7-periodic points on the Poincar\'e section $\Sigma^{\theta_0}$ 
at $\varepsilon=0.2$ and the projection of the associated periodic orbits onto the phase space $M$. 
As can be seen in Fig.\ref{fig:I1_MP1_7_e0.2}, elliptic and hyperbolic 7-periodic points appear seven each in addition to the 1-periodic point. 
It follows that stable and unstable 7-periodic orbits appear one each in addition to a stable 1-periodic orbit. 
The blue points in circles and stars in Fig.\ref{fig:I1_MP1_7_e0.2_point} correspond to the points of the stable and unstable 
7-periodic orbits respectively, while the red circle point corresponds to the point of the stable 1-periodic orbit. 
It is observed that the resonance condition of the 1 and 7-periodic orbits are $|n/m|=1$ and $|n/m|=3/7$ respectively.

However, when the amplitude is increased from $\varepsilon=0.2$, the resonance condition of the unstable 7-periodic 
orbit varies to $|n/m|=5/7$ at around $\varepsilon=0.213$; the case $\varepsilon=0.23$ is illustrated in Fig.\ref{fig:I1_MP1_7_e0.23}. 
Furthermore, right before the bifurcation point at around around $\varepsilon=0.237$, 
the resonance condition of both the stable and unstable 7-periodic orbits varies to $|n/m|=7/7=1$, 
which corresponds to that of the 1-periodic orbit; the case $\varepsilon=0.242$ is depicted in Fig.\ref{fig:I1_MP1_7_e0.242}.
Therefore, it seems that the 7-periodic orbits disappear at the bifurcation point and vary to a 1-periodic orbit. 
Further, it is observed in our numerical computation that the projection of the 7-periodic orbits associated with $I_1$ is symmetric

\begin{figure}[H]
\begin{center}
\includegraphics[scale=0.25]{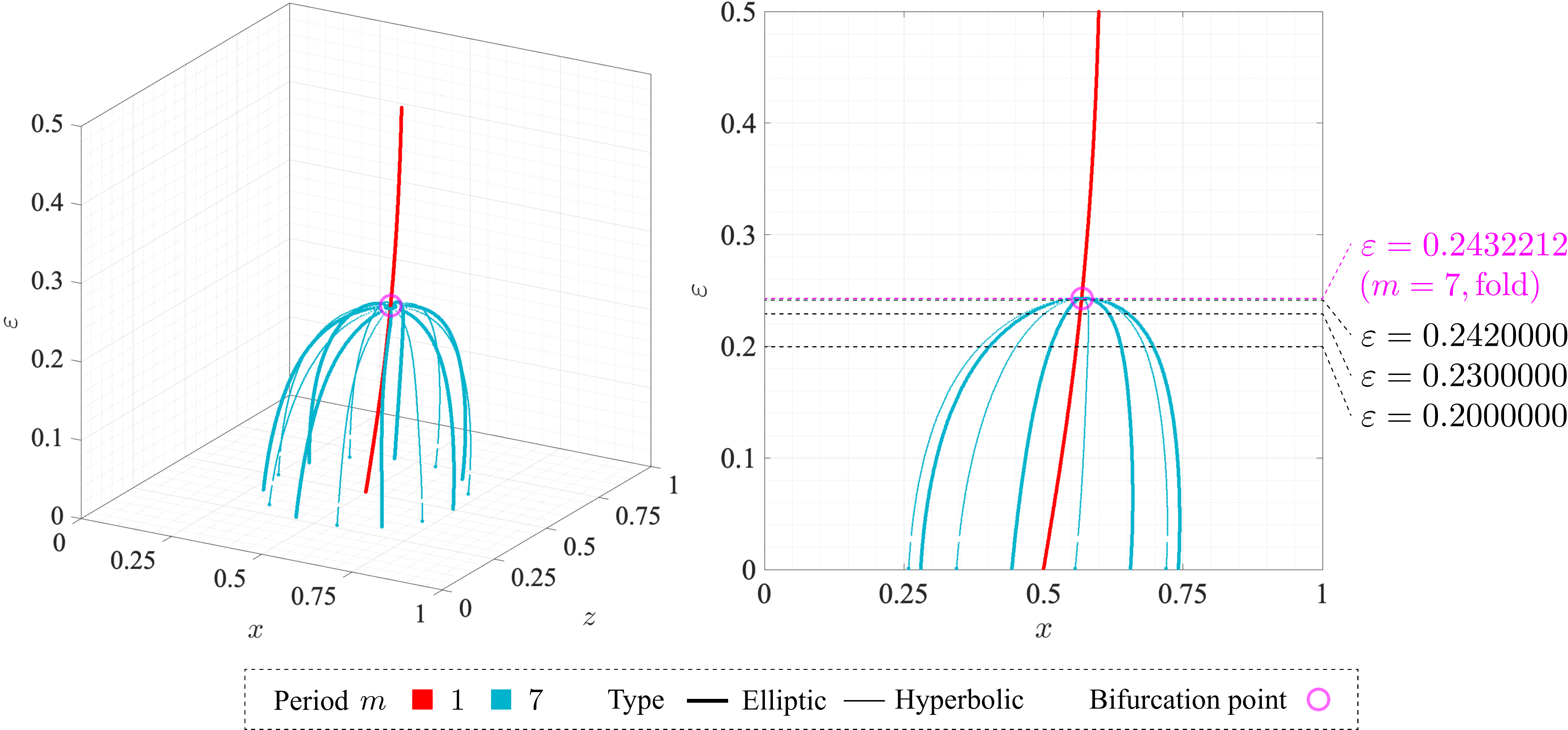}
\caption{$\varepsilon$-bifurcation diagram of 1 and 7-periodic points associated with $I_1$}
\label{fig:I1_MP1_7}
\end{center}
\end{figure}

\vspace{-6mm}

\begin{figure}[H]
\begin{center}
\subfigure[1 and 7-periodic points]
{\includegraphics[scale=0.25]{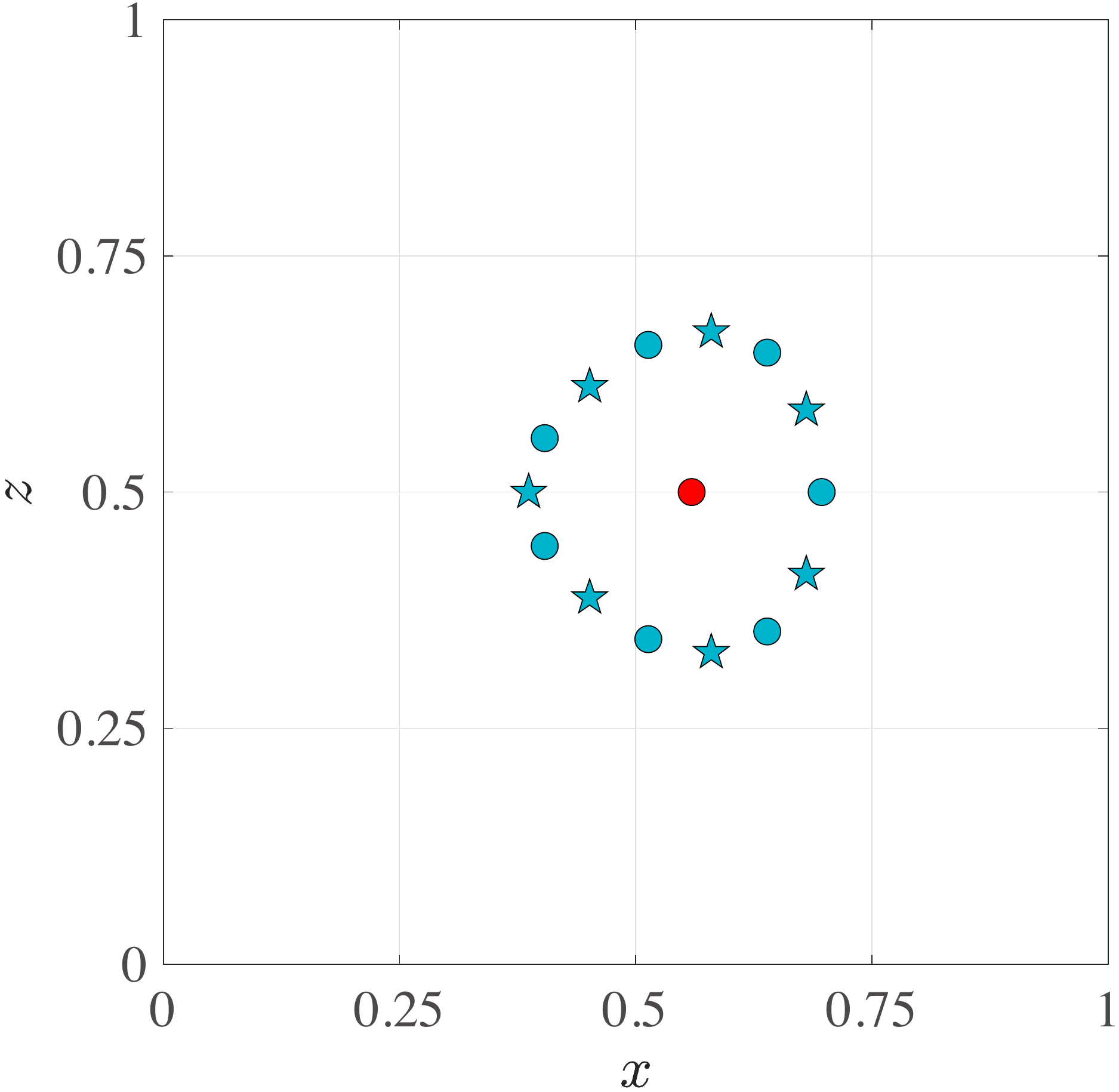}
\label{fig:I1_MP1_7_e0.2_point}}
\hspace{5mm}
\subfigure[Stable 1-periodic orbit]
{\includegraphics[scale=0.25]{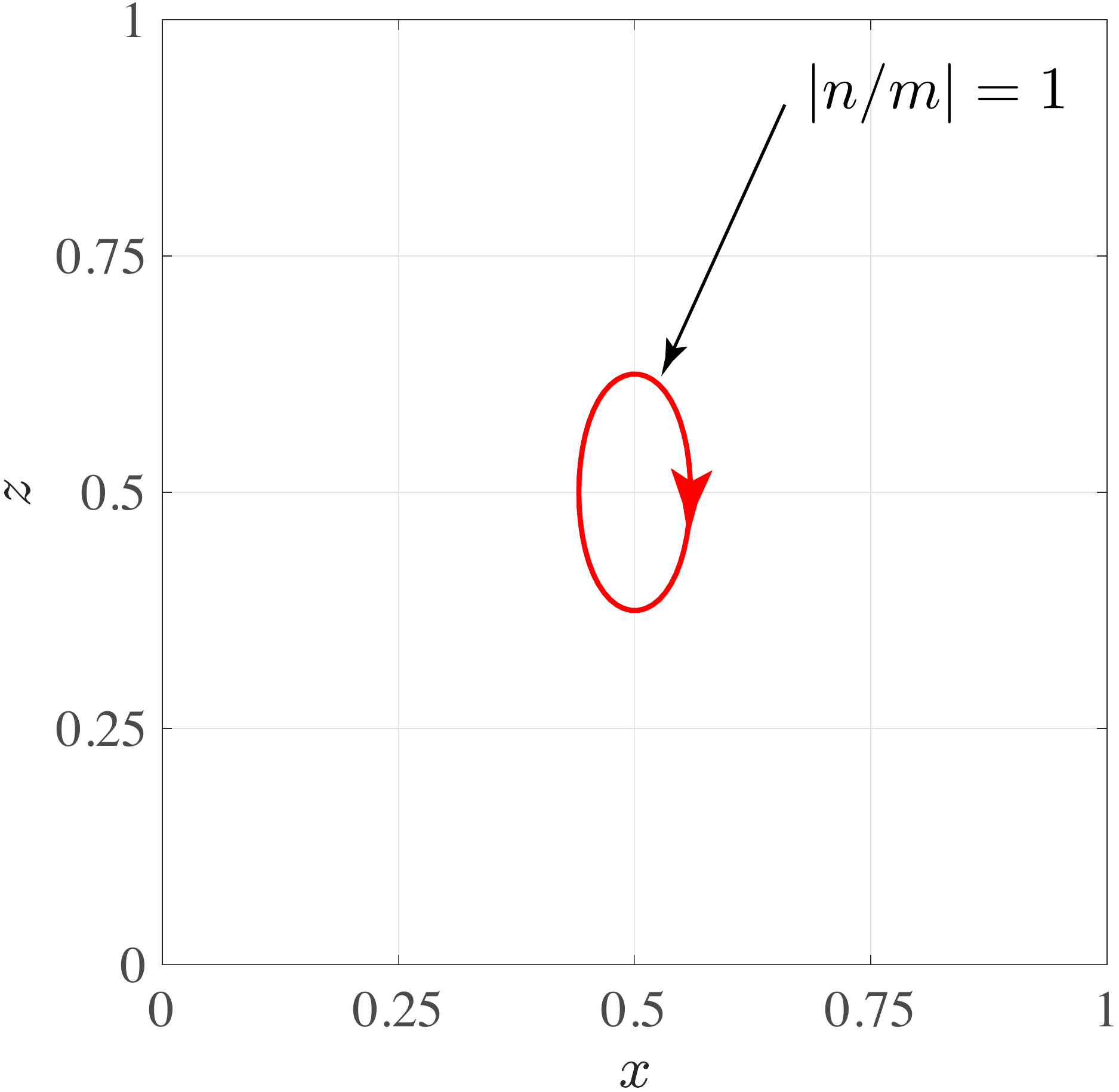}}
\subfigure[Stable 7-periodic orbit]
{\includegraphics[scale=0.25]{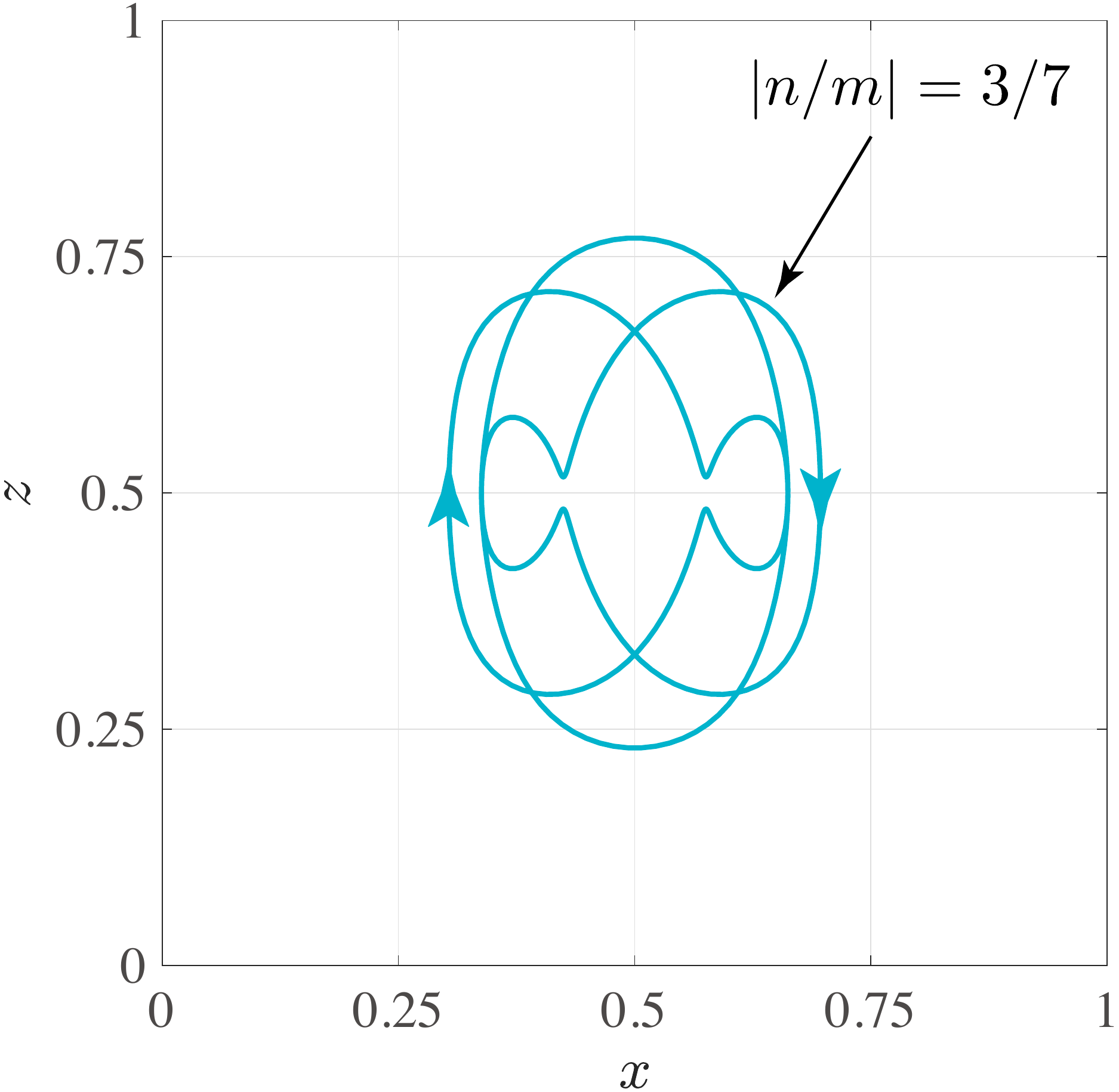}}
\hspace{5mm}
\subfigure[Unstable 7-periodic orbit]
{\includegraphics[scale=0.25]{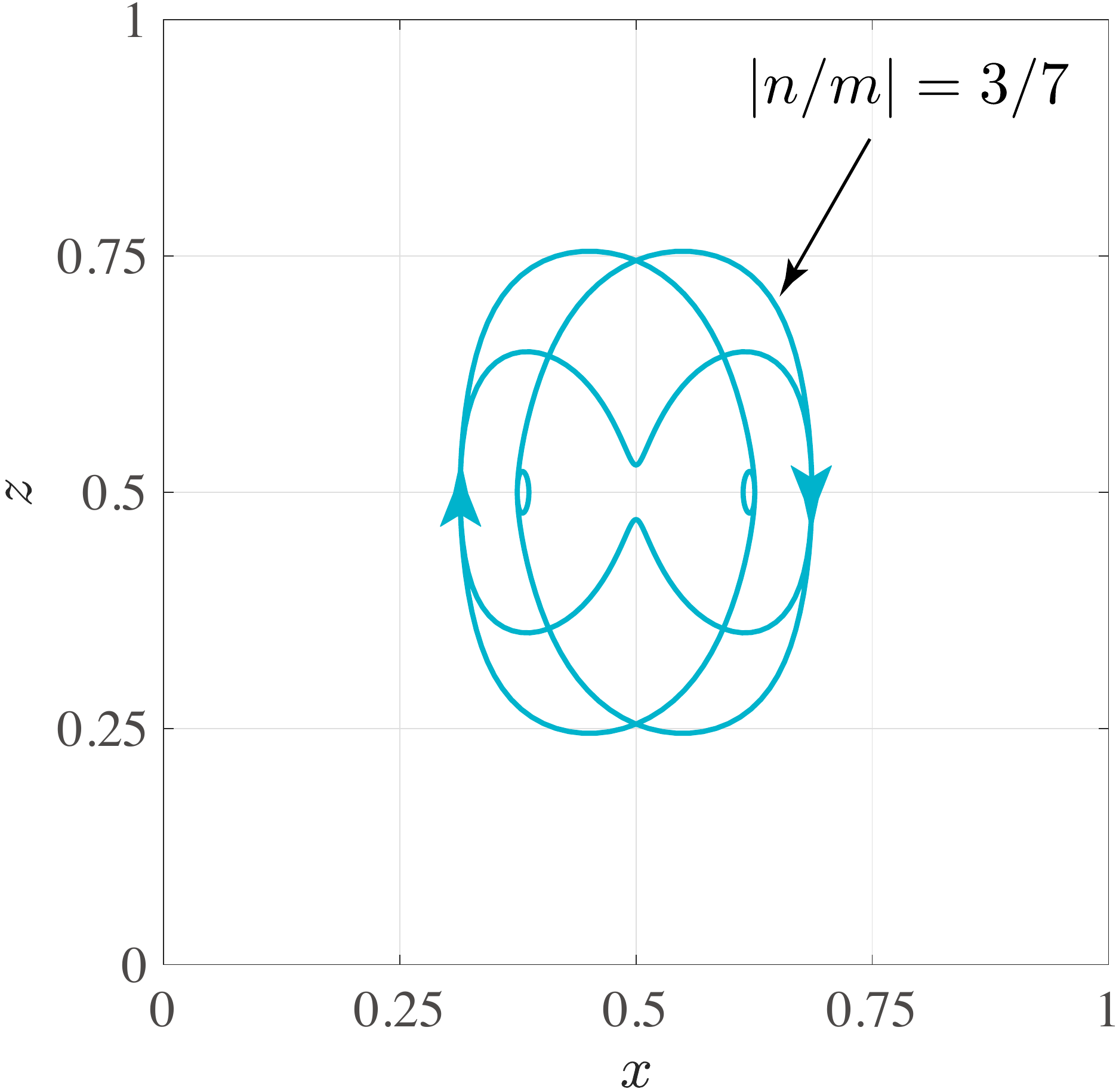}}
\caption{1 and 7-periodic points and the projection of their orbits at $\varepsilon=0.2$}
\label{fig:I1_MP1_7_e0.2}
\end{center}
\end{figure}

\begin{figure}[H]
\begin{center}
\subfigure[Stable 7-periodic orbit]
{\includegraphics[scale=0.25]{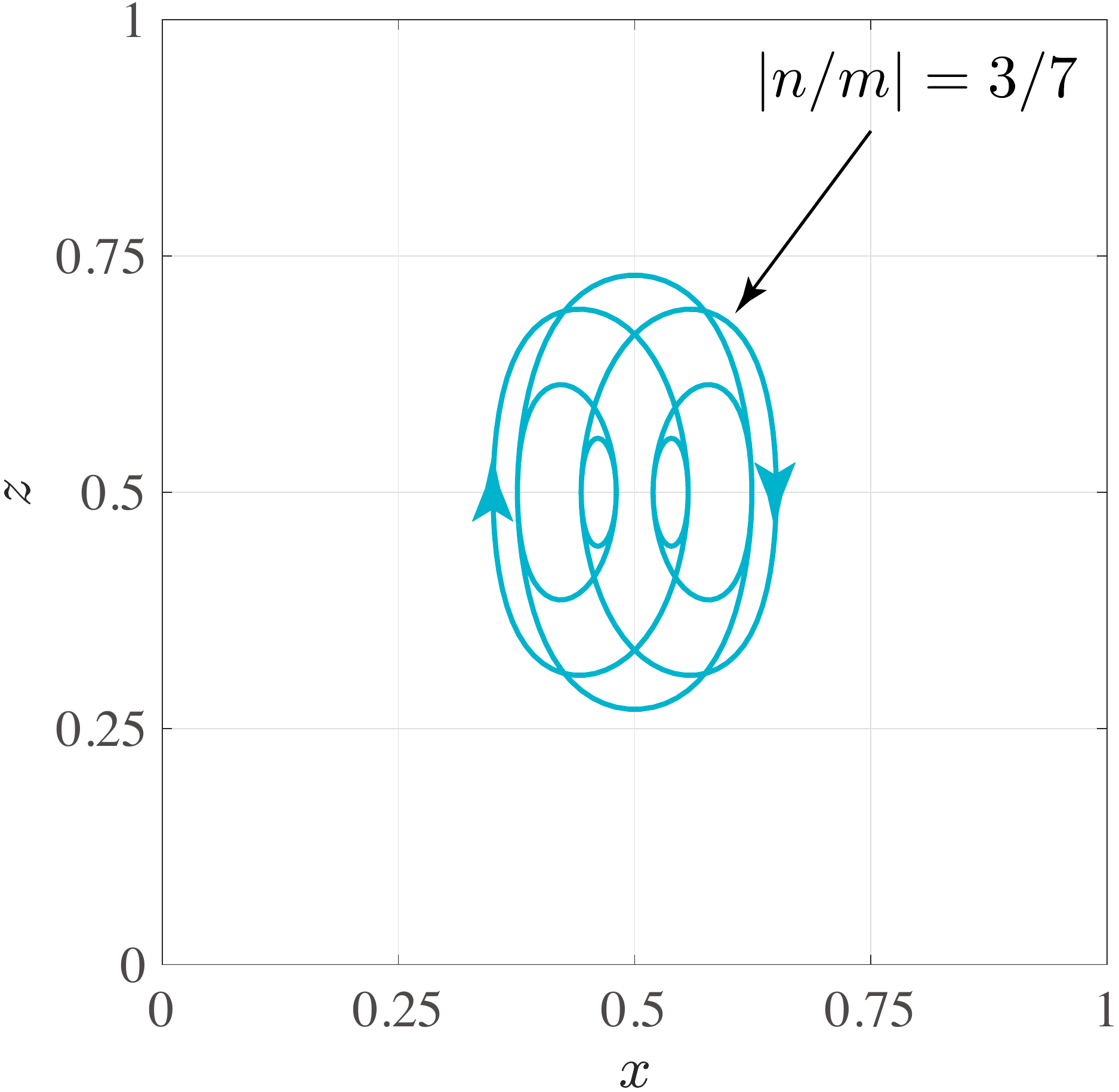}}
\hspace{5mm}
\subfigure[Unstable 7-periodic orbit]
{\includegraphics[scale=0.25]{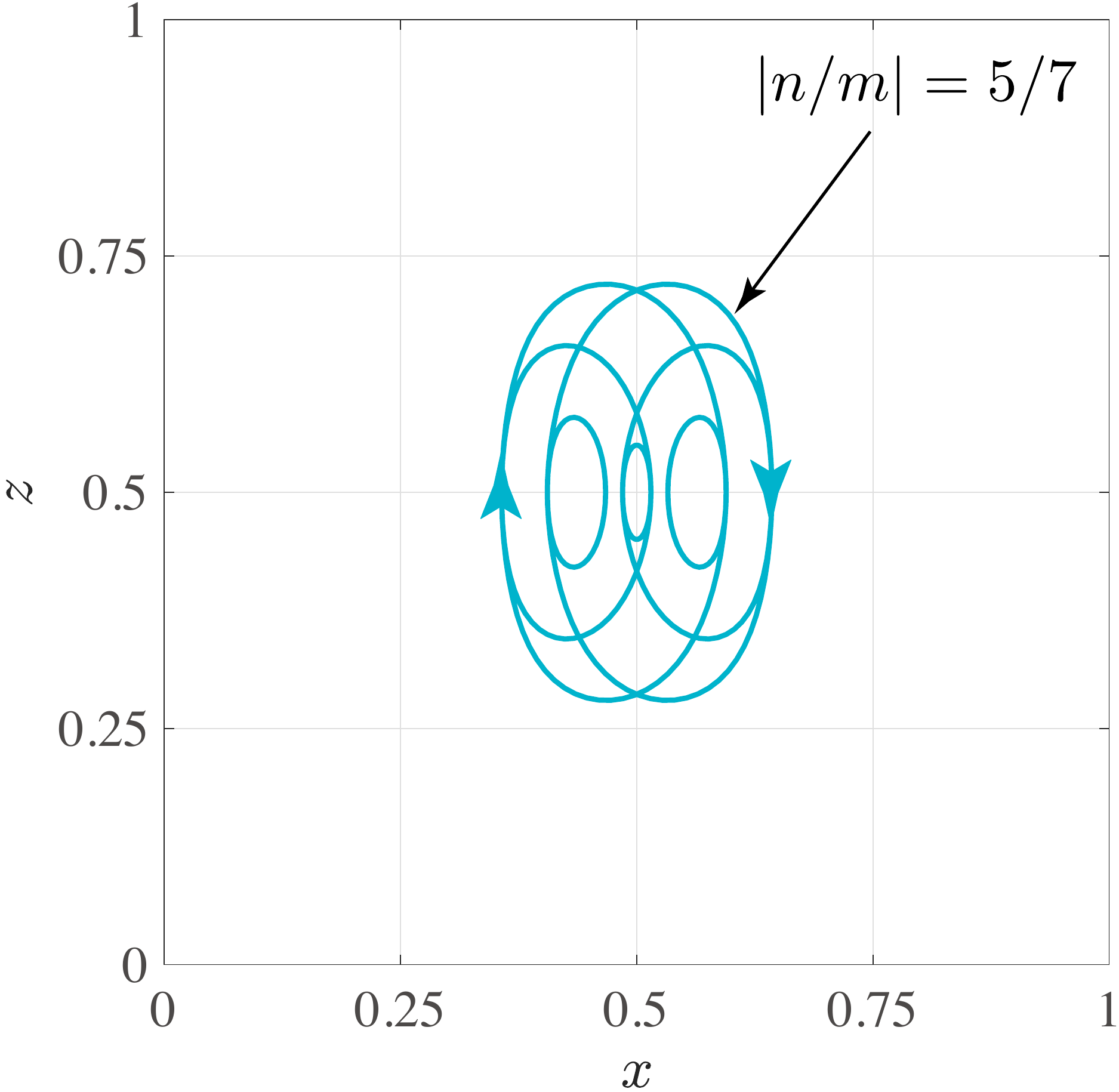}}
\caption{The projection of 7-periodic orbits at $\varepsilon=0.23$}
\label{fig:I1_MP1_7_e0.23}
\end{center}
\end{figure}

\vspace{-8mm}

\begin{figure}[H]
\begin{center}
\subfigure[Stable 7-periodic orbit]
{\includegraphics[scale=0.25]{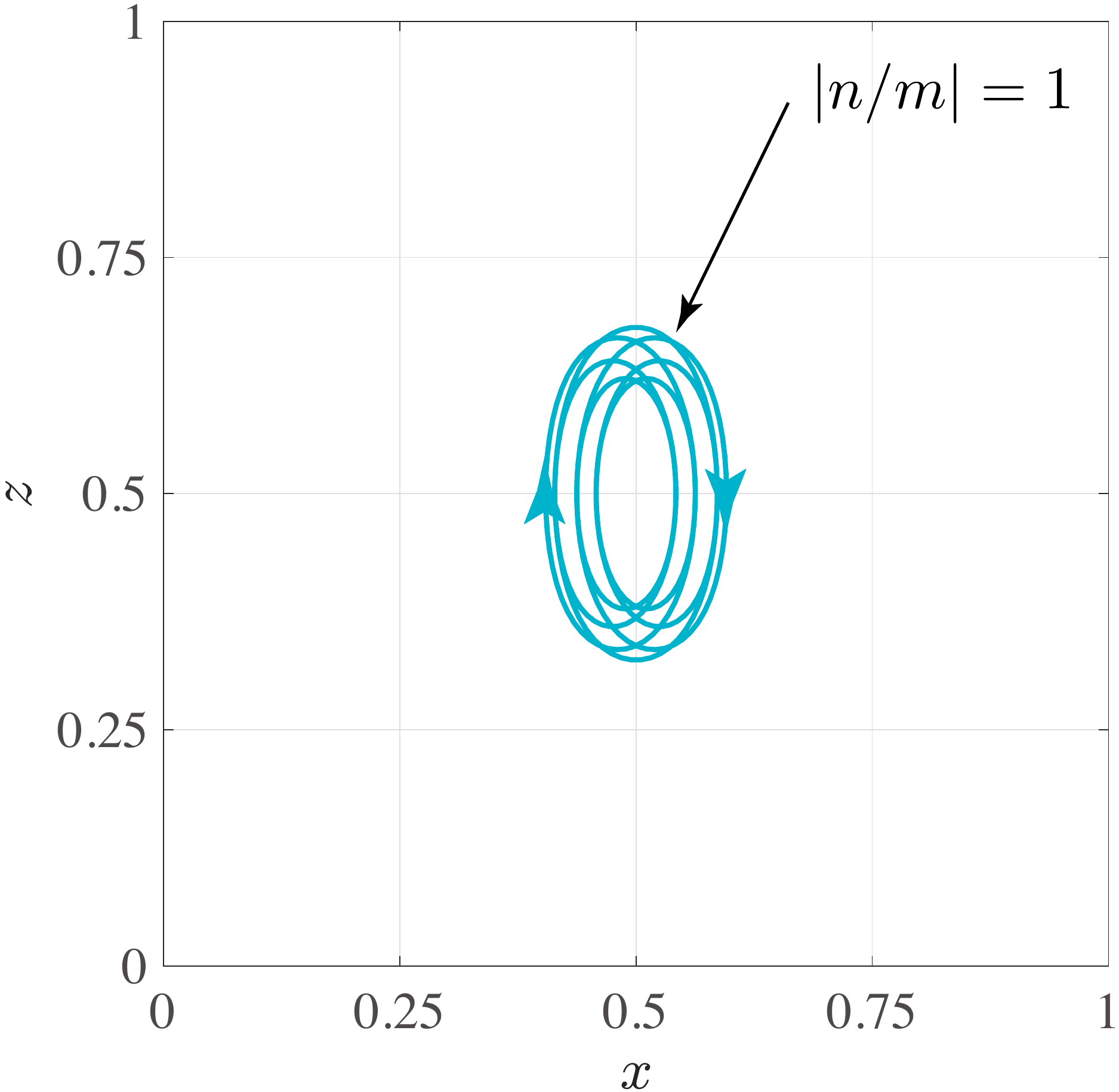}}
\hspace{5mm}
\subfigure[Unstable 7-periodic orbit]
{\includegraphics[scale=0.25]{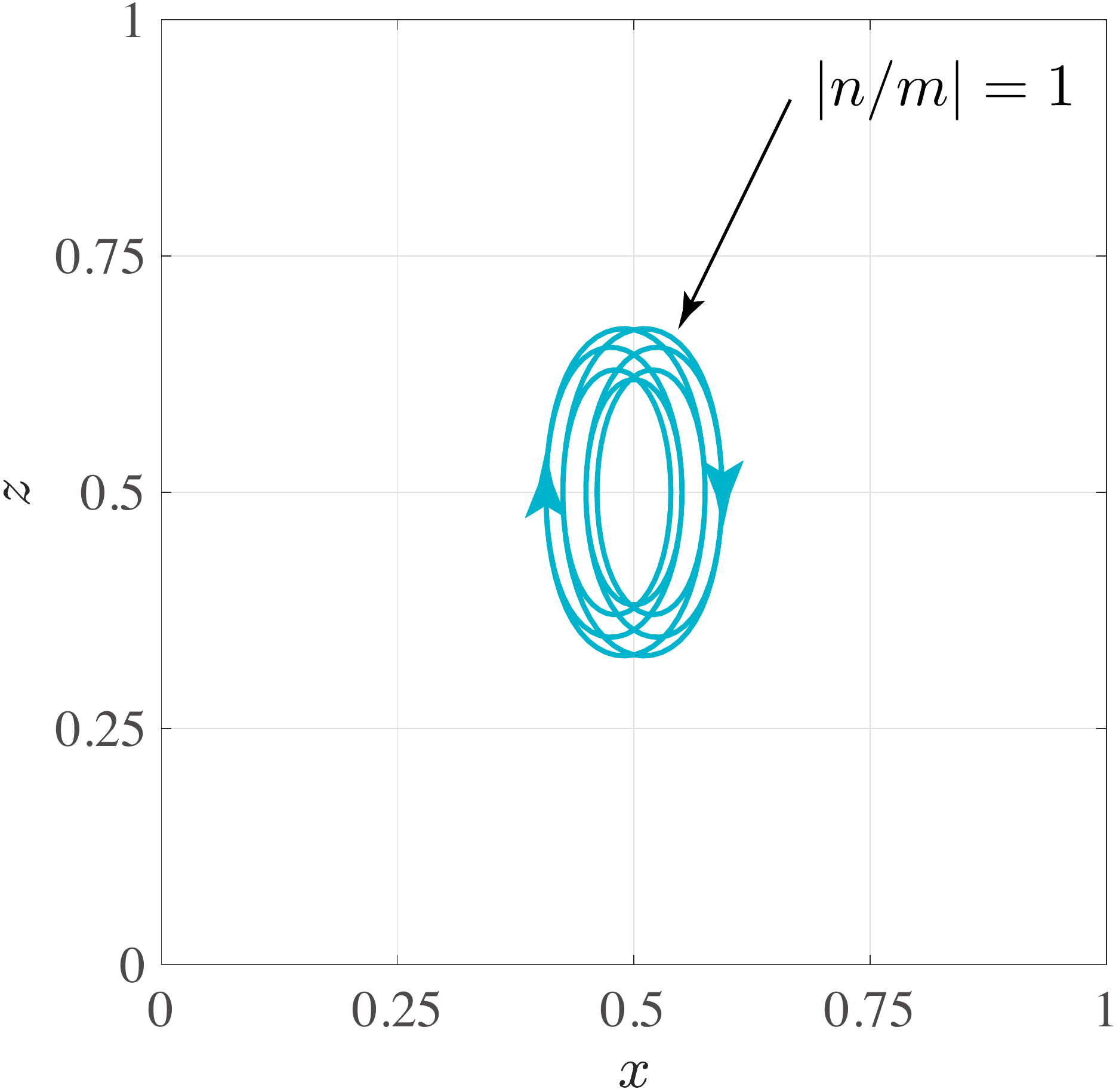}}
\caption{The projection of 7-periodic orbits at $\varepsilon=0.242$}
\label{fig:I1_MP1_7_e0.242}
\end{center}
\end{figure}

\vspace{-2mm}

\noindent
with respect to the horizontal and vertical center lines of the cell regardless of the amplitude $\varepsilon$. 
It is consistent with Theorem \ref{SymTheorem} that the period $m=7$ and the winding number $n=-3, -5,$ and -7 are odd.
The other $m$-periodic orbits associated with $I_1$ vary similarly to the 7-periodic orbits, 
which indicates that they disappear one by one when $\varepsilon$ is increased.

\subsection{Bifurcations associated with KAM islands $I_2, I_3,$ and $I_4$}\label{Sec:bifurcation_I234}
In this subsection, we analyze the bifurcations associated with the three KAM islands $I_2, I_3,$ and $I_4$ 
around the main island $I_1$. Let us first take a look at the bifurcations of 3-periodic points, 
and then investigate those of 6, 9, 12, and 15-periodic points.

\paragraph{Fold and flip bifurcations of 3-periodic points.}
Fig.\ref{fig:I234_MP3_6} illustrates the $\varepsilon$-bifurcation diagram of 3 and 6-periodic points 
associated with islands $I_2, I_3,$ and $I_4$. Let us first focus on the bifurcation of 3-periodic points which is depicted in yellow. 
As can be seen, the branches of 3-periodic points bifurcate similarly to a fork at around $\varepsilon=0.321$. 
It is found in our computation that a fold bifurcation occurs at $\varepsilon=0.3217135$. 
Let us take a look at how the 3-periodic points and the projection of the associated periodic orbits vary by the bifurcation. 
Fig.\ref{fig:I234_MP3_before} and Fig.\ref{fig:I234_MP3_after} show those at $\varepsilon=0.3$ and $\varepsilon=0.34$. 
We can see that three elliptic 3-periodic points appear at $\varepsilon=0.3$.
It follows that one stable 3-periodic orbit appears at $\varepsilon=0.3$.
However, as we increase $\varepsilon$, each elliptic 3-periodic point varies to a hyperbolic one 
at the bifurcation point and two more elliptic 3-periodic points appear in the neighborhood. 
This denotes that the stable 3-periodic orbit varies to an unstable 3-periodic orbit at the bifurcation point 
and two more stable 3-periodic orbits appear at the same time. 
We denote the two new stable 3-periodic orbits by $c_a$ and $c_b$, and label the associated elliptic 
3-periodic points in Fig.\ref{fig:I234_MP3_after} as $a$ and $b$ respectively in order to 
show the correspondence between the periodic points and their orbits. 
It is observed that the original stable 3-periodic orbit is symmetric with respect to the horizontal and vertical center 
lines of the cell, but those of the two new stable ones are only symmetric with respect to the vertical one. 
It seems that the stable orbits loose one of the symmetric properties by the bifurcation. 

Furthermore, it is clarified in our computation that the 3-periodic points bifurcate 
in a flip bifurcation at $\varepsilon=0.4453380$ and $\varepsilon=0.4713999$. 
Since the periodic orbits vary around these bifurcation points similarly to those around the flip bifurcations
of 4-periodic points discussed in \S\ref{Sec:others}, we do not discuss this here.

\begin{figure}[H]
\begin{center}
\includegraphics[scale=0.25]{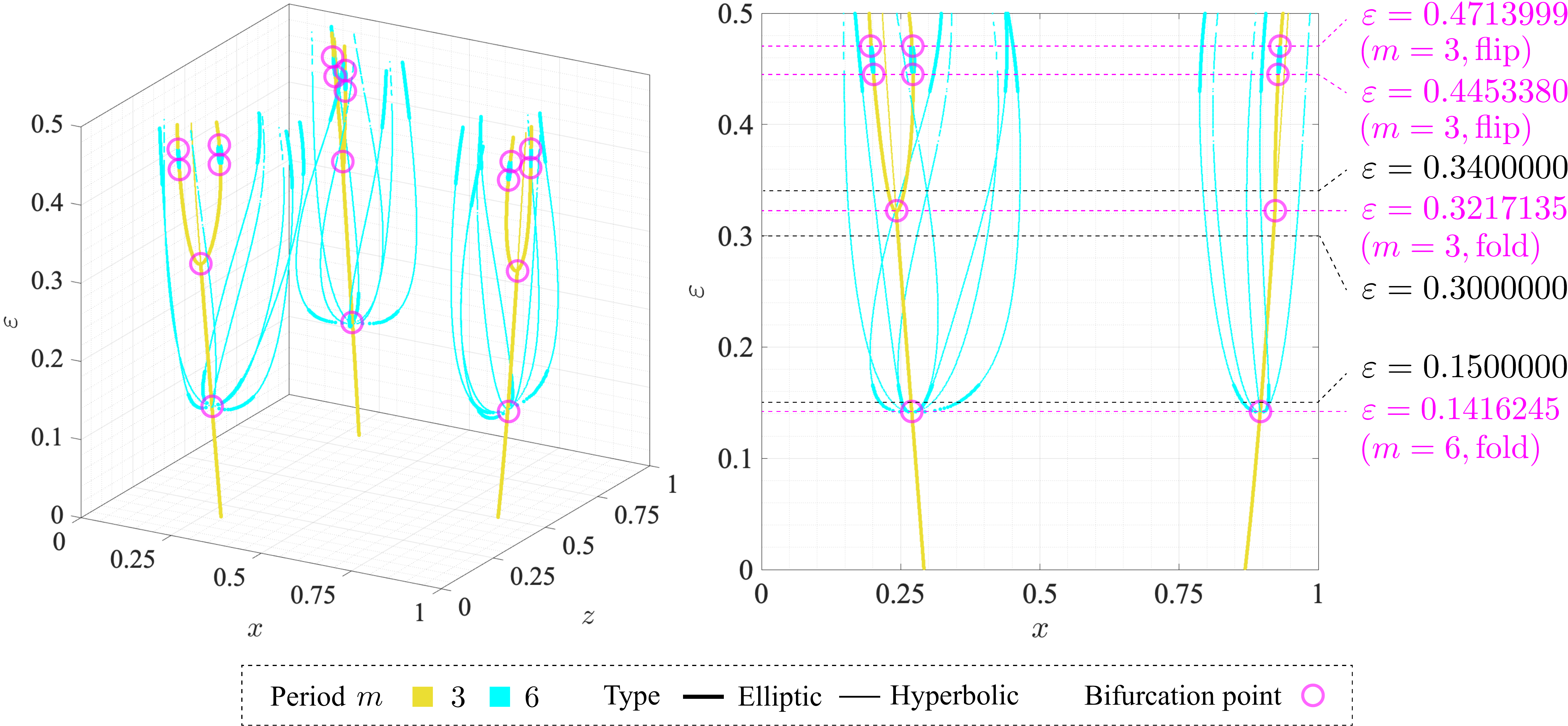}
\caption{$\varepsilon$-bifurcation diagram of 3 and 6-periodic points}
\label{fig:I234_MP3_6}
\end{center}
\end{figure}

\begin{figure}[H]
\begin{center}
\subfigure[Elliptic 3-periodic points]
{\includegraphics[scale=0.25]{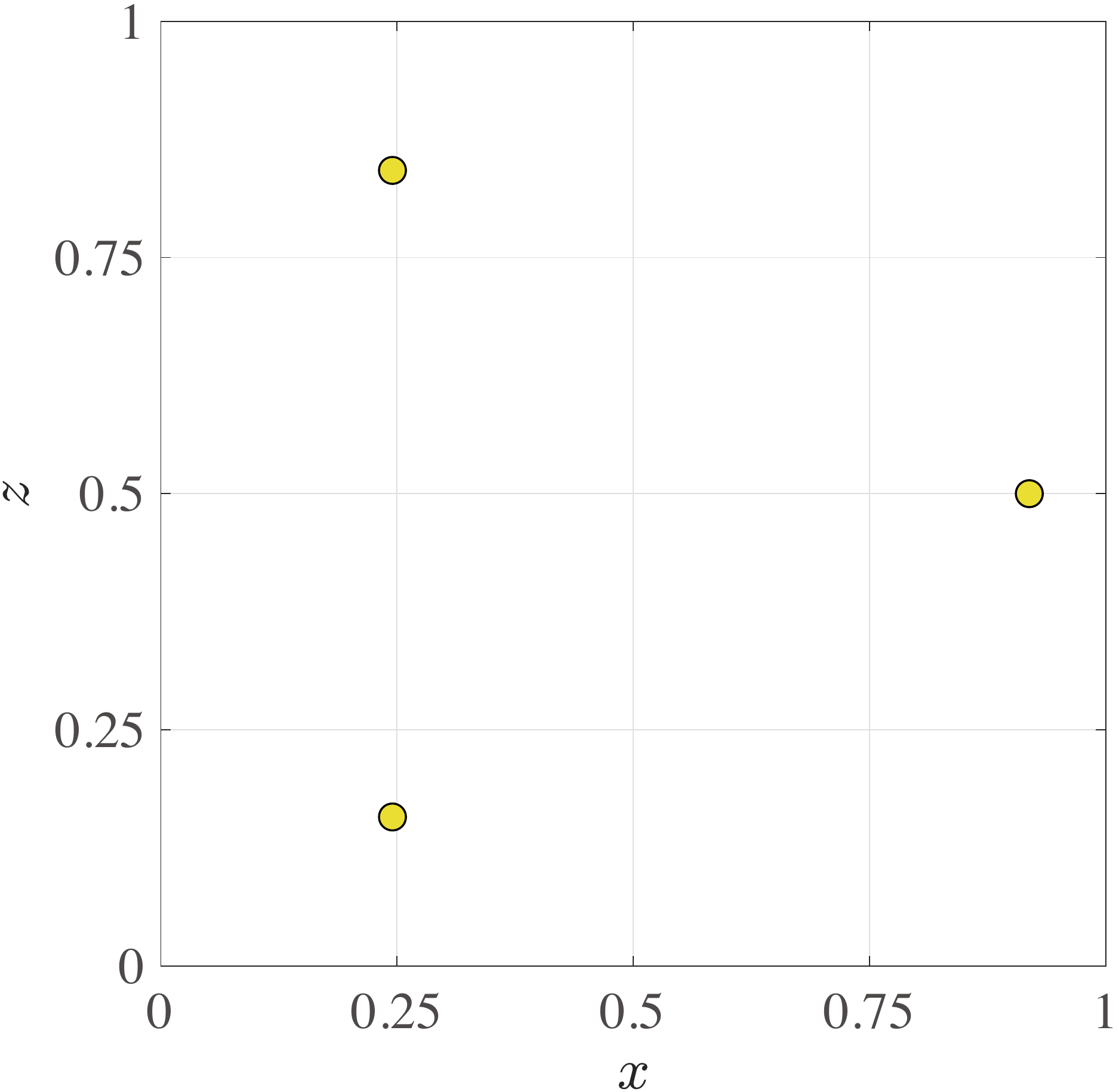}}
\hspace{5mm}
\subfigure[Stable 3-periodic orbit]
{\includegraphics[scale=0.25]{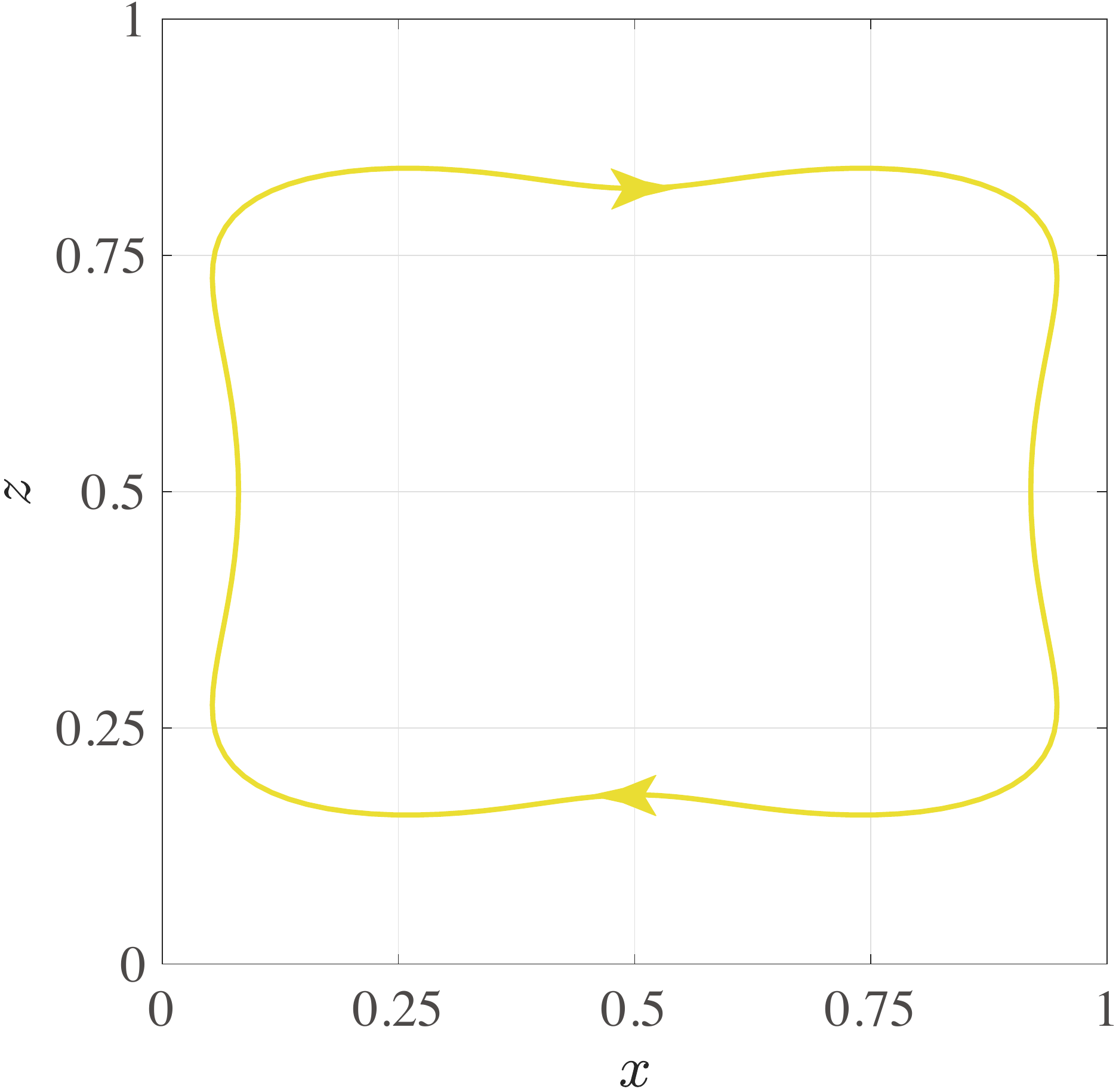}}
\caption{3-periodic points and the projection of their orbits at $\varepsilon=0.3$}
\label{fig:I234_MP3_before}
\end{center}
\end{figure}

\vspace{-6mm}
\begin{figure}[H]
\begin{center}
\subfigure[3-periodic points]
{\includegraphics[scale=0.25]{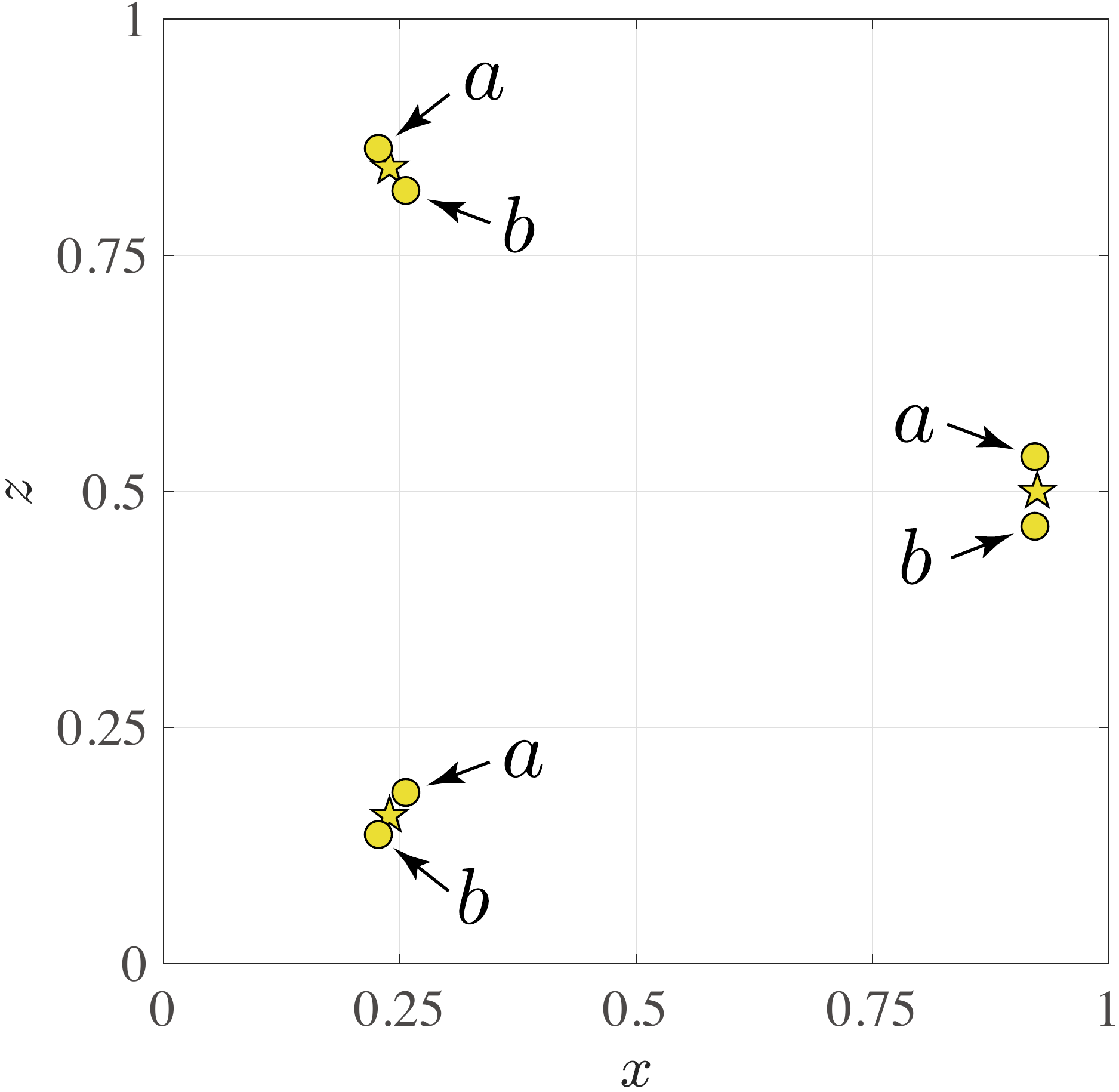}}
\hspace{5mm}
\vspace{-1mm}
\subfigure[Unstable 3-periodic orbit]
{\includegraphics[scale=0.25]{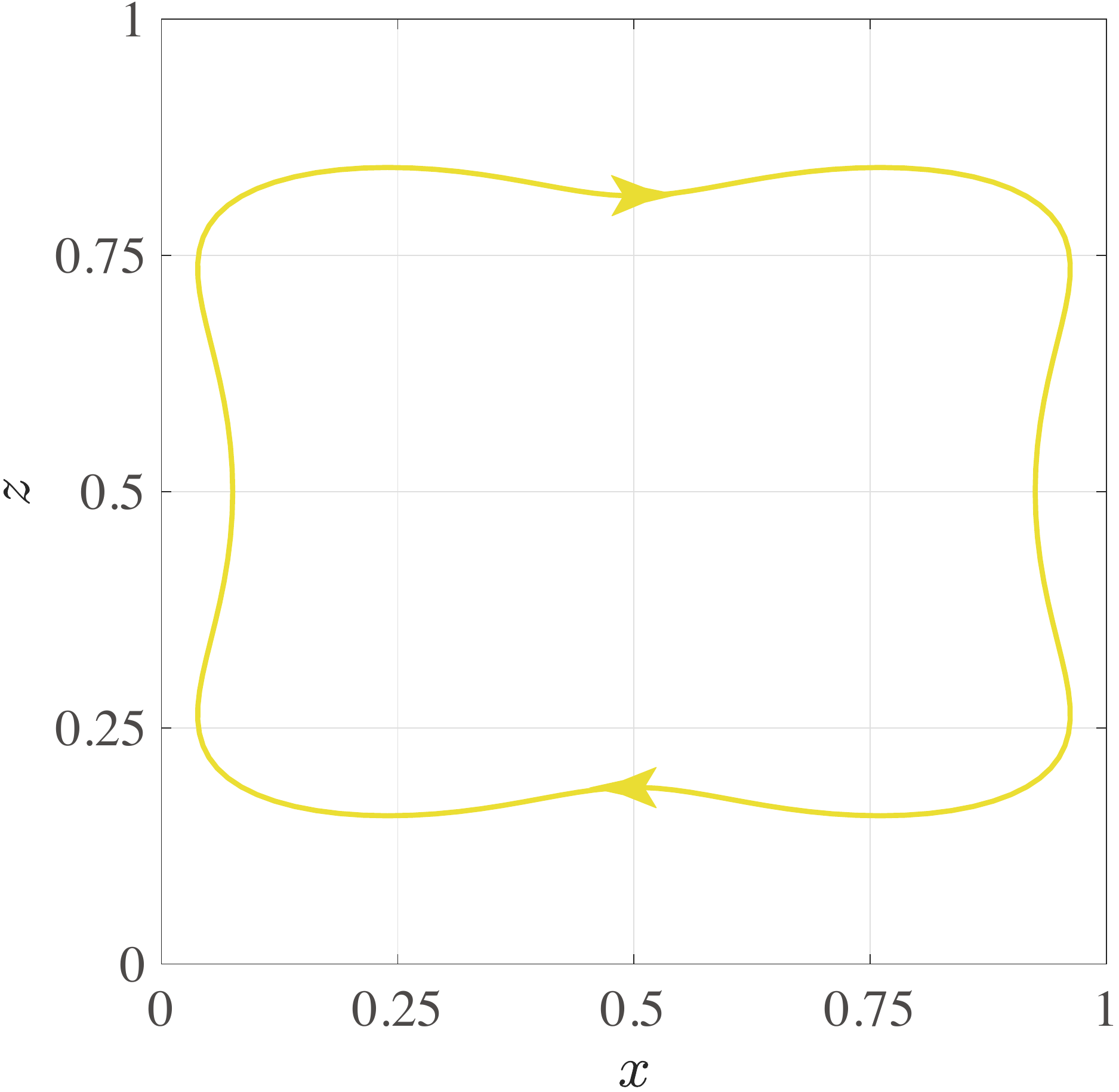}}
\subfigure[Stable 3-periodic orbit $c_a$]
{\includegraphics[scale=0.25]{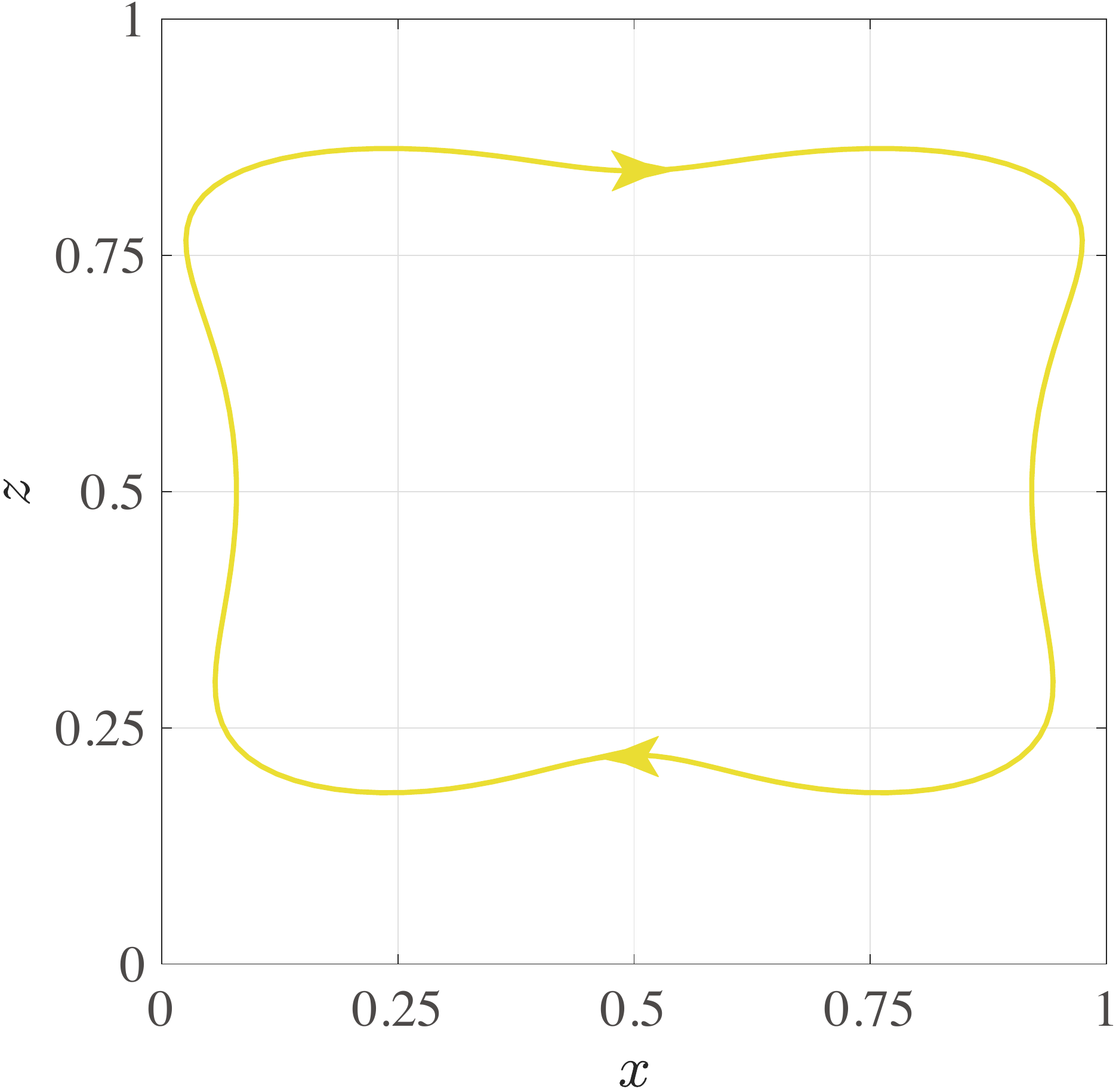}}
\hspace{5mm}
\subfigure[Stable 3-periodic orbit $c_b$]
{\includegraphics[scale=0.25]{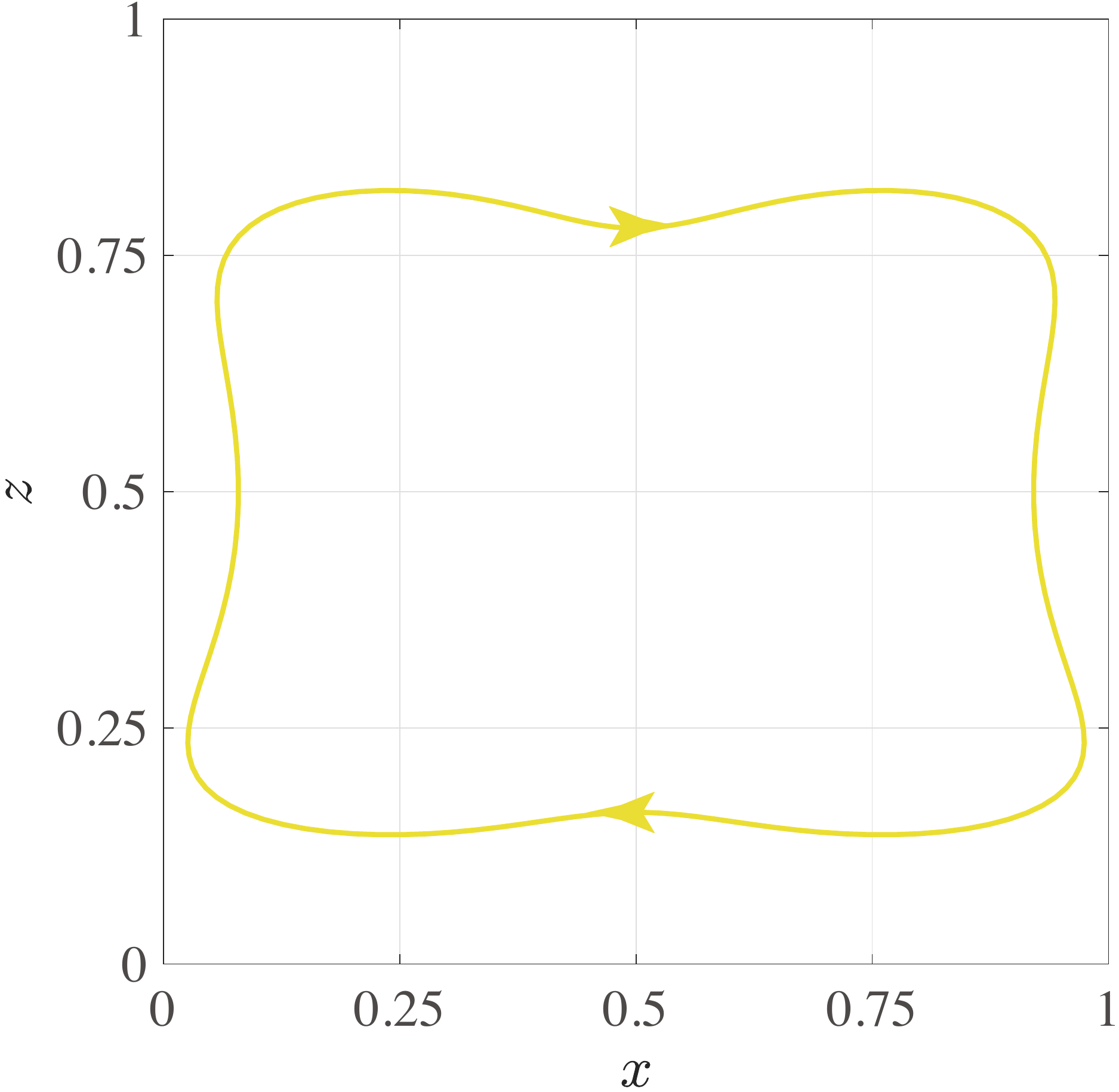}}
\vspace{-1mm}
\caption{3-periodic points and the projection of their orbits at $\varepsilon=0.34$}
\label{fig:I234_MP3_after}
\end{center}
\end{figure}

\vspace{-8mm}


\paragraph{Fold bifurcations of 6-periodic points.}
As can be seen in Fig.\ref{fig:I234_MP3_6}, the blue branches of 6-periodic points grow from the yellow branches of 3-periodic points 
at around $\varepsilon=0.142$. 
It is clarified in our computation that the 6-periodic points bifurcate in a fold bifurcation at $\varepsilon=0.1416245$. 
Note that the 3-periodic points themselves do not seem to bifurcate when the 6-periodic points bifurcate in a fold bifurcation. 
Let us take a look at how the periodic orbits vary with $\varepsilon$ near such fold bifurcation point. 
Fig.\ref{fig:I234_MP3_6_after} shows the 3 and 6-periodic points and the projection of the associated periodic orbits at $\varepsilon=0.15$. 
We can see that elliptic and hyperbolic 6-periodic points appear four each around each elliptic 3-periodic points. 
It follows that stable and unstable 6-periodic orbits appear two each in addition to the stable 3-periodic orbit. 
We name the projection of the four 6-periodic orbits as $c_a, c_b, c_c,$ and $c_d$, and label the associated 6-periodic points 
in Fig.\ref{fig:I234_MP3_6_after_point} as $a, b, c,$ and $d$ respectively in order to show the correspondence, 
where $c_a$ and $c_b$ are symmetric with each other with respect to the horizontal center line of the cell, 
and $c_c$ and $c_d$ are symmetric with respect to the vertical one. 
The projection of the 3-periodic orbit and that of the 6-periodic orbits $c_a$ and $c_c$
are illustrated in Fig.\ref{fig:I234_MP3_6_after_orbitMP3}, Fig.\ref{fig:I234_MP3_6_after_orbitMP6a}, 
and Fig.\ref{fig:I234_MP3_6_after_orbitMP6c}. 
It is observed in Fig.\ref{fig:I234_MP3_6_after} that the resonance condition of the 6-periodic orbits are $|n/m|=1/3$, 
which is the same of that of the 3-periodic orbit. 
This implies that the 6-periodic orbits at the bifurcation point correspond to the 3-periodic orbit. 

\vspace{-1mm}

\paragraph{Fold bifurcations of 9, 12, and 15-periodic points.}
We have seen that the 6-periodic points bifurcate in a fold bifurcation, 
which makes some branches of 6-periodic points grow from those of 3-periodic points. 
Such fold bifurcations are also observed in 9, 12, and 15-periodic points associated with islands $I_2, I_3$, and $I_4$, 
where the $\varepsilon$-bifurcation diagrams of those periodic points are illustrated in Fig.\ref{fig:I234_MP3_9} - Fig.\ref{fig:I234_MP3_15}. 
The 9, 12, and 15-periodic orbits vary similarly to the 6-periodic orbits near these bifurcation points. 
It follows that $3l$-periodic orbits ($l=2,3,4,$ and 5) are generated one after another from the 3-periodic orbits 
by increasing the amplitude $\varepsilon$ of the perturbation. 


\vspace{-1mm}

\begin{figure}[H]
\begin{center}
\subfigure[3 and 6-periodic points]
{\includegraphics[scale=0.25]{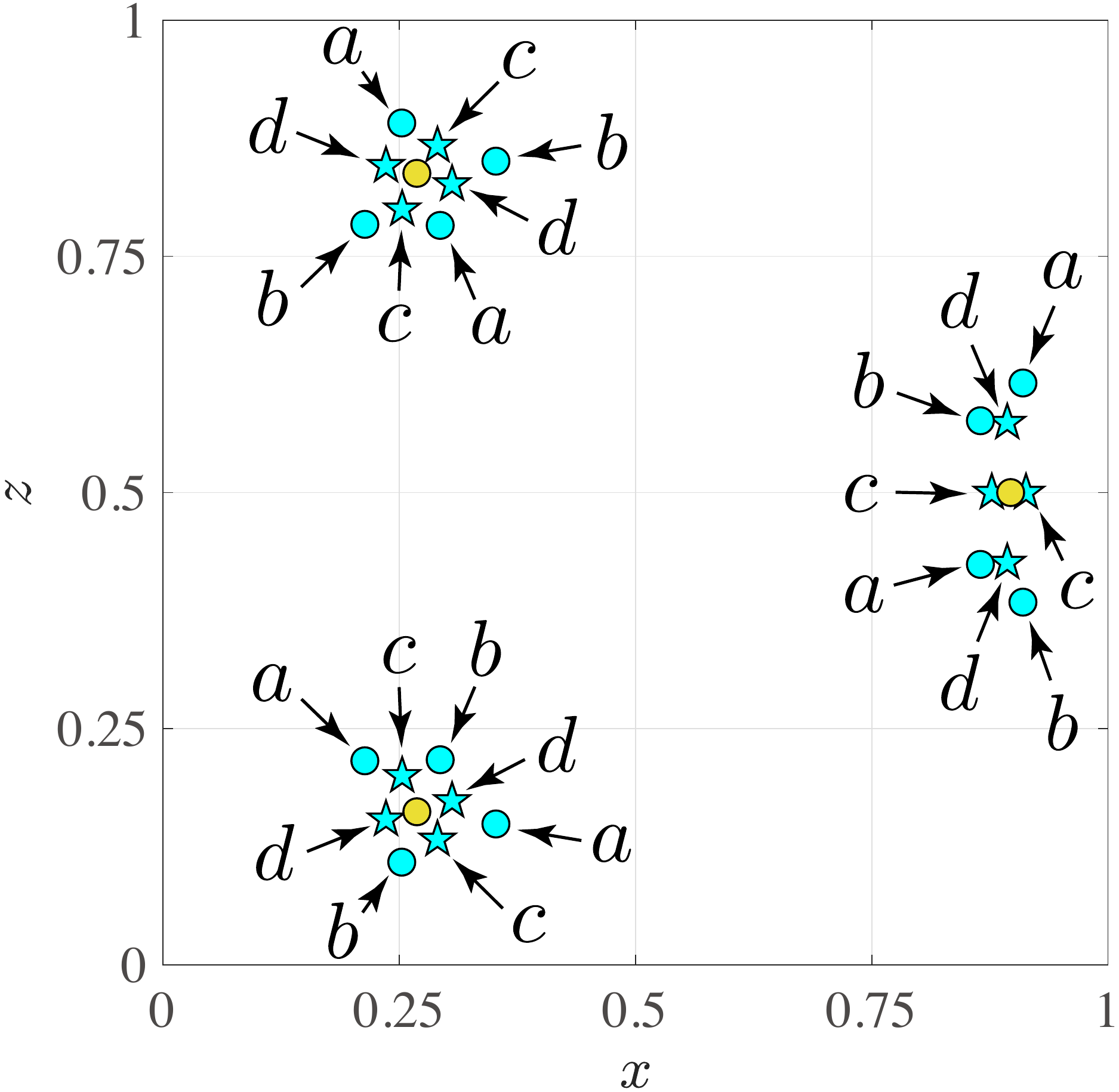}
\label{fig:I234_MP3_6_after_point}}
\hspace{5mm}
\vspace{-1mm}
\subfigure[Stable 3-periodic orbit]
{\includegraphics[scale=0.25]{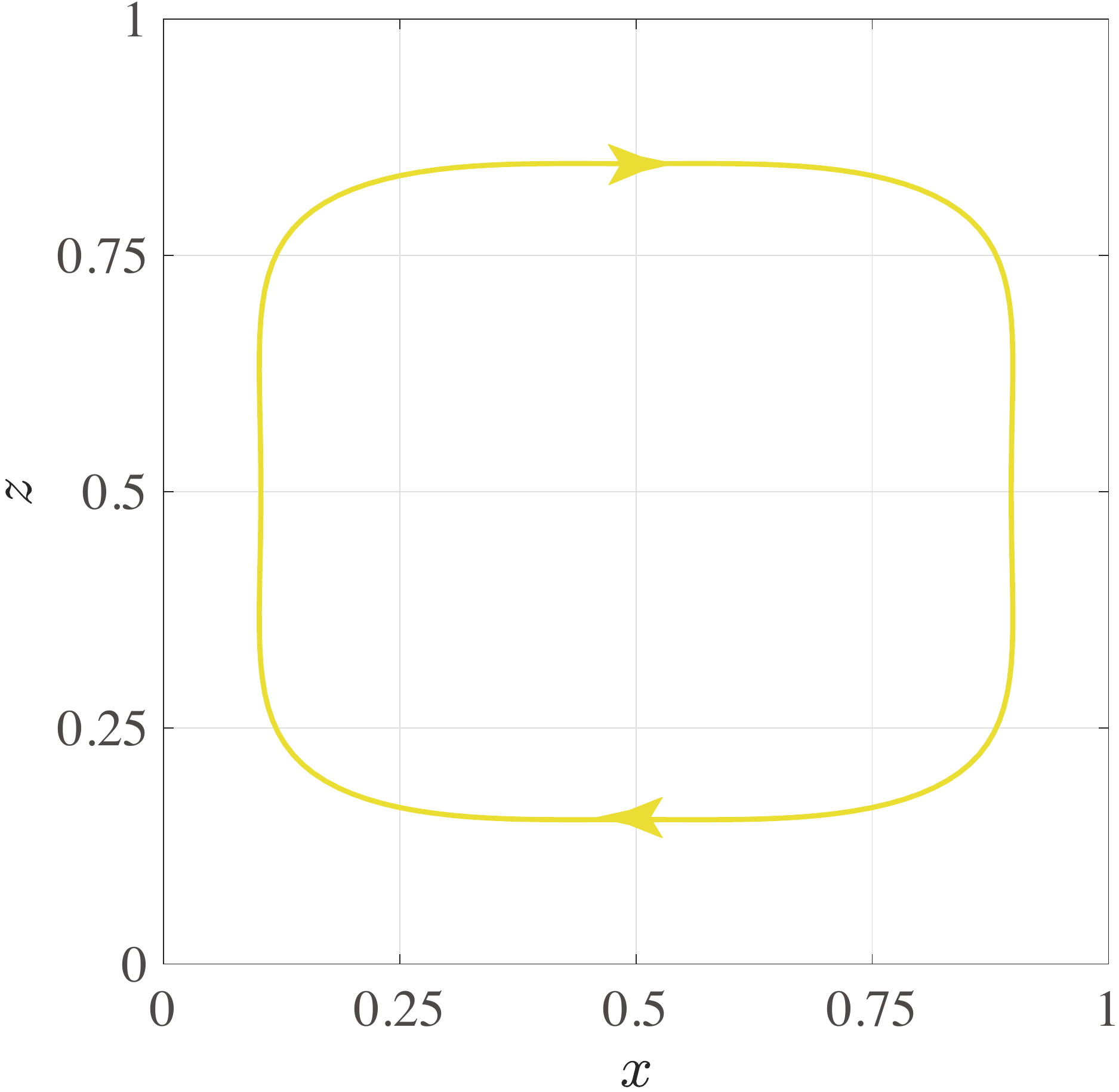}
\label{fig:I234_MP3_6_after_orbitMP3}}
\subfigure[Stable 6-periodic orbit $c_a$]
{\includegraphics[scale=0.25]{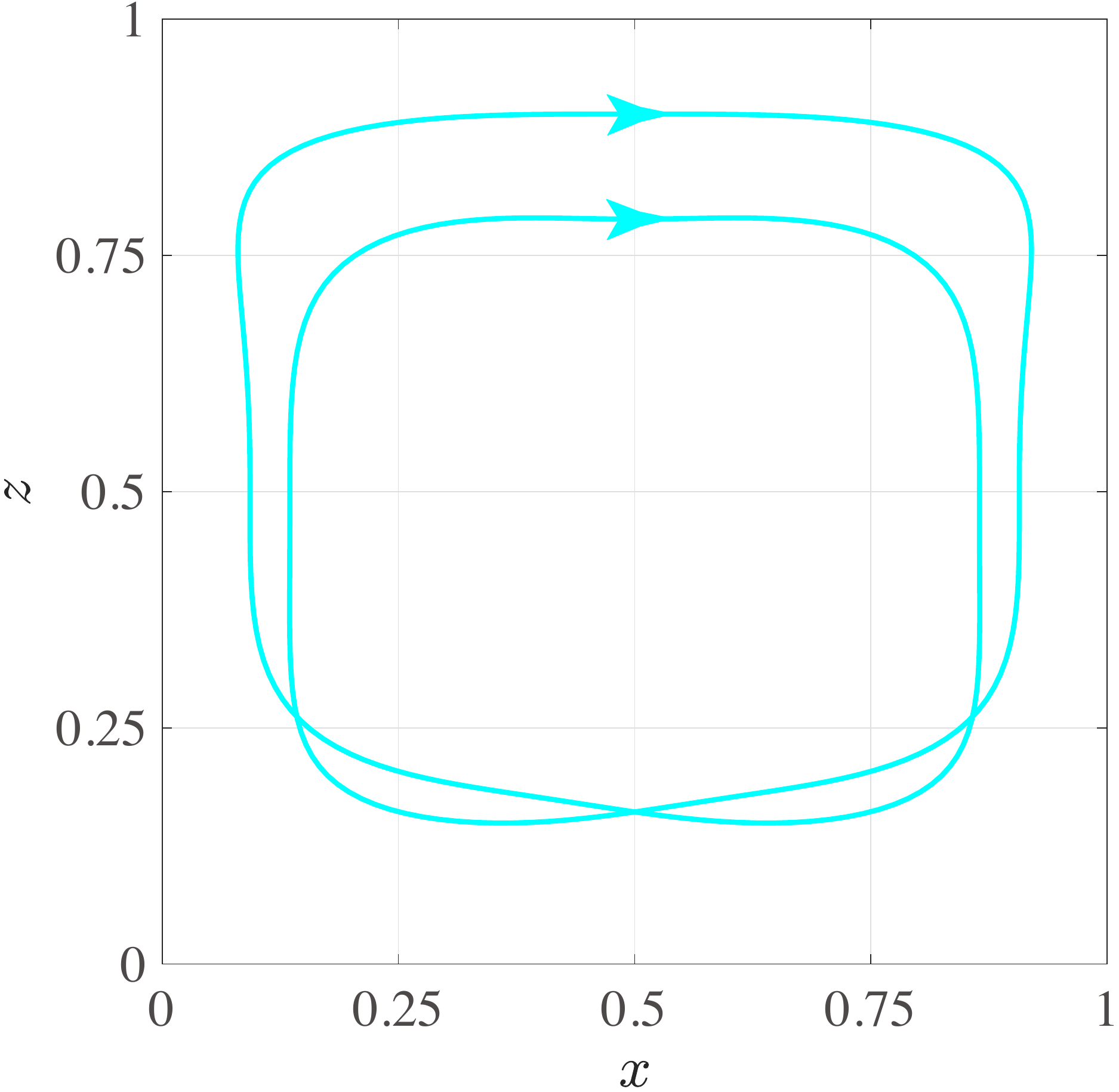}
\label{fig:I234_MP3_6_after_orbitMP6a}}
\hspace{5mm}
\subfigure[Unstable 6-periodic orbit $c_c$]
{\includegraphics[scale=0.25]{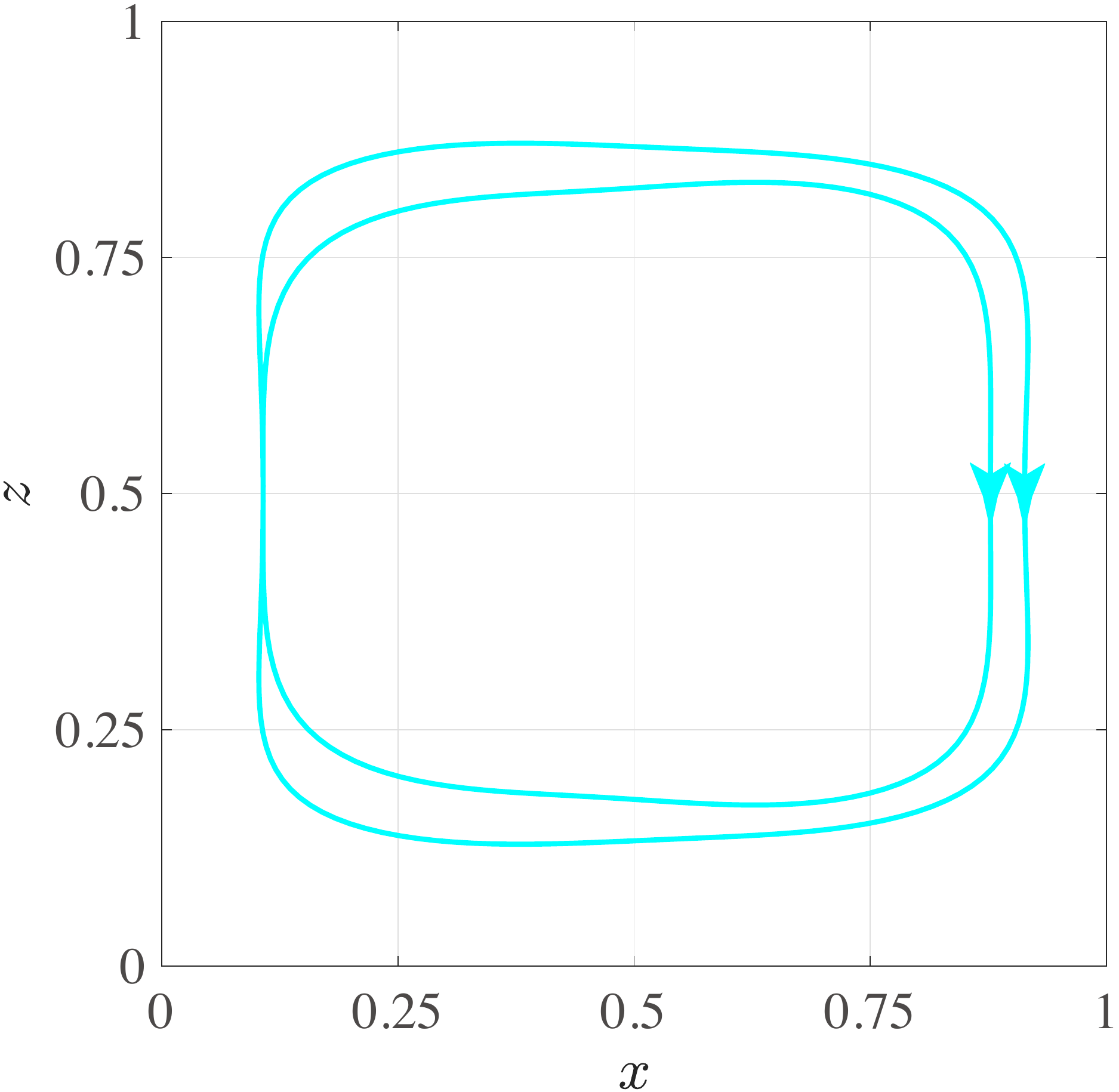}
\label{fig:I234_MP3_6_after_orbitMP6c}}
\vspace{-1mm}
\caption{3 and 6-periodic points and the projection of their orbits at $\varepsilon=0.15$}
\label{fig:I234_MP3_6_after}
\end{center}
\end{figure}

\vspace{-5mm}

\begin{figure}[H]
\begin{center}
\includegraphics[scale=0.25]{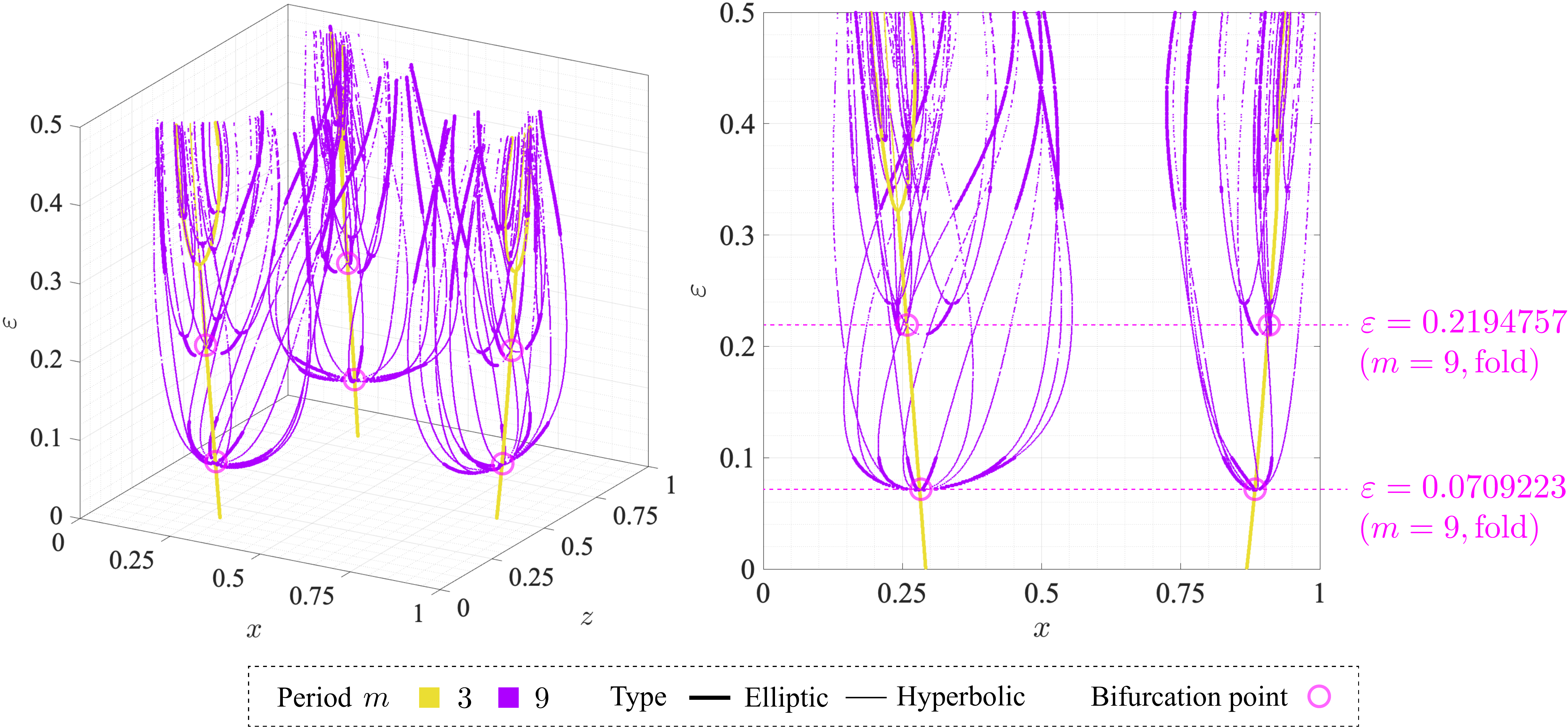}
\caption{$\varepsilon$-bifurcation diagram of 3 and 9-periodic points}
\label{fig:I234_MP3_9}
\end{center}
\end{figure}

\begin{figure}[H]
\begin{center}
\includegraphics[scale=0.25]{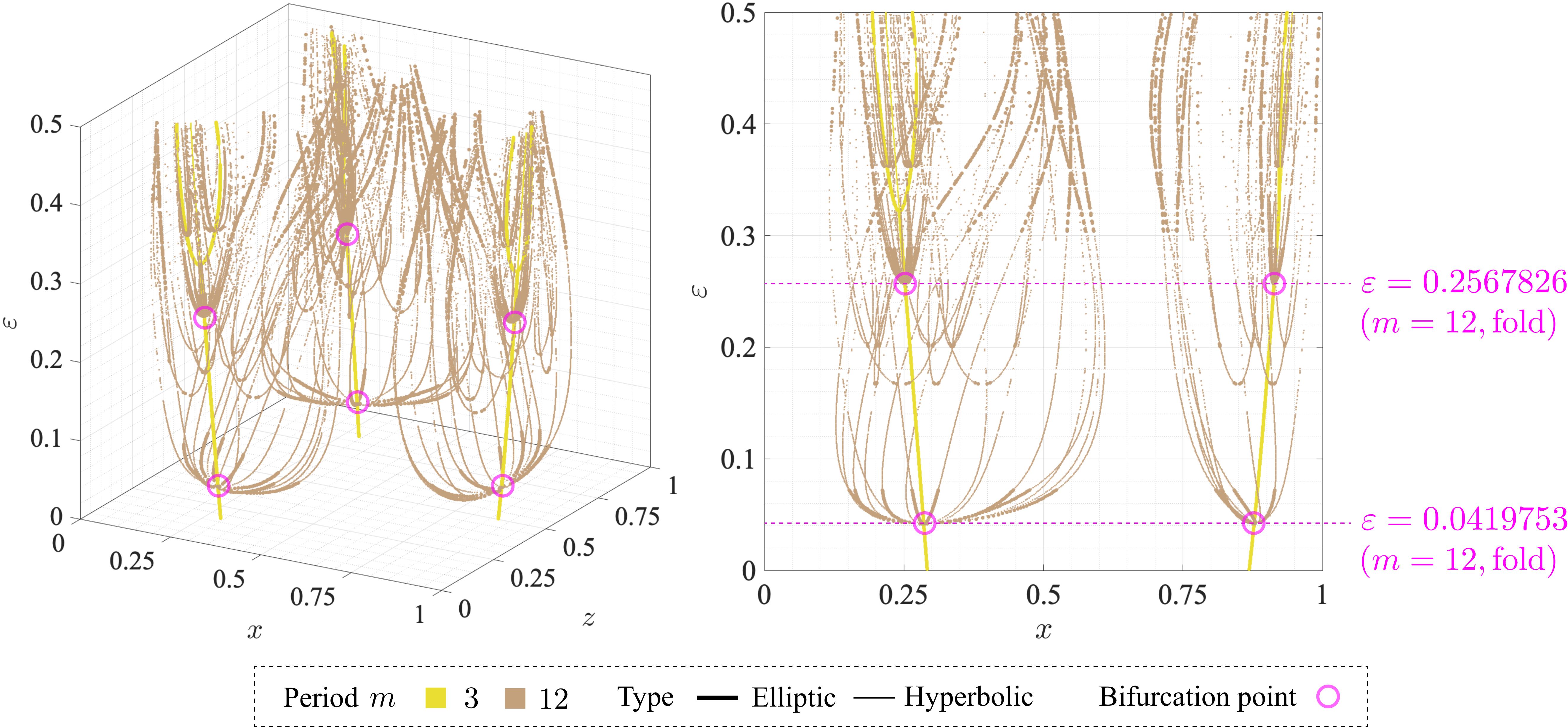}
\vspace{-2mm}
\caption{$\varepsilon$-bifurcation diagram of 3 and 12-periodic points}
\label{fig:I234_MP3_12}
\end{center}
\end{figure}

\vspace{-5mm}

\begin{figure}[H]
\begin{center}
\includegraphics[scale=0.25]{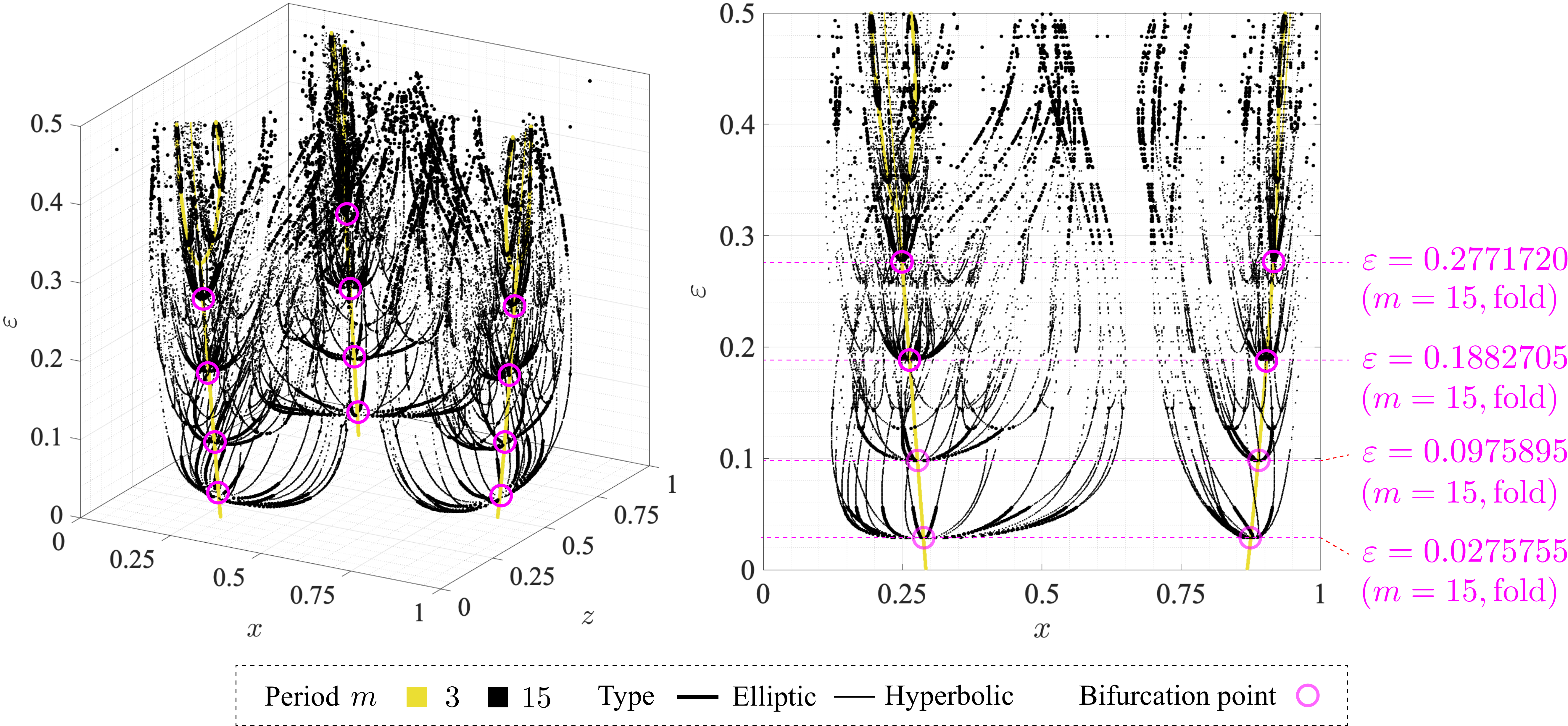}
\vspace{-2mm}
\caption{$\varepsilon$-bifurcation diagram of 3 and 15-periodic points}
\label{fig:I234_MP3_15}
\end{center}
\end{figure}

\vspace{-7mm}

\subsection{Bifurcations associated with other KAM islands}\label{Sec:others}
So far we have explored the bifurcations associated with KAM islands $I_1, I_2, I_3,$ and $I_4$. 
In this subsection, we investigate those which seem to be associated with other islands.
Here, we especially focus on the bifurcations of 5-periodic points with resonance condition $|n/m|=1/5$ 
and those of 4 and 8-periodic points with $|n/m|=1/4$. 

\paragraph{Fold bifurcations of 5-periodic points.}
Fig.\ref{fig:others_MP5} indicates the $\varepsilon$-bifurcation diagram of 5-periodic points of which 
resonance condition is $|n/m|=1/5$. It follows that the 5-periodic points bifurcate
at $\varepsilon=0.0469599$ and $\varepsilon=0.0832330$ in a fold bifurcation. 
Now let us take a look at how the symmetry of the stable 5-periodic orbits vary by the two bifurcations. 
Fig.\ref{fig:others_MP5_before_point} shows the 5-periodic points at $\varepsilon=0.04$, where it follows that 
the elliptic and hyperbolic 5-periodic points exist five each on the Poincar\'e section $\Sigma^{\theta_0}$. 
As is illustrated in Fig.\ref{fig:others_MP5_before_orbit}, the projection of the orbit of the elliptic 5-periodic points, namely, 
the stable 5-periodic orbit, is symmetric with respect to the horizontal and vertical center lines of the cell. 
However, when we increase $\varepsilon$, each elliptic 5-periodic point vary to a hyperbolic 

\begin{figure}[H]
\begin{center}
\includegraphics[scale=0.25]{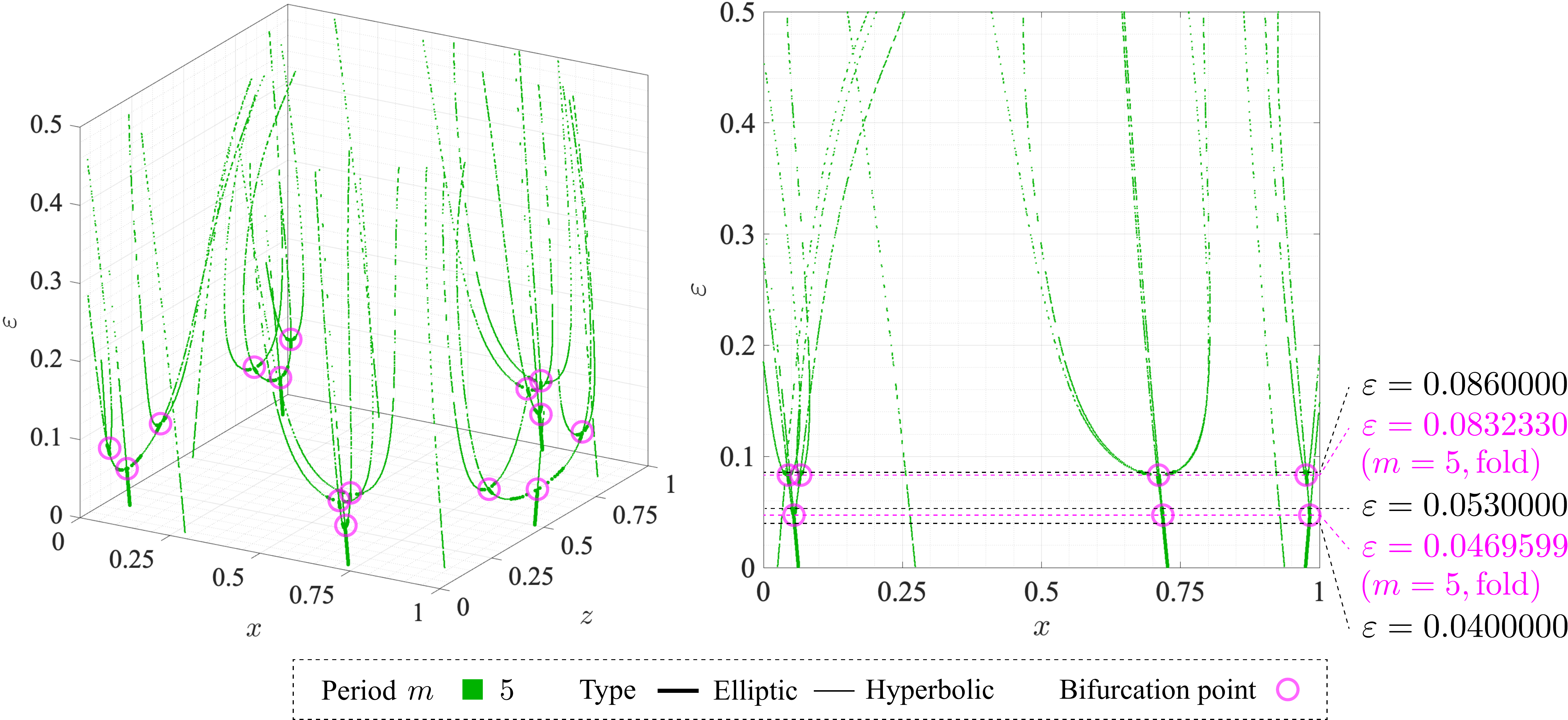}
\caption{$\varepsilon$-bifurcation diagram of 5-periodic points with $|n/m|=1/5$}
\label{fig:others_MP5}
\end{center}
\end{figure}

\vspace{-5mm}

\begin{figure}[H]
\begin{center}
\subfigure[5-periodic points]
{\includegraphics[scale=0.25]{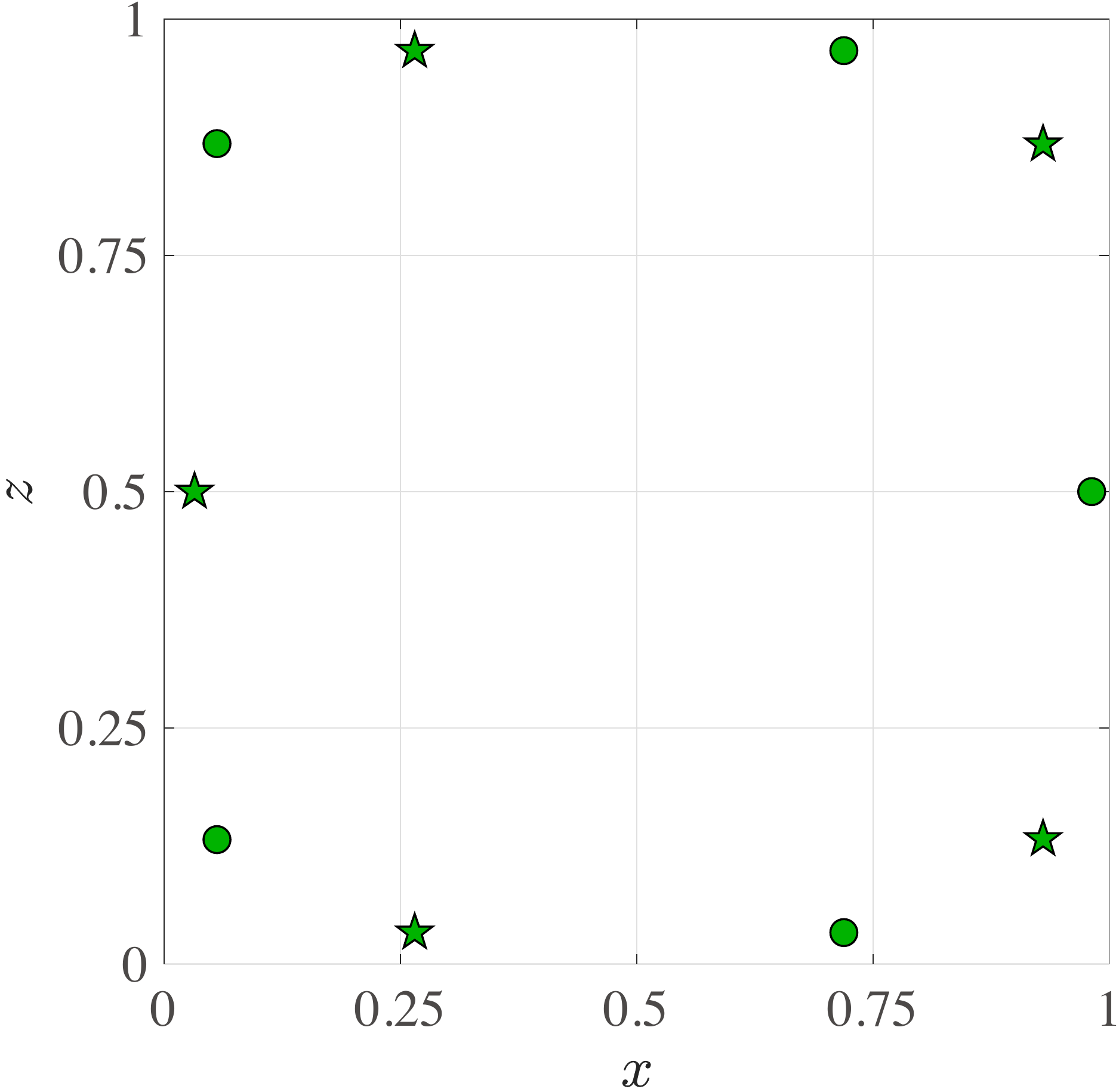}
\label{fig:others_MP5_before_point}}
\hspace{5mm}
\subfigure[Stable 5-periodic orbit]
{\includegraphics[scale=0.25]{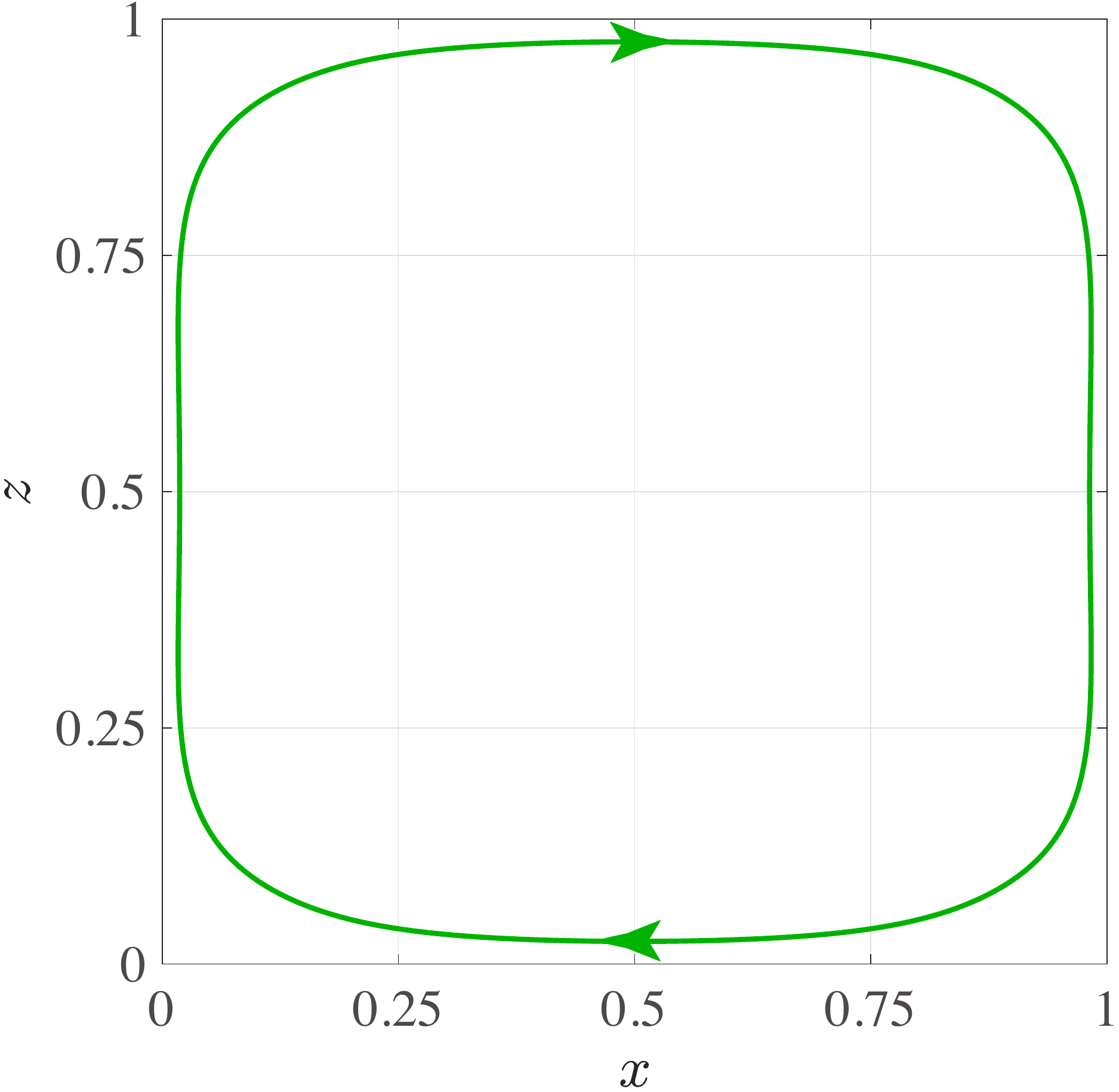}
\label{fig:others_MP5_before_orbit}}
\caption{5-periodic points and the projection of their orbits at $\varepsilon=0.04$}
\label{fig:others_MP5_before}
\end{center}
\end{figure}

\vspace{-5mm}

\begin{figure}[H]
\begin{center}
\subfigure[5-periodic points]
{\includegraphics[scale=0.25]{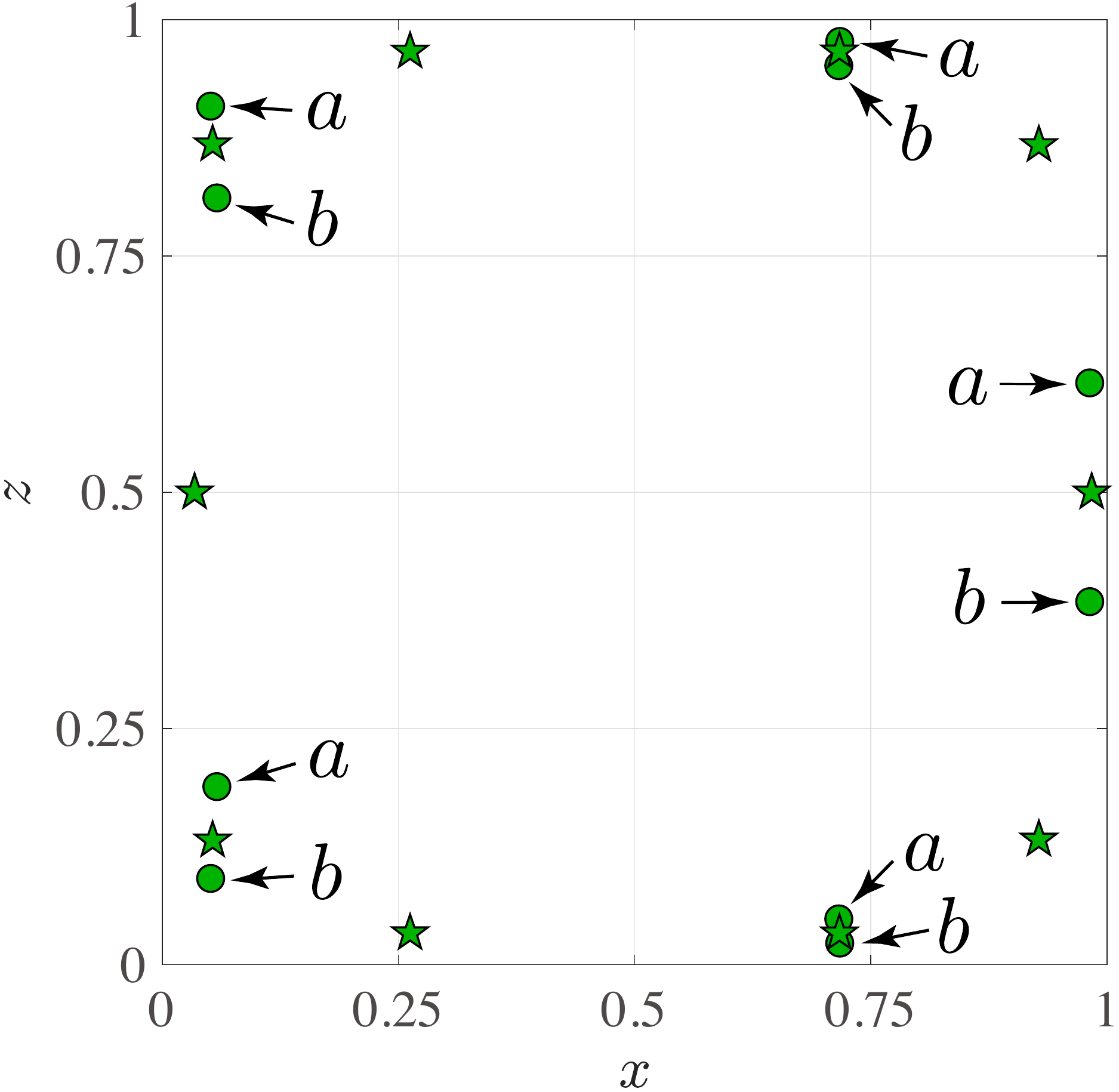}
\label{fig:others_MP5_after1_point}}
\hspace{5mm}
\subfigure[Stable 5-periodic orbit $c_a$]
{\includegraphics[scale=0.25]{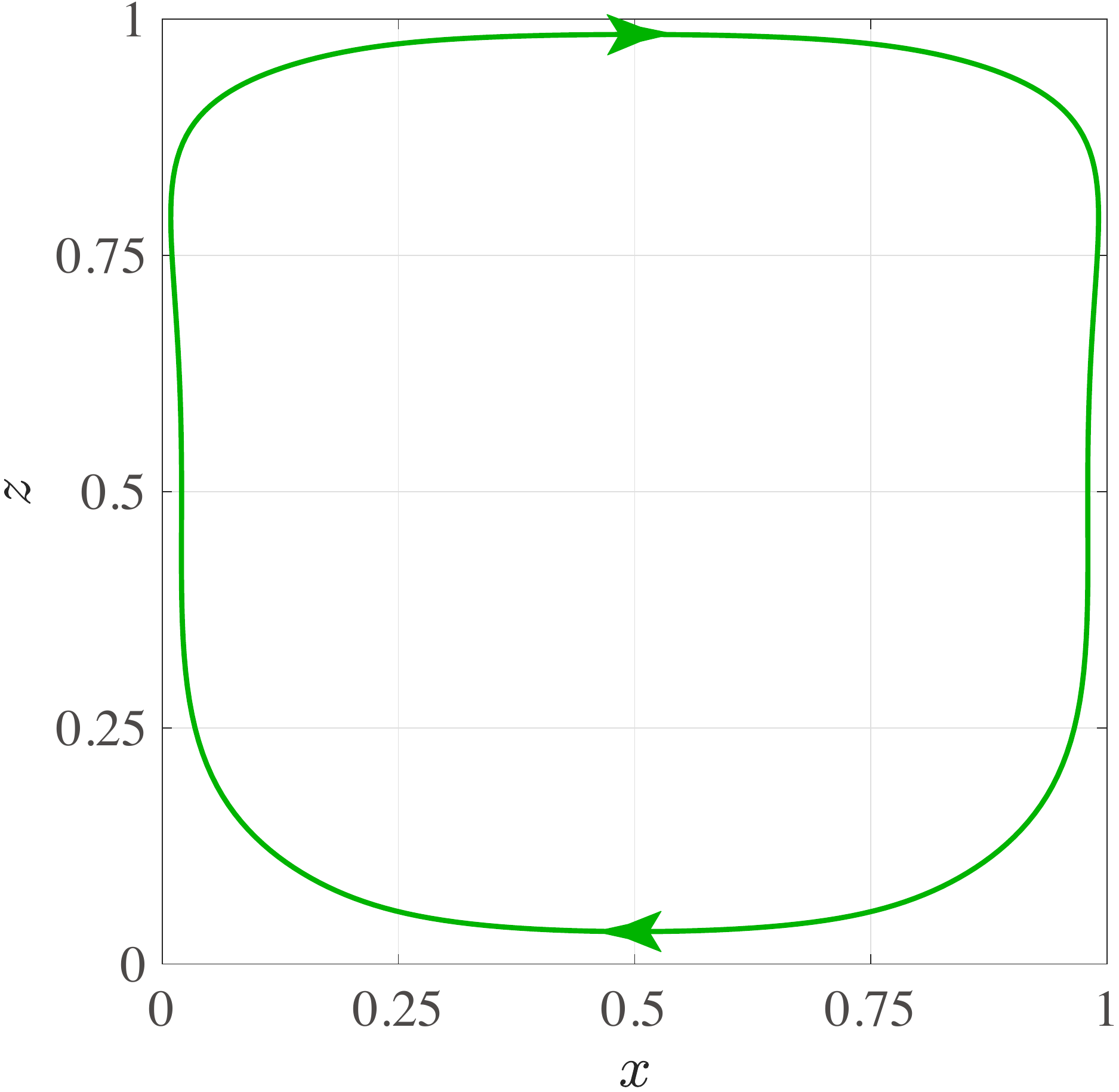}
\label{fig:others_MP5_after1_orbit}}
\caption{5-periodic points and the projection of their orbits at $\varepsilon=0.053$}
\label{fig:others_MP5_after1}
\end{center}
\end{figure}

\begin{figure}[H]
\begin{center}
\subfigure[5-periodic points]
{\includegraphics[scale=0.25]{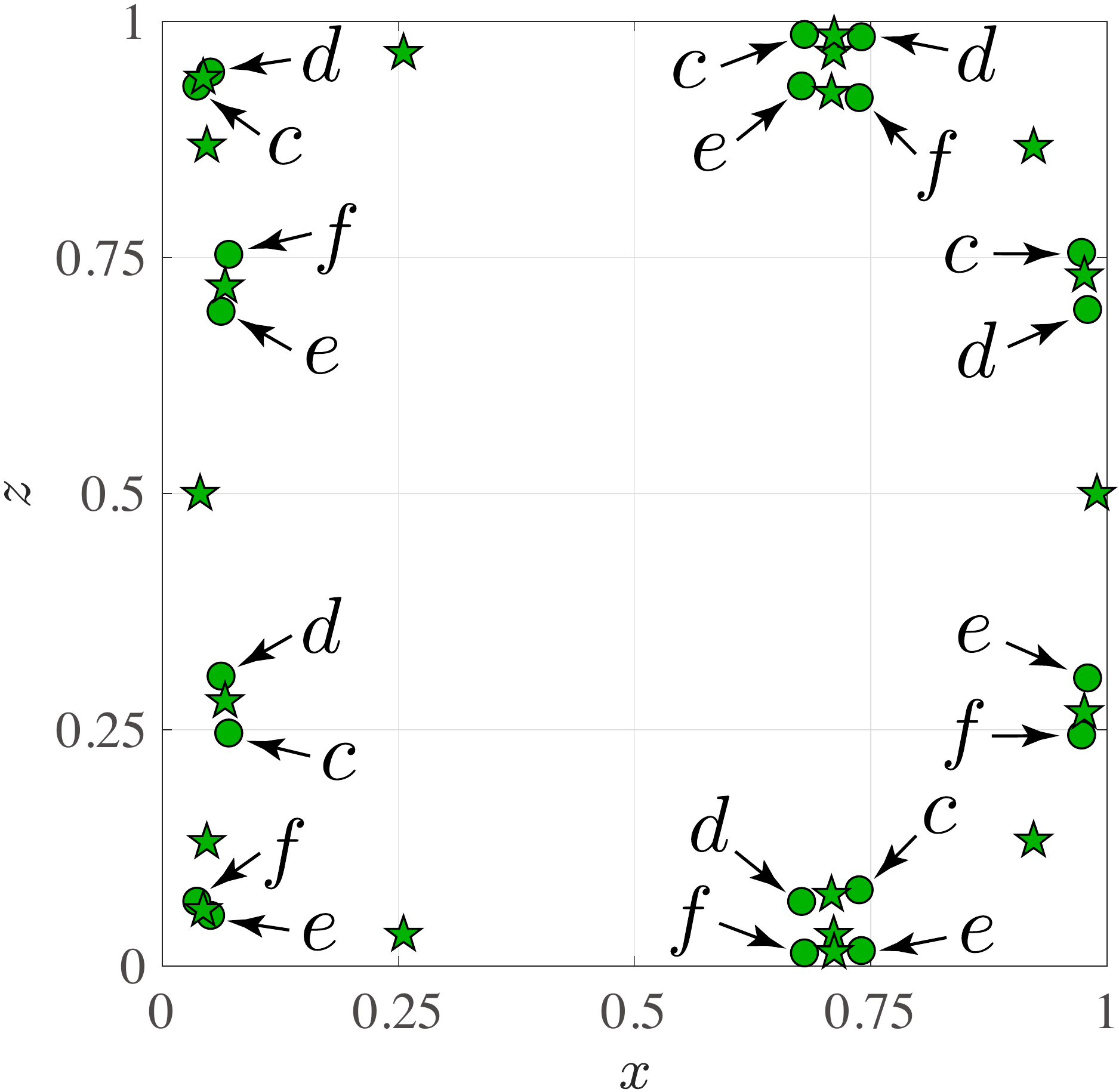}
\label{fig:others_MP5_after2_point}}
\hspace{5mm}
\subfigure[Stable 5-periodic orbit $c_c$]
{\includegraphics[scale=0.25]{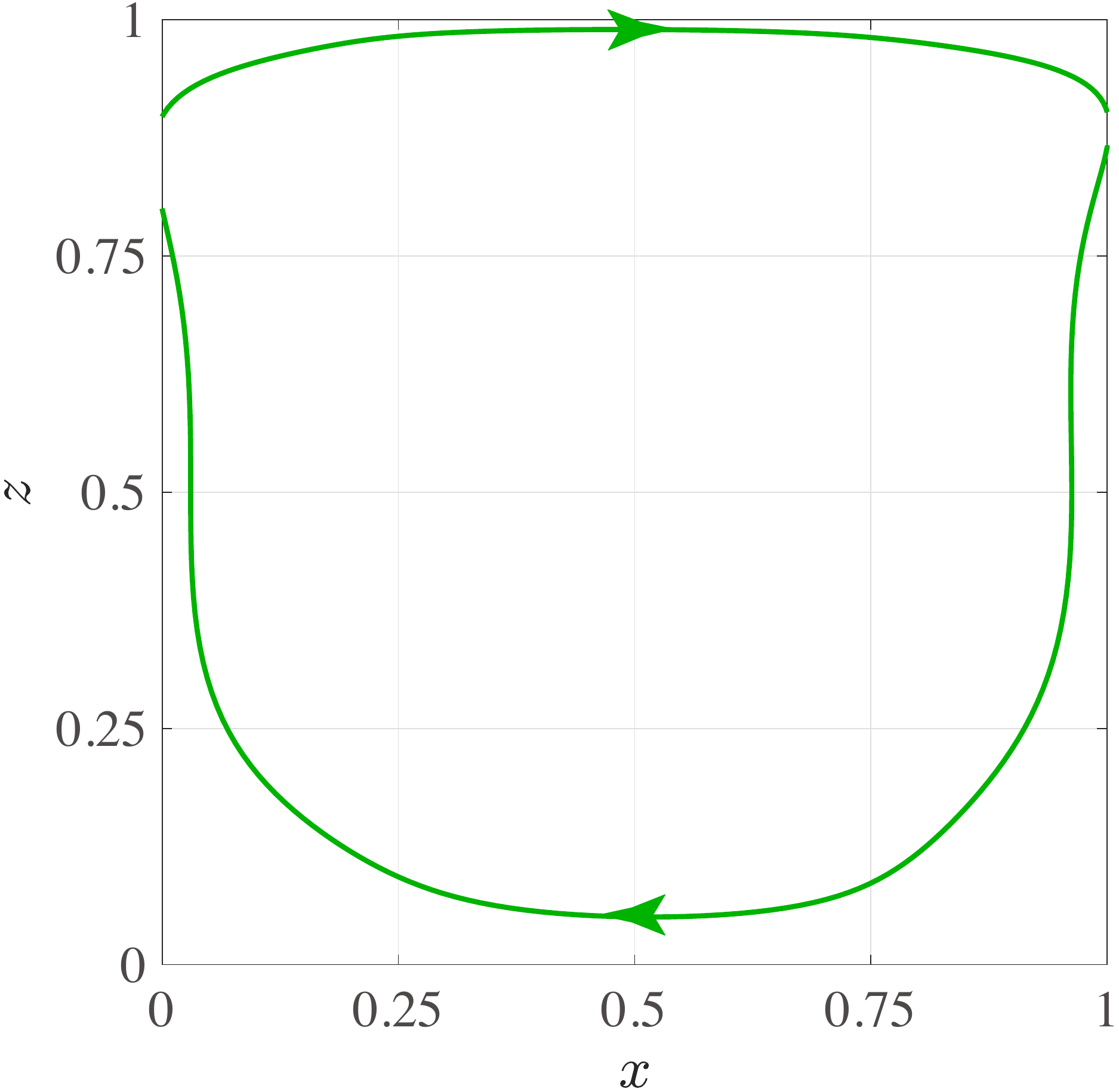}
\label{fig:others_MP5_after2_orbit}}
\caption{5-periodic points and the projection of their orbits at $\varepsilon=0.086$}
\label{fig:others_MP5_after2}
\end{center}
\end{figure}

\noindent
5-periodic point and two new elliptic 5-periodic points appear in the neighborhood 
by the fold bifurcation at $\varepsilon=0.0469599$; the case $\varepsilon=0.053$ is depicted in Fig.\ref{fig:others_MP5_after1_point}. 
We name the projection of the two new stable 5-periodic orbits as $c_a$ and $c_b$, 
and label the associated elliptic 5-periodic points as $a$ and $b$ respectively in order to show the correspondence, 
where $c_a$ and $c_b$ is symmetric with each other with respect to the horizontal center line.
From $c_a$ in Fig.\ref{fig:others_MP5_after1_orbit}, it follows that the orbit looses one of its symmetric property  
and become only symmetric with the vertical center line. 

When the amplitude $\varepsilon$ is increased further, the ten elliptic 5-periodic points once vary to hyperbolic ones, 
but return to elliptic ones. After that, they bifurcate again in a fold bifurcation at $\varepsilon=0.0832330$, 
which denotes that twenty elliptic 5-periodic points appear by the bifurcation. 
Fig.\ref{fig:others_MP5_after2_point} illustrates the 5-periodic points at $\varepsilon=0.086$. 
We denote the projection of the four new stable 5-periodic orbits by $c_c, c_d, c_e$, and $c_f$,  
and label the associated elliptic 5-periodic points as $c, d, e,$ and $f$ respectively in order to show the correspondence, 
where $c_c, c_d, c_e$, and $c_f$ are symmetric with each other with respect to the horizontal or vertical center line.
From $c_c$ in Fig.\ref{fig:others_MP5_after2_orbit}, it follows that the orbit looses its symmetry 
and become asymmetric with respect to the horizontal and vertical center lines of the cell. 
Hence, the stable 5-periodic orbit of which projection is originally symmetric with respect to the horizontal and vertical 
center lines of the cell become asymmetric by the two fold bifurcations. 
Furthermore, we can see that the number of 5-periodic orbits increases by the bifurcations 
and also that they become unstable when $\varepsilon$ is large enough, 
which denotes that the fluid transport become more complex. 




\paragraph{Flip bifurcations of 4-periodic points.}
Next, let us take a look at the bifurcations of 4-periodic points. 
Fig.\ref{fig:others_MP4_8}  illustrates the $\varepsilon$-bifurcation diagram of 4 and 8-periodic points 
of which resonance condition is $|n/m|=1/4$. When the amplitude of the perturbation is $\varepsilon=0.05$, 
elliptic and hyperbolic 4-periodic points appear eight each, as is shown in Fig.\ref{fig:others_MP4_8_before_point}. 
It follows that stable and unstable 4-periodic orbits appear two each. 
We name the projection of the two stable 4-periodic orbits as $c_a$ and $c_b$, and label the associated elliptic 4-periodic 
points as $a$ and $b$ respectively in order to show the correspondence. 
As is shown in Fig.\ref{fig:others_MP4_8_before_orbit}, $c_a$ is only symmetric with respect to the horizontal 
center line of the cell, which denotes that $c_b$ is symmetric with $c_a$ with respect to the vertical one. 

Now, we consider varying the amplitude $\varepsilon$. When the amplitude is increased from 
$\varepsilon=0$ to $\varepsilon=0.5$, the hyperbolic 4-periodic points do not seem to bifurcate. 
In contrast, it is clarified in our computation that the elliptic 4-periodic points bifurcate in a flip bifurcation 
at $\varepsilon=0.0545763$. At the bifurcation point, each elliptic 4-periodic point varies to a hyperbolic one 
and two new 8-periodic points appear in the neighborhood of each 4-periodic point, 
where Fig.\ref{fig:others_MP4_8_after_point} depicts them at $\varepsilon=0.059$. 
It follows that each stable 4-periodic orbit varies to one unstable 4-periodic orbit and one stable 8-periodic orbit 
by the flip bifurcation. Since there are two stable 4-periodic orbits before the bifurcation, 
two new unstable 4-periodic orbits and two new stable 8-periodic orbits are generated by the bifurcation. 
We denote the projection of the former two orbits by $c_c$ and $c_d$ and that of the latter two by $c_e$ and $c_f$.
Then, we label the associated periodic points as $c, d, e,$ and $f$ respectively in order to show the correspondence. 
Note that $c_c$ and $c_d$ as well as $c_e$ and $c_f$ are symmetric with each other 
with respect to the vertical center line. From $c_c$ and $c_e$ in Fig.\ref{fig:others_MP4_8_after_orbit1} and  
Fig.\ref{fig:others_MP4_8_after_orbit2}, it follows that the symmetric axes of all the orbits from $c_a$ to $c_f$ are the same, 
which is the horizontal center line. 
In addition, the resonance conditions of the 4 and 8-periodic orbits are both $|n/m|=1/4$. 
It follows that the symmetric axis of the projection and the resonance condition of the periodic orbits do not vary by the bifurcation. 
Furthermore, it is observed that most of the 4 and 8-periodic orbits become unstable when $\varepsilon$ is large enough, 
which denotes that the fluid transport become more complex.

\begin{figure}[H]
\begin{center}
\includegraphics[scale=0.25]{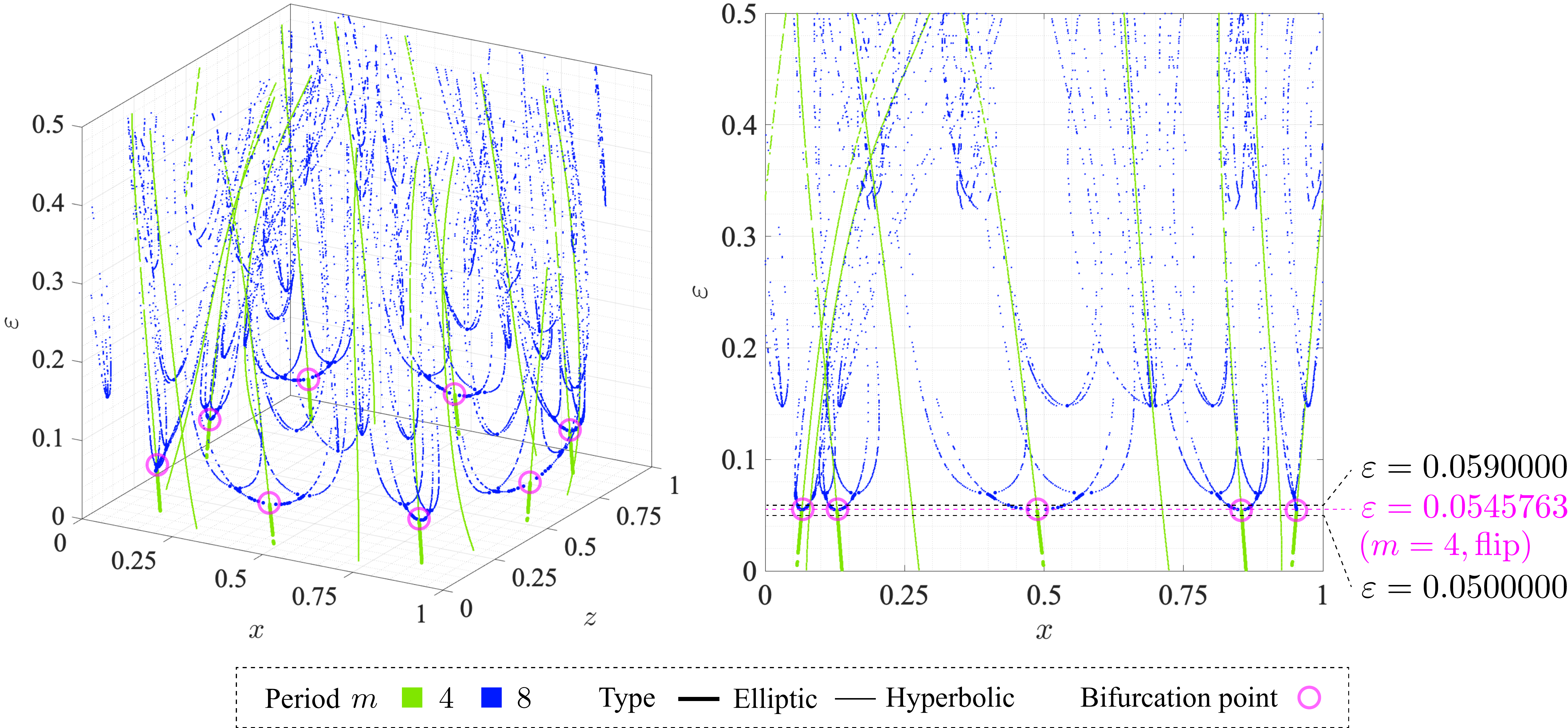}
\caption{$\varepsilon$-bifurcation diagram of 4 and 8-periodic points with $|n/m|=1/4$}
\label{fig:others_MP4_8}
\end{center}
\end{figure}

\vspace{-5mm}

\begin{figure}[H]
\begin{center}
\subfigure[4-periodic points]
{\includegraphics[scale=0.25]{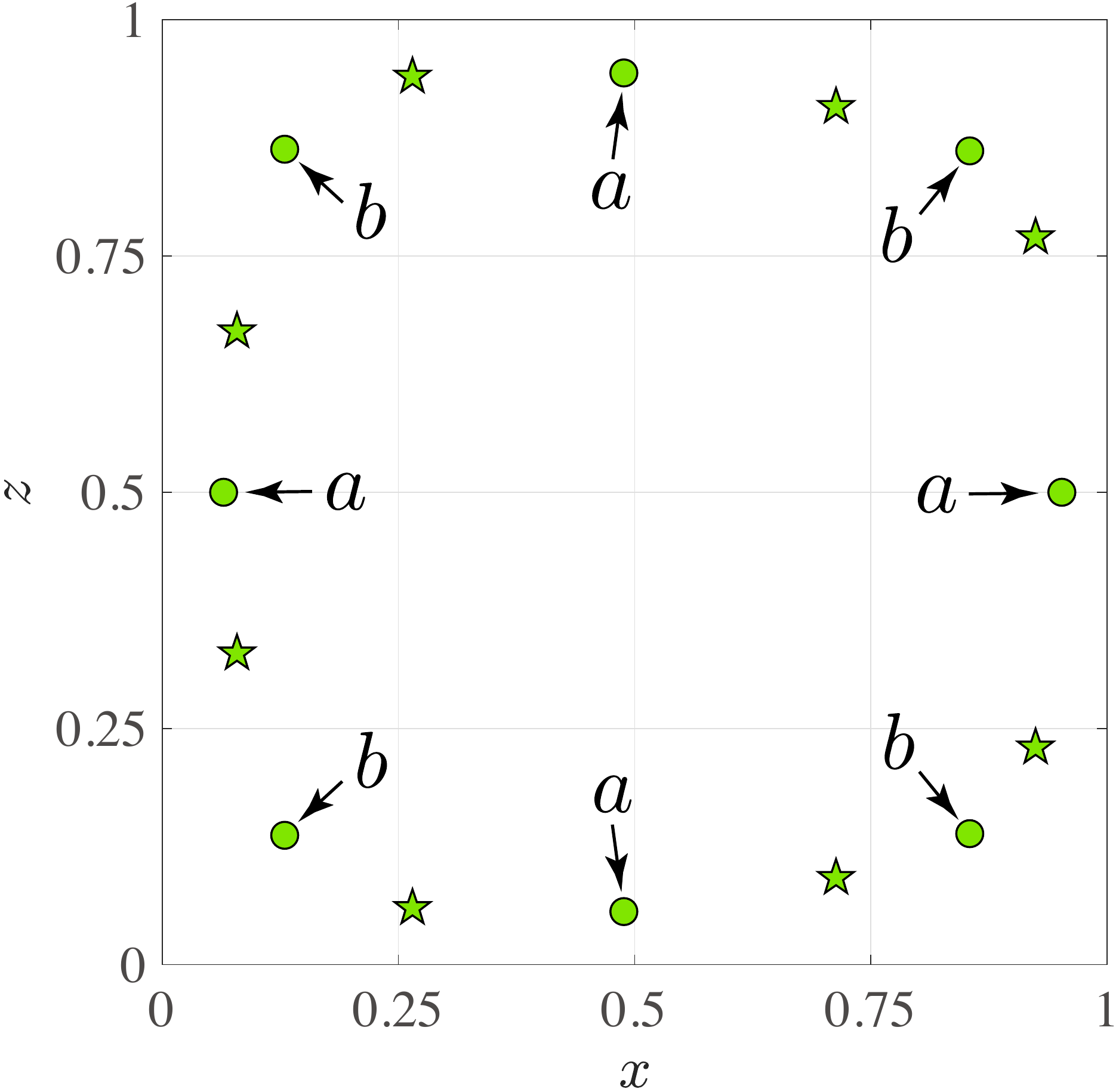}
\label{fig:others_MP4_8_before_point}}
\hspace{5mm}
\subfigure[Stable 4-periodic orbit $c_a$]
{\includegraphics[scale=0.25]{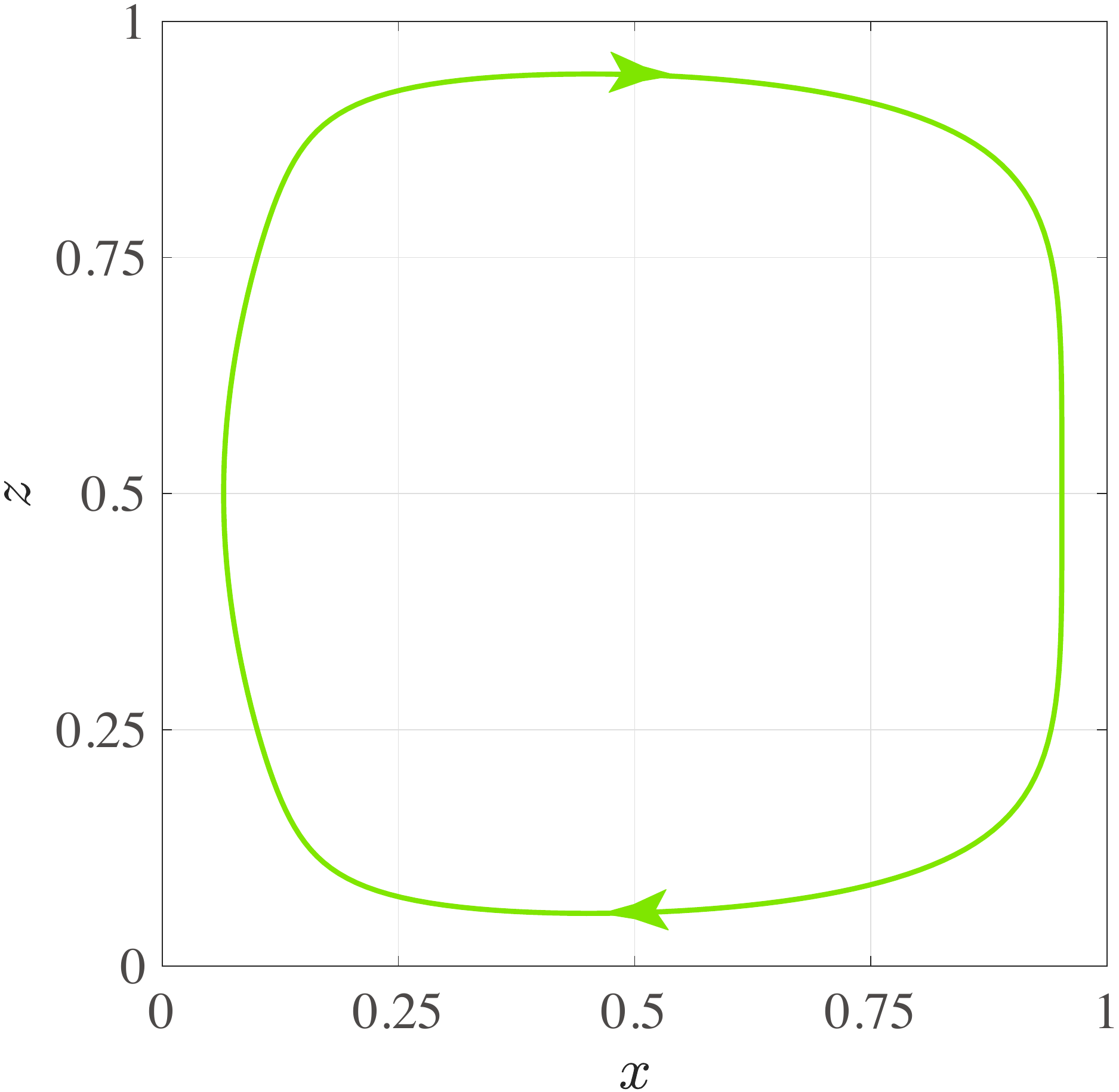}
\label{fig:others_MP4_8_before_orbit}}
\caption{4-periodic points and the projection of their orbits at $\varepsilon=0.05$}
\label{fig:others_MP4_8_before}
\end{center}
\end{figure}

\begin{figure}[H]
\begin{center}
\subfigure[4 and 8-periodic points]
{\includegraphics[scale=0.25]{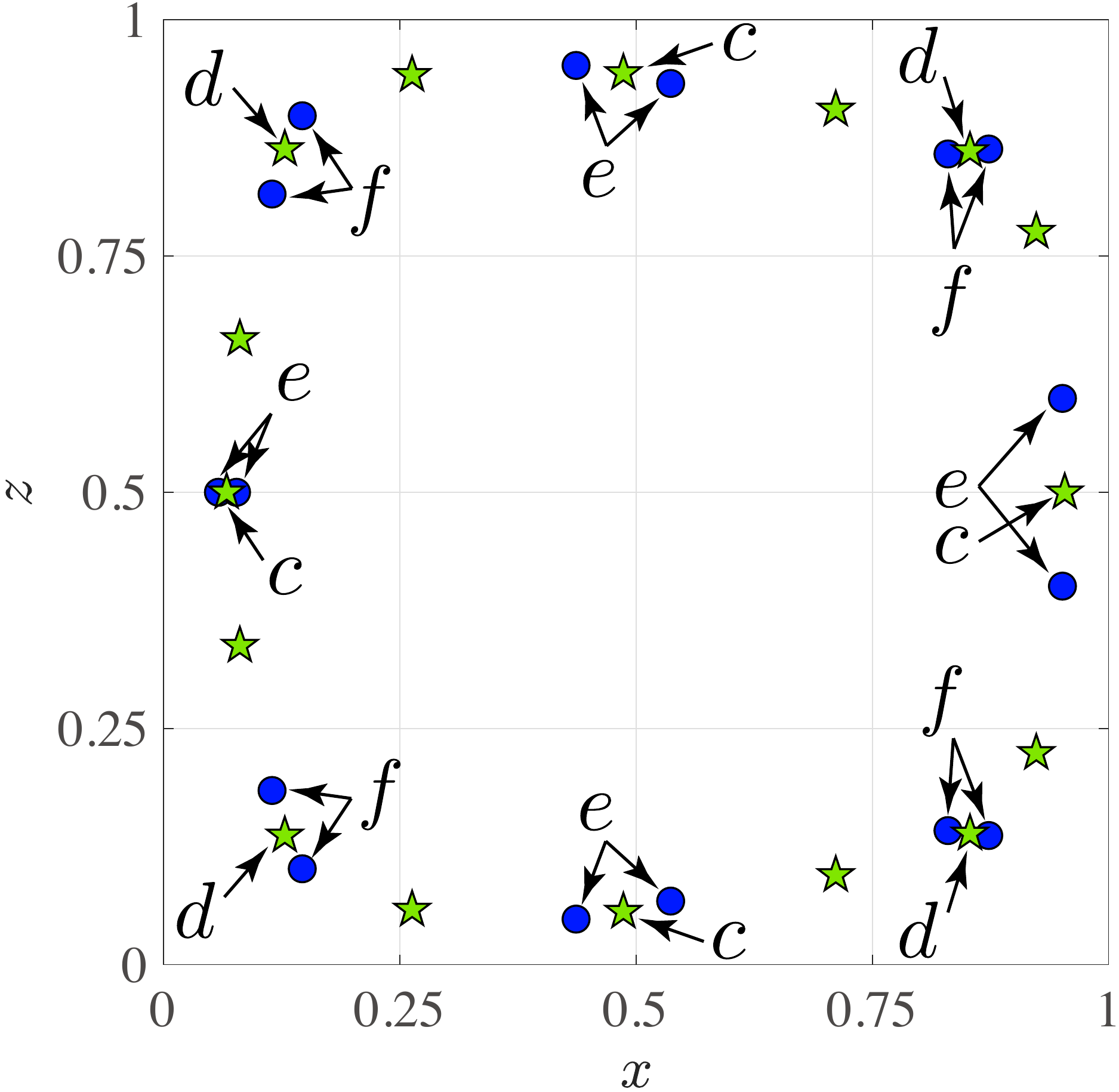}
\label{fig:others_MP4_8_after_point}}
\hspace{5mm}
\subfigure[Unstable 4-periodic orbit $c_c$]
{\includegraphics[scale=0.25]{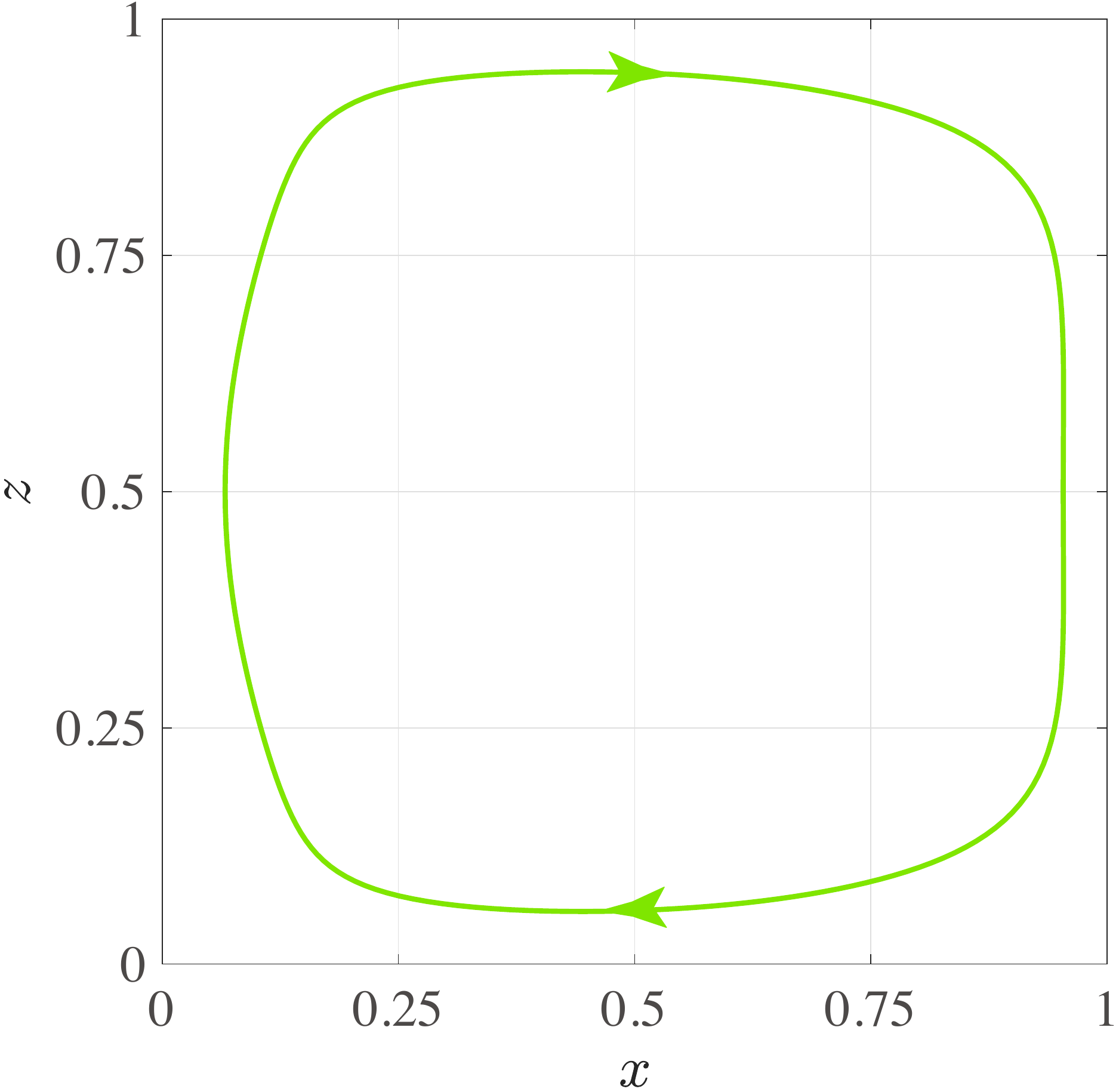}
\label{fig:others_MP4_8_after_orbit1}}
\subfigure[Stable 8-periodic orbit $c_e$]
{\includegraphics[scale=0.25]{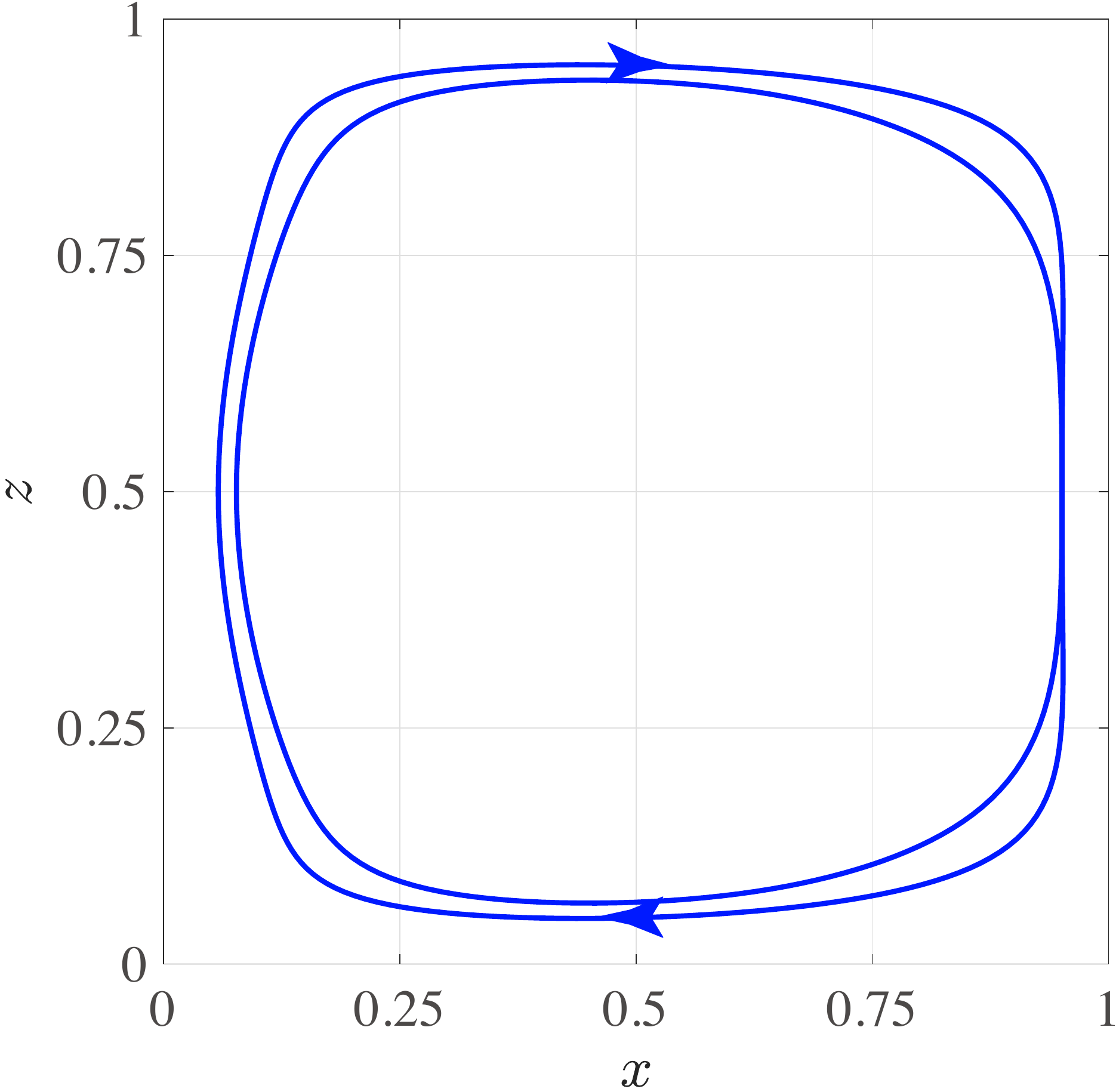}
\label{fig:others_MP4_8_after_orbit2}}
\caption{4 and 8-periodic points and the projection of their orbits at $\varepsilon=0.059$}
\label{fig:others_MP4_8_after}
\end{center}
\end{figure}

\vspace{-8mm}

\section{Conclusions}
\label{Sec:conclusion}

In this paper, we have numerically explored the global structures of periodic orbits appeared in a two-dimensional 
perturbed Hamiltonian model of Rayleigh-B\'enard convection. First we have detected the periodic points on the Poincar\'e section  and then analyzed the associated periodic orbits from the perspective of resonances and symmetries. Furthermore, we have clarified the global bifurcations regarding the periodic orbits 
associated with the parameter $\varepsilon$ which is the amplitude  of the perturbation. Thus, we have gained the following results: 

\begin{itemize}
\item KAM tori associated with elliptic $m$-periodic points have twisted structures in the extended phase space 
$\mathcal{M} = M \times S^1$, which denotes that each region of the KAM islands are mapped to the same region 
after $m$ times of Poincar\'e maps. 
From a physical point of view, they are transported periodically as a kind of vortex in the Lagrangian description. 
\item We propose a theorem regarding the symmetries of $n/m$-resonant orbits; namely, 
if the projection of an $m$-periodic orbit onto the phase space $M$ is symmetric with respect to the 
horizontal and vertical center lines of a cell, the period $m$ and the winding number $n$ of the orbit are both odd. 
It follows that $m$-periodic orbits appear in symmetric pairs when either $m$ or $n$ is even.
\item When the amplitude $\varepsilon$ of the perturbation is increased, the $m$-periodic points associated with the main KAM island $I_1$ 
disappear one after another by fold bifurcations and seem to vary to an elliptic 1-periodic point at the center of $I_1$.
\item When $\varepsilon$ is increased, $3l$-periodic points $(l=2,3,4,5)$ are generated one after another by fold bifurcations 
around the elliptic 3-periodic points at the center of KAM islands $I_2, I_3$, and $I_4$, 
where the 3-periodic points themselves also bifurcate in fold and flip bifurcations after that. 
\item Periodic points associated with other islands also bifurcate one after another 
and most of them vary to unstable ones, when $\varepsilon$ is increased. 
Some of them generate more orbits as in the fold bifurcations of 5-periodic points, 
while some others generate orbits with larger periods as in the flip bifurcations of 4-periodic points. 
Hence, the bifurcations of periodic points that may not be associated with $I_1$ 
may be the main factor that makes the fluid transport complex when $\varepsilon$ is increased.
\end{itemize}

\noindent \paragraph{Acknowledgements.} 
M.W. is partially supported by Waseda University (SR 2021C-137), 
Waseda Research Institute for Science and Engineering ‘Early Bird - Young Scientists’ community (BD070Z004400) and the MEXT "Top Global University Project". H.Y. is partially supported by JSPS Grant-in-Aid for Scientific Research (17H01097), JST CREST (JPMJCR1914), Waseda University (SR 2021C-134, SR 2021R-014), the MEXT "Top Global University Project", and the Organization for University Research Initiatives 
(Evolution and application of energy conversion theory in collaboration with modern mathematics).

%


\end{document}